\documentclass[dvips,10pt]{article}

\topmargin=-.5cm \textheight=22cm \oddsidemargin=.0cm
\evensidemargin=.0cm
\textwidth=17.5cm

\newcommand{\bs}{\boldsymbol}
\newcommand{\bsu}{ {\bs{u}}}
\newcommand{\bsf}{ {\bs{f}}}
\newcommand{\bsx}{ {\bs{x}}}
\newcommand{\bsz}{ {\bs{z}}}
\newcommand{\bsa}{ {\bs{a}}}
\newcommand{\bsy}{ {\bs{y}}}
\newcommand{\bsX}{ {\bs{X}}}
\newcommand{\bsY}{ {\bs{Y}}}
\newcommand{\bsZ}{ {\bs{Z}}}
\newcommand{\bsv}{ {\bs{v}}}

\newcommand{\bsF}{ {\bs{F}}}
\newcommand{\bsG}{ {\bs{G}}}
\newcommand{\bsM}{ {\bs{M}}}

\usepackage{amssymb,amsbsy,amsmath,amsthm}
\usepackage{tcolorbox}
\usepackage{epsf,epsfig}
\usepackage{texdraw}
\usepackage{bbding,pifont}

\newtheorem{proposition}{Proposition}

\newtheorem*{criterion*}{Criterion}
\newtheorem*{definition*}{Definition}
\newtheorem{definition}{Definition}

\theoremstyle{remark} 
\newtheorem{remark}{Remark}

\title{Noncommutative discrete equations, symmetries and reductions}
\author{Pavlos Xenitidis\footnote{Email: xenitip@hope.ac.uk}\\ School of Computer Science and the Environment\\ Liverpool Hope University, L16 9JD  Liverpool, UK}
\date{\today}

\begin{document}
	
	\maketitle
	\begin{abstract}
	Employing the Lax pairs of the noncommutative discrete potential Korteweg--de Vries (KdV) and Hirota's KdV equations, we derive differential--difference equations that are consistent with these systems and serve as their generalised symmetries. Miura transformations mapping these equations to a noncommutative modified Volterra equation and its master symmetry are constructed. We demonstrate the use of these symmetries to reduce the potential KdV equation, leading to a noncommutative discrete Painlev{\`{e}} equation and to a system of partial differential equations that generalises the Ernst equation and the Neugebauer--Kramer involution. Additionally, we present a Darboux transformation and an auto-B\"acklund transformation for the Hirota KdV equation, and establish their connection with the noncommutative Yang--Baxter map $F_{III}$.

		{\bf{Keywords:}} Noncommutative difference equations; noncommutative integrable systems; symmetries; symmetry reductions; auto-B\"acklund transformation; Darboux transformation; noncommutative Yang-Baxter map.
		
		{\bf{Classification MSC:}} 39A36, 39A70, 39B22, 37C79, 37K60

	\end{abstract}

	\section{Introduction}
		
	The notion of integrability for (systems of) differential, differential--difference, and difference equations is well established and incorporates various aspects, such as Lax representations, Miura and B\"acklund transformations, soliton solutions, and infinite hierarchies of symmetries and conservation laws, to name a few. Connections among different types of integrable systems are also well known; for example, evolutionary differential--difference equations serve as symmetries of difference equations, while difference equations may express the superposition principle for B\"acklund transformations of differential equations. Interrelations among different aspects of integrability are also well established. For instance, a Lax pair can be used to construct symmetries and auto-B\"acklund transformations for the corresponding integrable system, whereas symmetries may provide a link between continuous and discrete integrable systems. 
	
	Various methods and approaches exist for establishing the integrability of systems with commutative variables. In the case of the discrete systems, it is well known that multidimensional consistency provides a powerful tool for constructing both a Lax pair and an auto-B\"acklund transformation; see, for example, \cite{HJN} and references therein. Symmetries can be derived systematically either from first principles (see \cite{FX,TTX2}) or through the theory of integrability conditions (see \cite{BX,LY,MWX,MX,X}). Many of these structures and methods naturally extend to the noncommutative setting. In some cases, this extension is straightforward; in others, it requires more intricate constructions and computations. 
		
	In this paper, we derive generalised symmetries of noncommutative difference equations defined on an elementary quadrilateral of the ${\mathbb{Z}}^2$ lattice. Our approach follows that of \cite{FX}, which systematically derives symmetries using the Lax pair of the underlying discrete system. This method extends naturally to noncommutative settings and offers a broadly applicable framework. To demonstrate its effectiveness, we consider two illustrative examples. We begin with the discrete potential KdV equation \cite{FNC,Betal}
	\begin{equation*}
		\left({\bs{u}}_{0,0} - {\bs{u}}_{1,1}\right) \left( {\bs{u}}_{1,0}-{\bs{u}}_{0,1}\right) = \alpha - \beta,
	\end{equation*}
	and its corresponding Lax pair, from which we derive the lowest order generalised symmetries. These symmetries are non-polynomial differential–difference equations,
	$$	\frac{{\rm{d}} {\bs{u}}_{0,0}}{{\rm{d}} t_1} =  \left({\bs{u}}_{1,0}- {\bs{u}}_{-1,0}\right)^{-1}, ~~~	\frac{{\rm{d}} {\bs{u}}_{0,0}}{{\rm{d}} x} =  n \left({\bs{u}}_{1,0}- {\bs{u}}_{-1,0}\right)^{-1}, ~~~ 	\frac{{\rm{d}} \alpha}{{\rm{d}} x} = -1, $$
	that generalise those of the commutative case \cite{TTX2}.
	
	We employ these symmetries in two distinct settings to reduce the discrete potential KdV equation. The first reduction yields a three-dimensional map related to the difference equation
	\begin{equation*} 
		r f_{n+1} \left( \bsY_{n+1} \bsY_n - r \right)^{-1} + r f_{n} \left( \bsY_{n-1} \bsY_n - r \right)^{-1}  + r \lambda_{n+m} \bsY_n^{-1} + \lambda_{n+m+1} \bsY_n + g_m + f_{n+1} = 0,
	\end{equation*}
	which can be viewed as a noncommutative discrete Painlev\`e II equation. 
	
	In the second context, we establish a correspondence between a class of solutions of the discrete equation and solutions of the integrable system of partial differential equations
	\begin{align*}
	\frac{\partial {\bs{u}}_{1}}{\partial \beta} &= \frac{1}{\alpha-\beta} \left({\bs{u}}_{1} -{\bs{u}}_{2}\right) \left(m - \frac{\partial {\bs{u}}}{\partial \beta}  \left( {\bs{u}}_{1} -{\bs{u}}_{2} \right)\right), \nonumber \\
	\frac{\partial {\bs{u}}_{2}}{\partial \alpha} &=\frac{1}{\alpha-\beta} \left({\bs{u}}_{1} -{\bs{u}}_{2}\right) \left(n+ \frac{\partial {\bs{u}}}{\partial \alpha}  \left( {\bs{u}}_{1} -{\bs{u}}_{2} \right)\right),  \\
	\frac{\partial^2 {\bs{u}}}{\partial \alpha \partial \beta} &= \frac{1}{\alpha-\beta} \left(  \frac{\partial {\bs{u}}}{\partial \alpha} \left({\bs{u}}_{1} -{\bs{u}}_{2}\right)  \frac{\partial {\bs{u}}}{\partial \beta} + \frac{\partial {\bs{u}}}{\partial \beta} \left({\bs{u}}_{1} -{\bs{u}}_{2}\right)  \frac{\partial {\bs{u}}}{\partial \alpha} + n \frac{\partial {\bs{u}}}{\partial \beta} -  m \frac{\partial {\bs{u}}}{\partial \alpha} \right). \nonumber
	\end{align*}
	The integrability of this noncommutative system is established by the existence of a Lax pair and an auto-B\"acklund transformation. Moreover, it incorporates and generalises well-known systems from Mathematical Physics, including the Ernst equation and the Neugebauer–Kramer involution.
	
	Our second illustrative example is Hirota's discrete KdV equation \cite{Betal},
	\begin{equation*} 
		{\bs{u}}_{0,0} + \alpha {\bs{u}}_{1,0}^{-1} - \alpha {\bs{u}}_{0,1}^{-1}-{\bs{u}}_{1,1} = {\bs{0}}.
	\end{equation*}
	Using its associated Lax pair, we derive the corresponding generalised symmetries. 
	$$\frac{{\rm{d}} {\bs{u}}_{0,0}}{{\rm{d}} t_1} =  {\bs{u}}_{0,0} {\bs{f}}_{1,0} -{\bs{f}}_{0,0}  {\bs{u}}_{0,0},~~~ \frac{{\rm{d}} {\bs{u}}_{0,0}}{{\rm{d}} x} =  n {\bs{u}}_{0,0} {\bs{f}}_{1,0} -(n-1) {\bs{f}}_{0,0}  {\bs{u}}_{0,0}, ~ 	\frac{{\rm{d}} \alpha}{{\rm{d}} x} = 1,~~ {\mbox{with }} \bsf_{0,0} = (\bsu_{0,0} \bsu_{-1,0}+\alpha)^{-1}.$$
	We show how these symmetries can be used to reduce the equation to a six-dimensional map, a consequence of both the equation's scaling symmetry and the non-commutativity of the variables. Furthermore, we construct an auto-B\"acklund transformation via a Darboux transformation that leaves the Lax pair covariant. This transformation involves an auxiliary function, and its superposition principle yields the noncommutative Yang--Baxter map $F_{III}$ \cite{Dol}.
	
	We also show that the symmetries of both equations can be mapped, via Miura transformations (see \eqref{eq:Miura-pot} and \eqref{eq:Miura} below), to the modified Volterra equation \cite{Bog} and its master symmetry,
	\begin{equation*}
		\frac{{\rm{d}} {\bs{v}}_{0}}{{\rm{d}} t_1} =   {\bs{v}}_{0} \left( {\bs{v}}_{1} - {\bs{v}}_{-1}\right) {\bs{v}}_{0} ~~~ {\mbox{and}} ~~~ \frac{{\rm{d}} {\bs{v}}_{0}}{{\rm{d}} x} =   {\bs{v}}_{0} \left( (n+1) {\bs{v}}_{1} - (n-1) {\bs{v}}_{-1}\right) {\bs{v}}_{0},
	\end{equation*}
	respectively.
			
	The paper is organised as follows. The next section sets the framework, providing key definitions and establishing our notation. Section~\ref{sec:pKdV} deals with the symmetries and reductions of the discrete potential KdV equation, while Section~\ref{sec:KdV} discusses the symmetries and the auto-B\"acklund transformation of Hirota's KdV equation. The concluding section summarises our main results, presents additional examples (such as the noncommutative discrete nonlinear Schr\"odinger system \cite{KRX}), and outlines possible directions for future research. Finally, the appendix contains detailed derivations and proofs of various statements made in the main body of the paper.

	\section{Noncommutative setting and notation} \label{sec:notation}
	
	In this short section, we present all the necessary definitions to make our presentation self-contained, and introduce our notation. 
	
	In the following sections, we consider difference and differential--difference equations with noncommutative variables taking values in a division ring ${\mathfrak{U}}$ over a field of constants $\mathbb{F}$ of characteristic zero. The elements of ${\mathfrak{U}}$ do not necessarily commute, and the field $\mathbb{F}$ lies in the centre ${\cal{Z}}({\mathfrak{U}})$ of ${\mathfrak{U}}$. We denote elements of ${\cal{Z}}({\mathfrak{U}})$ by italic letters, and all other elements of ${\mathfrak{U}}$ by bold letters. In particular, we use the symbol $1$ to denote both the unit of the field and the unit of the ring, trusting that this will not lead to confusion.
	
	The equations we are interested in involve functions depending on the two discrete variables $n$, $m$, and their dependence on these variables is denoted with indices, i.e., $\bsu(n+i,m+j) = \bsu_{i,j}$, $i,j \in {\mathbb{Z}}$. There are two automorphisms (shift operators) $\cal{S}$ and $\cal{T}$ acting on $\bsu_{i,j}$ as ${\cal{S}}^r{\cal{T}}^s(\bsu_{i,j}) = {\cal{T}}^s{\cal{S}}^r(\bsu_{i,j}) =\bsu_{i+r,j+s}$. We also have an involution on $\mathfrak{U}$ defined in the following way.
	
	\begin{definition}
		An $\mathbb{F}$-linear map $\tau : {\mathfrak{U}} \rightarrow {\mathfrak{U}}$ satisfying $\tau(\alpha) = \alpha$, for any $\alpha \in {\mathbb{F}}$, $\tau(\bsu) = \bsu$ for any generator $\bsu$ of ${\mathfrak{U}}$, and $\tau(\bsx \bsy) = \tau(\bsy) \tau(\bsx)$, for any $\bsx, \bsy \in {\mathfrak{U}}$, is called a transposition.
	\end{definition}
	
	In order to define symmetries, we need the notion of derivation.
	
	\begin{definition}
		An $\mathbb{F}$-linear map $\partial : {\mathfrak{U}} \rightarrow {\mathfrak{U}}$ satisfying $\partial(\alpha) = 0$, for any $\alpha \in {\mathbb{F}}$, and the Leibnitz rule, i.e., $\partial(\bsx \bsy) = \partial(\bsx) \bsy +  \bsx \partial(\bsy)$ for any $\bsx, \bsy \in {\mathfrak{U}}$, is called a derivation of ${\mathfrak{U}}$.
	\end{definition}

	Derivations $\partial_\bsF$ of algebra ${\mathfrak{U}} $, commuting with the automorphisms ${\cal{S}}$ and ${\cal{T}}$, are called evolutionary. It is sufficient to define an evolutionary derivation on $\bsu_{0,0}$ as $\partial_\bsF(\bsu_{0,0}) = \bsF_{0,0} \in {\mathfrak{U}}$. The evolutionary derivation $\partial_\bsF$ is in one-to-one correspondence with the system of differential-difference equations 
	$$ \frac{{\rm{d}} u_{i,j}}{{\rm{d}} t} = {\cal{S}}^i {\cal{T}}^j \left(\bsF_{0,0}\right) = \bsF_{i,j}, ~~~ i,j \in \mathbb{Z}.$$ 
	By a symmetry of a difference equation we understand an evolutionary derivation $\partial_\bsF$ which is compatible with the equation, or, equivalently, the evolutionary differential-difference system of equations $\partial_{t} \bsu_{0,0} = \bsF_{0,0}$ is consistent with the difference equation. Symmetries $\partial_\bsF$ and $\partial_\bsG$ are called commuting, if $[\partial_\bsF,\partial_\bsG] = 0$. Symmetry $\partial_\bsM$ is called a master symmetry if $[\partial_\bsM,\partial_\bsF] \ne 0$ and $[\partial_\bsF,[\partial_\bsM,\partial_\bsF]]= 0$.

	\section{The noncommutative discrete potential KdV equation} \label{sec:pKdV}
	
	The noncommutative discrete potential KdV equation
	\begin{equation} \label{eq:nonAbel-pKdV}
		\left({\bs{u}}_{0,0} - {\bs{u}}_{1,1}\right) \left( {\bs{u}}_{1,0}-{\bs{u}}_{0,1}\right) = \alpha - \beta
	\end{equation}
	was first introduced in \cite{FNC}, where its initial value problem was also analysed. Additionally, it was derived  as a reduction of the noncommutative discrete Toda chain in \cite{Betal}.
	
	It can be readily verified that equation \eqref{eq:nonAbel-pKdV} is invariant under the following transformations, 
	\begin{equation} \label{eq:point-sym-pKdV}
		\bsu_{i,j} \rightarrow \bsu_{i,j} + \epsilon_1, ~~~ \bsu_{i,j} \rightarrow \bsu_{i,j} + \epsilon_2 (-1)^{n+m+i+j},
	\end{equation}
	as well as the conjugation $\bsu_{i,j} \rightarrow \bsx \bsu_{i,j} \bsx^{-1}$. We are interested in finding the lowest order generalised symmetries of \eqref{eq:nonAbel-pKdV}, and for this purpose we employ its Lax pair, given by
	\begin{subequations} \label{eq:Lax-pKdV}
		\begin{equation}
			{\bs{\Psi}}_{1,0} = {\bs{L}}_{0,0} {\bs{\Psi}}_{0,0} =   \begin{pmatrix}
				\bsu_{0,0} &   \alpha -\lambda -\bsu_{0,0} \bsu_{1,0} \\ 1 & -\bsu_{1,0}
			\end{pmatrix} {\bs{\Psi}}_{0,0}, \label{eq:Lax-pKdV-1}
		\end{equation}
		\begin{equation}	
			{\bs{\Psi}}_{0,1} = {\bs{M}}_{0,0} {\bs{\Psi}}_{0,0} = \begin{pmatrix}
				\bsu_{0,0} & \beta-\lambda -\bsu_{0,0} \bsu_{0,1} \\ 1 & -\bsu_{0,1}
			\end{pmatrix} {\bs{\Psi}}_{0,0}. \label{eq:Lax-pKdV-2}
		\end{equation}
	\end{subequations}
	To construct generalised symmetries of \eqref{eq:nonAbel-pKdV}, we seek an evolutionary derivation of the form
	$$\tfrac{{\rm{d}}}{{\rm{d}} t}{\bs{\Psi}}_{0,0} = {\bs{A}}_{0,0} {\bs{\Psi}}_{0,0},$$
	which is compatible with \eqref{eq:Lax-pKdV-1}. This leads to the compatibility condition
	$$ \frac{{\rm{d}}}{{\rm{d}} t} {\bs{L}}_{0,0} +  {\bs{L}}_{0,0} {\bs{A}}_{0,0} = {\bs{A}}_{1,0} {\bs{L}}_{0,0},$$
	where the derivative acts on $\bsu_{0,0}$ and its shifts, as well as on $\alpha$, all of which appear in the elements of matrix ${\bs{L}}$.
	
	Assuming that ${\bs{A}}_{0,0}  = {\bs{L}}^{-1}_{0,0} {\bs{Q}}_{0,0}$, where ${\bs{L}}^{-1}_{0,0}$ is  the inverse of ${\bs{L}}_{0,0}$ and ${\bs{Q}}_{0,0}$ is independent of $\lambda$, 
	$${\bs{L}}^{-1}_{0,0} = \frac{1}{ \alpha-\lambda} \begin{pmatrix}
		\bsu_{1,0} &  \alpha -\lambda - \bsu_{1,0} \bsu_{0,0} \\ 1 & -\bsu_{0,0}
	\end{pmatrix} \quad {\mbox{and}} \quad {\bs{Q}}_{0,0} = \begin{pmatrix}
		{\bs{a}}_{0,0} & {\bs{b}}_{0,0} \\ {\bs{c}}_{0,0} & {\bs{d}}_{0,0} \end{pmatrix}, $$
	the compatibility condition becomes
	\begin{equation} \label{eq:cc-1}
		{\bs{L}}_{1,0} \frac{{\rm{d}} {\bs{L}}_{0,0}}{{\rm{d}} t} +  {\bs{L}}_{1,0} {\bs{Q}}_{0,0} = {\bs{Q}}_{1,0} {\bs{L}}_{0,0}.
	\end{equation}
	The $\lambda$-independence of matrix $\bs{Q}$ and the $(1,1)$ entry of \eqref{eq:cc-1} yield that ${\bs{c}}_{0,0}=0$. In view of this, entries $(2,1)$ and $(1,1)$ of \eqref{eq:cc-1} lead to 
	\begin{equation} \label{eq:cc-1a}
		\frac{{\rm{d}} {\bs{u}}_{0,0}}{{\rm{d}} t} = {\bs{d}}_{1,0} - {\bs{a}}_{0,0} ~~{\text{and}} ~~ {\bs{b}}_{0,0} = \bsu_{0,0} {\bs{d}}_{0,0} - {\bs{a}}_{0,0} \bsu_{-1,0}.
	\end{equation}
	The $(1,2)$ element of \eqref{eq:cc-1} is linear in $\lambda$. Requiring the coefficient of $\lambda$ to be zero and taking into account the first equation in \eqref{eq:cc-1a} we find that ${\bs{d}}_{2,0} = {\bs{d}}_{0,0}$ which implies that ${\bs{d}}$ is periodic with period 2. Since this function corresponds to the translation symmetries \eqref{eq:point-sym-pKdV} of \eqref{eq:nonAbel-pKdV}, we choose ${\bs{d}}_{0,0} =0$ without loss of generality. The $\lambda$ independent part of the $(1,2)$ element and relations \eqref{eq:cc-1a} lead to
	$$ {\bs{a}}_{0,0} (\bsu_{1,0} - \bsu_{-1,0}) - (\bsu_{2,0}-\bsu_{0,0}) {\bs{a}}_{1,0} + \xi = 0, ~~\text{where } ~ \xi = \frac{{\rm{d}} \alpha}{{\rm{d}} t}, $$
	which imply that\footnote{The choice of the negative sign is for convenience.} 
	\begin{equation}
		{\bs{a}}_{0,0} = -(\mu_1 + \mu_2 n)  (\bsu_{1,0} - \bsu_{-1,0})^{-1} ~~ {\text{and}} ~~ \xi =  -\mu_2, ~~~ \mu_i \in {\mathbb{F}}. 
	\end{equation} 
	In view of these, equations \eqref{eq:cc-1a} become
	\begin{subequations} \label{eq:cc-1b}
		\begin{equation}\label{eq:cc-1b1}
			\frac{{\rm{d}} {\bs{u}}_{0,0}}{{\rm{d}} t} = (\mu_1 + \mu_2 n)  (\bsu_{1,0} - \bsu_{-1,0})^{-1}, ~~~ \frac{{\rm{d}} \alpha}{{\rm{d}} t} =-\mu_2,
		\end{equation} 
		and
		\begin{equation}
			{\bs{b}}_{0,0} =  (\mu_1 + \mu_2 n)  (\bsu_{1,0} - \bsu_{-1,0})^{-1}  \bsu_{-1,0},
		\end{equation}
	\end{subequations}
	respectively. This means that there are two evolutionary derivations compatible with \eqref{eq:Lax-pKdV-1} corresponding to the two parameters appearing in \eqref{eq:cc-1b}. Summarising,
	
	\begin{proposition}
		The systems 
		\begin{equation} \label{eq:Lax-difdif-pKdV-1}
			{\bs{\Psi}}_{1,0} = {\bs{L}}_{0,0} {\bs{\Psi}}_{0,0}, ~~~ \frac{{\rm{d}} {\bs{\Psi}}_{0,0}}{{\rm{d}} t_1} = {\bs{A}}_{0,0} {\bs{\Psi}}_{0,0}, ~~~ \frac{{\rm{d}} {\bs{\Psi}}_{0,0}}{{\rm{d}} x} = n {\bs{A}}_{0,0} {\bs{\Psi}}_{0,0},
		\end{equation}
		where ${\bs{L}}_{0,0}$ is given in \eqref{eq:Lax-pKdV-1} and
		\begin{equation}
			{\bs{A}}_{0,0} = \frac{1}{ \alpha-\lambda} \begin{pmatrix}
				-\bsu_{1,0} \bsa_{0,0} & \bsu_{1,0} \bsa_{0,0} \bsu_{-1,0} \\ -\bsa_{0,0} & \bsa_{0,0}\bsu_{-1,0}
			\end{pmatrix} ~~~{\mbox{with }}~~ \bsa_{0,0} = (\bsu_{1,0}-\bsu_{-1,0})^{-1},
		\end{equation}
		are Lax pairs of the differential-difference equations
		\begin{equation} \label{eq:difdif-pKdV-1}
			\frac{{\rm{d}} {\bs{u}}_{0,0}}{{\rm{d}} t_1} =  {\bs{a}}_{0,0} ~~~  {\text{ and }} ~~~  \frac{{\rm{d}} {\bs{u}}_{0,0}}{{\rm{d}} x} =  n {\bs{a}}_{0,0}, ~~ \frac{{\rm{d}} \alpha}{{\rm{d}} x} = -1,
		\end{equation}
		respectively.
	\end{proposition}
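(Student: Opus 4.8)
The plan is to verify, separately for each of the two flows, the semi--discrete zero--curvature (compatibility) condition produced by cross--differentiating $\bs{\Psi}_{1,0}=\bs{L}_{0,0}\bs{\Psi}_{0,0}$ against an evolution $\partial\bs{\Psi}_{0,0}=\bs{B}_{0,0}\bs{\Psi}_{0,0}$, namely $\partial\bs{L}_{0,0}+\bs{L}_{0,0}\bs{B}_{0,0}=\bs{B}_{1,0}\bs{L}_{0,0}$, and to show that it holds identically once the relevant equation in \eqref{eq:difdif-pKdV-1} is imposed. For the $t_1$--flow, where $\bs{B}_{0,0}=\bs{A}_{0,0}$ and $\partial_{t_1}\alpha=0$, this is exactly the computation preceding the proposition read in reverse: feeding $\partial_{t_1}\bsu_{0,0}=\bsa_{0,0}$ into the condition $\partial_{t_1}\bs{L}_{0,0}+\bs{L}_{0,0}\bs{A}_{0,0}=\bs{A}_{1,0}\bs{L}_{0,0}$ (equivalently \eqref{eq:cc-1}) reproduces precisely the relations \eqref{eq:cc-1a}--\eqref{eq:cc-1b} that fixed $\bs{A}_{0,0}$, so the first Lax pair is established by construction.

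The substantive part is the $x$--flow, where the multiplier $\bs{B}_{0,0}=n\bs{A}_{0,0}$ carries explicit dependence on the lattice variable and $\partial_x\alpha=-1$. The key observation is that, acting on $\bs{L}_{0,0}$, the $x$--derivation splits as
\[
\partial_x\bs{L}_{0,0}=n\,\partial_{t_1}\bs{L}_{0,0}+\bs{R}_{0,0},
\]
where $\bs{R}_{0,0}$ collects the two corrections to $n\,\partial_{t_1}$: the contribution $\partial_x\alpha=-1$ (absent for $t_1$) and the discrepancy between $n\,\partial_{t_1}\bsu_{1,0}=n\bsa_{1,0}$ and the actual $\partial_x\bsu_{1,0}=(n+1)\bsa_{1,0}$ forced by $\mathcal{S}(n)=n+1$. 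Since the same shift sends the multiplier to $\bs{B}_{1,0}=(n+1)\bs{A}_{1,0}$, the $x$--compatibility condition reads $\partial_x\bs{L}_{0,0}+n\,\bs{L}_{0,0}\bs{A}_{0,0}=(n+1)\bs{A}_{1,0}\bs{L}_{0,0}$. Substituting the splitting and using the already--established $t_1$--identity $\partial_{t_1}\bs{L}_{0,0}+\bs{L}_{0,0}\bs{A}_{0,0}=\bs{A}_{1,0}\bs{L}_{0,0}$ to cancel every term proportional to $n$, the condition collapses to the single residual identity $\bs{R}_{0,0}=\bs{A}_{1,0}\bs{L}_{0,0}$.

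It then remains to verify this $2\times2$ identity directly. Computing $\bs{R}_{0,0}$ entrywise from the splitting yields a matrix whose only nonzero entries lie in the second column, $(1,2)=-1-\bsu_{0,0}\bsa_{1,0}$ and $(2,2)=-\bsa_{1,0}$. Multiplying out $\bs{A}_{1,0}\bs{L}_{0,0}$ and using $\mathcal{S}(\alpha)=\alpha$, one finds that its $(1,1)$ and $(2,1)$ entries vanish identically, its $(2,2)$ entry reduces to $-\bsa_{1,0}$ once the central factor $\alpha-\lambda$ cancels, and its $(1,2)$ entry equals $-\bsu_{2,0}\bsa_{1,0}$; this last entry agrees with that of $\bs{R}_{0,0}$ precisely because $(\bsu_{2,0}-\bsu_{0,0})\bsa_{1,0}=1$, i.e. because $\bsa_{1,0}=(\bsu_{2,0}-\bsu_{0,0})^{-1}$. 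I expect the main obstacle to be purely bookkeeping: respecting the noncommutative ordering of the $\bsu_{i,0}$ throughout, and correctly propagating the explicit shift $n\mapsto n+1$ through the lattice--dependent coefficients rather than treating $n$ as inert. Once these are handled, no cancellation beyond this single inverse relation is required, and the proposition follows.
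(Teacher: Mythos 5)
Your proof is correct, but its organisation differs from the paper's in a way worth recording. The paper never verifies the two flows separately: it \emph{derives} them, positing ${\bs{A}}_{0,0}={\bs{L}}_{0,0}^{-1}{\bs{Q}}_{0,0}$ with ${\bs{Q}}_{0,0}$ independent of $\lambda$, and extracting from the compatibility condition \eqref{eq:cc-1}, entry by entry, the form of ${\bs{Q}}_{0,0}$ together with the evolution; both flows then appear at once as the two-parameter family ${\bs{a}}_{0,0}=-(\mu_1+\mu_2 n)(\bsu_{1,0}-\bsu_{-1,0})^{-1}$, $\tfrac{{\rm d}\alpha}{{\rm d}t}=-\mu_2$, solving the functional equation ${\bs{a}}_{0,0}(\bsu_{1,0}-\bsu_{-1,0})-(\bsu_{2,0}-\bsu_{0,0}){\bs{a}}_{1,0}+\xi=0$. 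That argument proves the proposition by construction and moreover shows these are the \emph{only} flows of this form, but it never isolates the master ($x$) flow. Your handling of the $x$-flow is genuinely different and cleaner as a verification: splitting $\partial_x{\bs{L}}_{0,0}=n\,\partial_{t_1}{\bs{L}}_{0,0}+{\bs{R}}_{0,0}$ and subtracting $n$ times the $t_1$-identity collapses the whole check to the single residual identity ${\bs{R}}_{0,0}={\bs{A}}_{1,0}{\bs{L}}_{0,0}$. I checked your entrywise claims and they are right: ${\bs{R}}_{0,0}$ has only its second column nonzero, with entries $-1-\bsu_{0,0}\bsa_{1,0}$ and $-\bsa_{1,0}$, while ${\bs{A}}_{1,0}{\bs{L}}_{0,0}$ has vanishing first column, $(2,2)$ entry $-\bsa_{1,0}$, and $(1,2)$ entry $-\bsu_{2,0}\bsa_{1,0}$, so equality holds precisely because $(\bsu_{2,0}-\bsu_{0,0})\bsa_{1,0}=1$. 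This makes transparent the mechanism behind the master symmetry: the shift discrepancy $n\mapsto n+1$ in the multiplier is exactly compensated by $\tfrac{{\rm d}\alpha}{{\rm d}x}=-1$, both corrections being absorbed into ${\bs{R}}_{0,0}$. Two remarks. First, what your route gives up relative to the paper is the classification content (uniqueness of the flows), which the proposition does not claim anyway. Second, your $t_1$-step leans on reading the paper's forward derivation in reverse; this is legitimate since that derivation exhausts every entry of \eqref{eq:cc-1}, but you could make your proof fully self-contained at no cost by observing that $\partial_{t_1}{\bs{L}}_{0,0}+{\bs{L}}_{0,0}{\bs{A}}_{0,0}$ equals the very same matrix ${\bs{R}}_{0,0}$ (its $(1,1)$ entry is $\bsa_{0,0}-\bsa_{0,0}=0$ and its $(1,2)$ entry is $-\bsa_{0,0}(\bsu_{1,0}-\bsu_{-1,0})-\bsu_{0,0}\bsa_{1,0}=-1-\bsu_{0,0}\bsa_{1,0}$), so the $t_1$ compatibility is exactly the residual identity ${\bs{R}}_{0,0}={\bs{A}}_{1,0}{\bs{L}}_{0,0}$ you already verify; a single computation then settles both flows.
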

	
	Lattices \eqref{eq:difdif-pKdV-1} are related to the noncommutative modified Volterra chain via a Miura transformation. More precisely,
	
	\begin{proposition}
		The Miura transformation 	
		\begin{equation} \label{eq:Miura-pot}
			{\bs{v}}_{0,0} = \left({\bs{u}}_{1,0} -{\bs{u}}_{-1,0}\right)^{-1},
		\end{equation}
		together with the change of variables $t_1 \rightarrow -t_1$ and $x \rightarrow -x$, maps lattices \eqref{eq:difdif-pKdV-1} to
		\begin{equation} \label{eq:Volt1}
			\frac{{\rm{d}} {\bs{v}}_{0,0}}{{\rm{d}} t_1} =   {\bs{v}}_{0,0} \left( {\bs{v}}_{1,0} - {\bs{v}}_{-1,0}\right) {\bs{v}}_{0,0}
		\end{equation}
		and
		\begin{equation} \label{eq:mast-sym-mod-Volt}
			\frac{{\rm{d}} {\bs{v}}_{0,0}}{{\rm{d}} x} =  {\bs{v}}_{0,0} \left((n+1) {\bs{v}}_{1,0} - (n-1) {\bs{v}}_{-1,0}\right) {\bs{v}}_{0,0},
		\end{equation}
		respectively. Moreover, this transformation maps matrix Lax pairs \eqref{eq:Lax-difdif-pKdV-1} to the scalar Lax pair
		\begin{equation}  \label{eq:Lax-pair-Volt}
			\bs{\phi}_{2,0} + \bsv_{1,0}^{-1} \bs{\phi}_{1,0} = \lambda \bs{\phi}_{0,0}, ~~~ \frac{{\rm{d}} {\bs{\phi}}_{0,0}}{{\rm{d}} t_1} = \frac{\bsv_{0,0} \bs{\phi}_{1,0} + \bs{\phi}_{0,0}}{\lambda}
		\end{equation}
		for equation \eqref{eq:Volt1}, and to the scalar Lax pair
		\begin{equation}  \label{eq:Lax-pair-mast-Volt}
			\bs{\phi}_{2,0} + \bsv_{1,0}^{-1} \bs{\phi}_{1,0} = \lambda \bs{\phi}_{0,0},~~~ \frac{{\rm{d}} {\bs{\phi}}_{0,0}}{{\rm{d}} x} = n \frac{\bsv_{0,0} \bs{\phi}_{1,0} + \bs{\phi}_{0,0}}{\lambda}, ~~ \frac{{\rm{d}} \lambda}{{\rm{d}} x} = 1
		\end{equation}
		for equation \eqref{eq:mast-sym-mod-Volt}.
	\end{proposition}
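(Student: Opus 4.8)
The proof separates into two independent verifications — the mapping of the two lattices, and the mapping of the two Lax pairs — and in both the only nontrivial feature is the bookkeeping of noncommutative factors. For the lattices I would differentiate the Miura transformation \eqref{eq:Miura-pot} directly. Setting $\bsz_{0,0} = \bsu_{1,0}-\bsu_{-1,0}$ so that $\bsv_{0,0}=\bsz_{0,0}^{-1}$, the Leibnitz rule applied to $\bsz_{0,0}\bsz_{0,0}^{-1}=1$ yields the noncommutative inverse-derivation identity $\partial(\bsz_{0,0}^{-1}) = -\bsv_{0,0}\,\partial(\bsz_{0,0})\,\bsv_{0,0}$. Substituting the $t_1$-flow of \eqref{eq:difdif-pKdV-1} and using that applying $\mathcal{S}^{i}$ to \eqref{eq:Miura-pot} gives $\bsa_{i,0}=\bsv_{i,0}$, this identity produces $-\bsv_{0,0}(\bsv_{1,0}-\bsv_{-1,0})\bsv_{0,0}$; the change $t_1\to -t_1$ then gives \eqref{eq:Volt1}. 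The key structural point is that the sandwiched shape $\bsv_{0,0}(\cdots)\bsv_{0,0}$ characteristic of the modified Volterra equation emerges automatically from the inverse-derivation rule.

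For the master symmetry I would repeat this computation with the $x$-flow, the only new ingredient being the explicit $n$-dependence. Since $\partial_x$ commutes with $\mathcal{S}$ and $\mathcal{S}(n)=n+1$, applying $\mathcal{S}^{\pm1}$ to $\partial_x(\bsu_{0,0})=n\bsa_{0,0}$ gives $\partial_x(\bsu_{\pm1,0})=(n\pm1)\bsv_{\pm1,0}$. Feeding these into the inverse-derivation identity and applying $x\to -x$ reproduces the coefficients $(n+1)$ and $(n-1)$ of \eqref{eq:mast-sym-mod-Volt}. Here it is worth noting that $\bsv_{0,0}$ does not depend on $\alpha$, so the flow $\tfrac{{\rm d}\alpha}{{\rm d}x}=-1$ plays no role in the $\bsv$-equation itself.

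The Lax-pair part carries the real work. Writing ${\bs{\Psi}}_{0,0}=({\bs{p}}_{0,0},{\bs{q}}_{0,0})^{T}$, I would identify the scalar eigenfunction of \eqref{eq:Lax-pair-Volt} with the second component, ${\bs{\phi}}_{0,0}:={\bs{q}}_{0,0}$. The second row of ${\bs{\Psi}}_{1,0}={\bs{L}}_{0,0}{\bs{\Psi}}_{0,0}$ gives ${\bs{p}}_{0,0}={\bs{q}}_{1,0}+\bsu_{1,0}{\bs{q}}_{0,0}$, which I use (together with its $\mathcal{S}$-shift) to eliminate ${\bs{p}}$ from the first row; the $\bsu_{0,0}\bsu_{1,0}$ terms cancel, leaving the scalar equation ${\bs{q}}_{2,0}+(\bsu_{2,0}-\bsu_{0,0}){\bs{q}}_{1,0}=(\alpha-\lambda){\bs{q}}_{0,0}$. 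Recognising $\bsu_{2,0}-\bsu_{0,0}=\bsv_{1,0}^{-1}$ and reparametrising $\lambda\to\alpha-\lambda$ gives the spatial part of \eqref{eq:Lax-pair-Volt}. For the temporal part I substitute the same relation into the second row of $\tfrac{{\rm d}}{{\rm d}t_1}{\bs{\Psi}}_{0,0}={\bs{A}}_{0,0}{\bs{\Psi}}_{0,0}$; using $\bsu_{1,0}-\bsu_{-1,0}=\bsv_{0,0}^{-1}$ and $\bsa_{0,0}=\bsv_{0,0}$ collapses the entries to $-(\bsv_{0,0}{\bs{\phi}}_{1,0}+{\bs{\phi}}_{0,0})/(\alpha-\lambda)$, and the changes $t_1\to-t_1$, $\lambda\to\alpha-\lambda$ produce \eqref{eq:Lax-pair-Volt}. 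The master-symmetry pair \eqref{eq:Lax-pair-mast-Volt} follows identically, the overall factor $n$ surviving from ${\bs{A}}_{0,0}\to n{\bs{A}}_{0,0}$ and the evolution $\tfrac{{\rm d}\lambda}{{\rm d}x}=1$ emerging from $\tfrac{{\rm d}\alpha}{{\rm d}x}=-1$ under $\lambda\to\alpha-\lambda$ together with $x\to-x$.

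The main obstacle is bookkeeping rather than conceptual. One must keep the noncommutative factors on the correct side throughout the elimination, so that identities such as $\bsa_{0,0}\bsv_{0,0}^{-1}=1$ and the ordering in $\bsa_{0,0}\bsu_{-1,0}$ are respected, and one must reconcile the spectral parameter of the matrix pair with that of the scalar pair. The single reparametrisation $\lambda\to\alpha-\lambda$, combined with the sign flips $t_1\to-t_1$ and $x\to-x$, is what simultaneously matches all four target equations together with the $\lambda$-evolution $\tfrac{{\rm d}\lambda}{{\rm d}x}=1$; checking that one consistent choice of these changes of variables works across the whole statement is the one place where genuine care is needed.
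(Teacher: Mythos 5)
Your proposal is correct and follows essentially the same route as the paper's own proof: differentiating the Miura relation (the paper differentiates $\bsv_{0,0}^{-1}=\bsu_{1,0}-\bsu_{-1,0}$, you the equivalent inverted form) for the lattices, and for the Lax pairs eliminating the first component of $\bs{\Psi}$ via the second row of the discrete equation to obtain the scalar relations, then applying $t_1\to-t_1$, $x\to-x$ and the substitution of $\alpha-\lambda$ by $\lambda$. Your explicit check that $\tfrac{{\rm{d}}\lambda}{{\rm{d}}x}=1$ follows from $\tfrac{{\rm{d}}\alpha}{{\rm{d}}x}=-1$ under these changes is a point the paper leaves implicit, but the argument is the same.
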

	
	\begin{proof}
		If we rewrite the Miura transformation \eqref{eq:Miura-pot} in the form $	{\bs{v}}_{0,0}^{-1} = {\bs{u}}_{1,0} -{\bs{u}}_{-1,0}$, and differentiate it with respect to $t_1$ or $x$, while taking into account that the lattices \eqref{eq:difdif-pKdV-1} can be expressed as $\tfrac{{\rm{d}} {\bs{u}}_{0,0}}{{\rm{d}} t_1} =  {\bs{v}}_{0,0}$ and $\tfrac{{\rm{d}} {\bs{u}}_{0,0}}{{\rm{d}} x} =  n {\bs{v}}_{0,0}$, respectively, then the lattices \eqref{eq:Volt1} and \eqref{eq:mast-sym-mod-Volt} follow.	
		
		For the Lax pairs, we consider the discrete part of \eqref{eq:Lax-difdif-pKdV-1} with ${\bs{\Psi}}_{0,0} = \left( {\bs{\psi}}_{0,0} ~ {\bs{\phi}}_{0,0}\right)^T$, i.e.,
		\begin{equation} \label{eq:proof-1}
			{\bs{\psi}}_{1,0} = \bsu_{0,0} {\bs{\psi}}_{0,0} + (\alpha-\lambda - \bsu_{0,0} \bsu_{1,0}) {\bs{\phi}}_{0,0}, ~~~  
			{\bs{\phi}}_{1,0} =  {\bs{\psi}}_{0,0} -  \bsu_{1,0} {\bs{\phi}}_{0,0}.
		\end{equation} 
		The second equation readily leads to $ {\bs{\psi}}_{0,0} = {\bs{\phi}}_{1,0}+\bsu_{1,0} {\bs{\phi}}_{0,0}$, in view of which, the first equation in \eqref{eq:proof-1} becomes
		\begin{equation} \label{eq:proof-2}
			{\bs{\phi}}_{2,0}+\left(\bsu_{2,0} -\bsu_{0,0} \right) {\bs{\phi}}_{1,0}  = (\alpha-\lambda) {\bs{\phi}}_{0,0}.
		\end{equation}
		Moreover, we consider the second component of the differential equations in \eqref{eq:Lax-difdif-pKdV-1}, i.e.,
		$$\frac{{\rm{d}} {\bs{\phi}}_{0,0}}{{\rm{d}} t_1} = \frac{1}{\alpha-\lambda} \left( -\bsa_{0,0} {\bs{\psi}}_{0,0} +\bsa_{0,0} \bsu_{-1,0} {\bs{\phi}}_{0,0} \right),~~~ \frac{{\rm{d}} {\bs{\phi}}_{0,0}}{{\rm{d}} x} = \frac{n}{\alpha-\lambda} \left( -\bsa_{0,0} {\bs{\psi}}_{0,0} +\bsa_{0,0} \bsu_{-1,0} {\bs{\phi}}_{0,0} \right).$$
		Substituting $ {\bs{\psi}}_{0,0} = {\bs{\phi}}_{1,0}+\bsu_{1,0} {\bs{\phi}}_{0,0}$, the above relations become
		\begin{equation} \label{eq:proof-3}
			\frac{{\rm{d}} {\bs{\phi}}_{0,0}}{{\rm{d}} t_1} = \frac{1}{\alpha-\lambda} \left( -\bsa_{0,0} {\bs{\phi}}_{1,0} - {\bs{\phi}}_{0,0} \right),~~~ \frac{{\rm{d}} {\bs{\phi}}_{0,0}}{{\rm{d}} x} = \frac{n}{\alpha-\lambda} \left(-\bsa_{0,0} {\bs{\phi}}_{1,0} - {\bs{\phi}}_{0,0}  \right).
		\end{equation}
		Finally, we use transformation \eqref{eq:Miura-pot} along with $t_1 \rightarrow -t_1$ and $x \rightarrow -x$, as well as we change $\alpha-\lambda$ to $\lambda$. In view of all these changes, \eqref{eq:proof-2} and \eqref{eq:proof-3} lead to the Lax pairs \eqref{eq:Lax-pair-Volt} and \eqref{eq:Lax-pair-mast-Volt}, respectively. 
	\end{proof}
	
	\begin{remark}
		Lattice \eqref{eq:mast-sym-mod-Volt} is the master symmetry for the modified Volterra equation \eqref{eq:Volt1}. Indeed, their commutators yields
		\begin{equation} \label{eq:Volt2}
			\frac{{\rm{d}} {\bs{v}}_{0,0}}{{\rm{d}} t_2} =   {\bs{v}}_{0,0} {\bs{v}}_{1,0} \left( {\bs{v}}_{2,0} + {\bs{v}}_{0,0}\right) {\bs{v}}_{1,0} {\bs{v}}_{0,0} -  {\bs{v}}_{0,0} {\bs{v}}_{-1,0} \left( {\bs{v}}_{0,0} + {\bs{v}}_{-2,0}\right) {\bs{v}}_{-1,0} {\bs{v}}_{0,0},
		\end{equation}
		which is the second member of the modified Volterra hierarchy and commutes with \eqref{eq:Volt1}. It is worth noting that the Volterra lattice \eqref{eq:Volt1} was first introduced in \cite{Bog}, and corresponds to the equation denoted by ${\rm mVL}^2$ in \cite{A1} with $\alpha = 0$; see also \cite{NW}.
	\end{remark}
	
	Moreover, differential-difference equations \eqref{eq:difdif-pKdV-1} are compatible with the discrete potential KdV equation \eqref{eq:nonAbel-pKdV} and we refer to the former as generalised symmetries of the latter. 
	
	\begin{proposition}
		The lowest order generalised symmetries of equation \eqref{eq:nonAbel-pKdV} in the first direction are generated by
		\begin{equation} \label{eq:sym1-pKdV}
			\frac{{\rm{d}} {\bs{u}}_{0,0}}{{\rm{d}} t_1} =  \left({\bs{u}}_{1,0}- {\bs{u}}_{-1,0}\right)^{-1},
		\end{equation}
		and
		\begin{equation} \label{eq:mast-sym-pKdV}
			\frac{{\rm{d}} {\bs{u}}_{0,0}}{{\rm{d}} x} =  n \left({\bs{u}}_{1,0}- {\bs{u}}_{-1,0}\right)^{-1}, ~~~ 	\frac{{\rm{d}} \alpha}{{\rm{d}} x} = -1,
		\end{equation}
		respectively.
	\end{proposition}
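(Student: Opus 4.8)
The plan is to establish the symmetry property through the zero-curvature representation of \eqref{eq:nonAbel-pKdV}, exploiting that the flows \eqref{eq:sym1-pKdV} and \eqref{eq:mast-sym-pKdV} already carry the Lax structure \eqref{eq:Lax-difdif-pKdV-1} supplied by the preceding proposition. First I would record that \eqref{eq:nonAbel-pKdV} is exactly the compatibility condition of the Lax pair \eqref{eq:Lax-pKdV}: eliminating $\bs{\Psi}$ from $\bs{\Psi}_{1,0}=\bs{L}_{0,0}\bs{\Psi}_{0,0}$ and $\bs{\Psi}_{0,1}=\bs{M}_{0,0}\bs{\Psi}_{0,0}$ gives $\bs{L}_{0,1}\bs{M}_{0,0}=\bs{M}_{1,0}\bs{L}_{0,0}$, and a short check of the four entries (using that $\alpha-\beta$ is central) recovers \eqref{eq:nonAbel-pKdV}. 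This reduces the claim that \eqref{eq:sym1-pKdV} is a symmetry to the statement that the $t_1$-flow preserves this zero-curvature condition.

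Next I would bring in the compatibility relation in the first direction that was produced in the derivation leading to \eqref{eq:Lax-difdif-pKdV-1}, namely $\tfrac{{\rm d}}{{\rm d}t_1}\bs{L}_{0,0}+\bs{L}_{0,0}\bs{A}_{0,0}=\bs{A}_{1,0}\bs{L}_{0,0}$, to which I shall refer as $(\star)$. The essential additional input is the analogous relation in the second direction,
$$\tfrac{{\rm d}}{{\rm d}t_1}\bs{M}_{0,0}+\bs{M}_{0,0}\bs{A}_{0,0}=\bs{A}_{0,1}\bs{M}_{0,0},$$
together with $\tfrac{{\rm d}\alpha}{{\rm d}t_1}=\tfrac{{\rm d}\beta}{{\rm d}t_1}=0$; call this $(\star\star)$. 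Granting $(\star)$ and $(\star\star)$, I would differentiate $\bs{L}_{0,1}\bs{M}_{0,0}-\bs{M}_{1,0}\bs{L}_{0,0}$ along the flow and substitute the shifted compatibility relations $\mathcal{T}(\star)$ and $\mathcal{S}(\star\star)$; the four resulting $\bs{A}$-terms telescope and cancel, showing that the zero-curvature condition, and hence \eqref{eq:nonAbel-pKdV}, is preserved by the flow.

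The main obstacle is establishing $(\star\star)$, since $\bs{A}_{0,0}$ was constructed solely from the first-direction problem \eqref{eq:Lax-pKdV-1} and depends only on the $n$-shifts $\bs{u}_{1,0},\bs{u}_{-1,0}$, whereas $\bs{M}_{0,0}$ and $\bs{A}_{0,1}$ introduce the $m$-shifted variables $\bs{u}_{0,1},\bs{u}_{1,1},\bs{u}_{-1,1}$. Verifying $(\star\star)$ therefore genuinely uses \eqref{eq:nonAbel-pKdV} and its shifts to re-express these $m$-shifted differences, and the delicate points are controlling the noncommutative inverse $\bs{a}_{0,0}=(\bs{u}_{1,0}-\bs{u}_{-1,0})^{-1}$ and preserving the order of factors throughout. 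A convenient simplification I would exploit is that, $\alpha-\beta$ being central and invertible for $\alpha\ne\beta$, the factors $\bs{p}:=\bs{u}_{0,0}-\bs{u}_{1,1}$ and $\bs{q}:=\bs{u}_{1,0}-\bs{u}_{0,1}$ commute and satisfy $\bs{p}\bs{q}=\bs{q}\bs{p}=\alpha-\beta$, which linearises many of the intermediate expressions.

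Finally, for the master symmetry \eqref{eq:mast-sym-pKdV} I would repeat the argument with $\bs{A}_{0,0}$ replaced by the weighted matrix $n\bs{A}_{0,0}$ of \eqref{eq:Lax-difdif-pKdV-1} and with $\tfrac{{\rm d}\alpha}{{\rm d}x}=-1$, $\tfrac{{\rm d}\beta}{{\rm d}x}=0$. The only new feature is the bookkeeping of the explicit weight: under $\mathcal{S}$ it becomes $n+1$ while under $\mathcal{T}$ it is unchanged, so the first-direction relation carries mismatched weights that are absorbed precisely by the $\tfrac{{\rm d}\alpha}{{\rm d}x}=-1$ term, whereas the second-direction relation retains equal weights. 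Checking that these weighted versions of $(\star)$ and $(\star\star)$ still telescope then shows that \eqref{eq:mast-sym-pKdV} is a symmetry as well. One could instead bypass the Lax pair and verify $\tfrac{{\rm d}}{{\rm d}t_1}\big[(\bs{u}_{0,0}-\bs{u}_{1,1})(\bs{u}_{1,0}-\bs{u}_{0,1})\big]=0$ on solutions by direct substitution, but the zero-curvature route organises the noncommutative bookkeeping far more transparently.
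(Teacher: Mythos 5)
Your route is genuinely different from the paper's. The paper does not argue through preservation of the zero-curvature condition at all: it differentiates \eqref{eq:nonAbel-pKdV} directly along the flow \eqref{eq:sym1-pKdV}, obtaining \eqref{eq:pKdV-det-eq}, and then closes the computation with the two identities \eqref{eq:p-subs-a}--\eqref{eq:p-subs-b}, which express $\left(\bsu_{2,1}-\bsu_{0,1}\right)^{-1}$ and $\left(\bsu_{1,1}-\bsu_{-1,1}\right)^{-1}$ in terms of data on the row $m=0$ using only the equation and its shifts; the master symmetry \eqref{eq:mast-sym-pKdV} is then handled by writing it as $n\,\frac{{\rm d}\bsu_{0,0}}{{\rm d}t_1}$ plus the parameter flow $\frac{{\rm d}\alpha}{{\rm d}x}=-1$ and reusing the same identities. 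Your plan instead promotes the Lax structure \eqref{eq:Lax-difdif-pKdV-1} to a proof device: your telescoping computation is correct (with $(\star)$ and $(\star\star)$ one gets $\frac{{\rm d}}{{\rm d}t_1}\bs{Z}=\bs{A}_{1,1}\bs{Z}-\bs{Z}\bs{A}_{0,0}$ for the zero-curvature defect $\bs{Z}$, hence $\bs{Z}$ stays in the ideal generated by the equation), your observation that $\bs{p}\bs{q}=\bs{q}\bs{p}=\alpha-\beta$ on solutions correctly disposes of the factor-ordering issue relating the entries of $\bs{Z}$ to the equation, and your weighted bookkeeping for the $x$-flow (weights $n$, $n+1$ under $\cal{S}$, equal weights under $\cal{T}$, the $\frac{{\rm d}\alpha}{{\rm d}x}=-1$ term absorbing the mismatch) is also consistent. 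What the Lax route buys is a cleaner conceptual organisation and a template that transfers verbatim to other equations; what the paper's direct route buys is that every step is an explicit computation with no auxiliary relation left to establish.

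The caveat is that your proposal defers rather than performs its pivotal step. The relation $(\star)$ indeed holds identically in the free variables $\bsu_{i,0}$ (it is the content of the preceding proposition), but $(\star\star)$ holds only modulo \eqref{eq:nonAbel-pKdV} and its shifts, and verifying it is precisely where all the computational content of the proposition lives: one must use the equation to rewrite $\frac{{\rm d}}{{\rm d}t_1}\bsu_{0,1}=\left(\bsu_{1,1}-\bsu_{-1,1}\right)^{-1}$ and the entries of $\bs{A}_{0,1}$ in terms of row-$m=0$ data, which amounts to deriving exactly the identities \eqref{eq:p-subs-a}--\eqref{eq:p-subs-b} that the paper's proof uses. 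You correctly name this as "the main obstacle" and indicate the right tools (centrality of $\alpha-\beta$, the commuting factors $\bs{p}$, $\bs{q}$), but as written the proposal asserts that the matrix identity will close without exhibiting the computation, so the argument is a sound strategy with its core verification outstanding rather than a complete proof. Once $(\star\star)$ is actually checked, the rest of your argument goes through.
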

	
	\begin{proof}
		Differentiating \eqref{eq:nonAbel-pKdV} with respect to $t_1$ and using \eqref{eq:sym1-pKdV}, we arrive at
		\begin{eqnarray}
			&& \left(\left({\bs{u}}_{1,0}- {\bs{u}}_{-1,0}\right)^{-1} - \left({\bs{u}}_{2,1}- {\bs{u}}_{0,1}\right)^{-1}\right) \left( \bsu_{1,0}- \bsu_{0,1}\right) + \nonumber \\ && \hspace{3cm} \left(\bsu_{0,0} - \bsu_{1,1}\right) \left( \left({\bs{u}}_{2,0}- {\bs{u}}_{0,0}\right)^{-1}- \left({\bs{u}}_{1,1}- {\bs{u}}_{-1,1}\right)^{-1}\right) = 0. \label{eq:pKdV-det-eq}	
		\end{eqnarray}	
		We can replace $\bsu_{2,1}$ using the forward shift of equation \eqref{eq:nonAbel-pKdV}. Indeed, the latter can be written as
		\begin{subequations} \label{eq:p-subs}
			$$ \bsu_{2,1} = \bsu_{1,0} - (\alpha-\beta) \left(\bsu_{2,0}-\bsu_{1,1}\right)^{-1}  ~~\Rightarrow ~~  \bsu_{2,1} -\bsu_{0,1}= \bsu_{1,0} -\bsu_{0,1} - (\alpha-\beta) \left(\bsu_{2,0}-\bsu_{1,1}\right)^{-1}.$$
			Using \eqref{eq:nonAbel-pKdV} to replace $\bsu_{1,0} -\bsu_{0,1}$, we arrive at
			\begin{eqnarray*}
				\bsu_{2,1} -\bsu_{0,1} &=& (\alpha-\beta)\left( \left(\bsu_{0,0} -\bsu_{1,1}\right)^{-1} -  \left(\bsu_{2,0}-\bsu_{1,1}\right)^{-1}\right) \\  &=&  (\alpha-\beta) \left(\bsu_{0,0} -\bsu_{1,1}\right)^{-1} \left(\bsu_{2,0}-\bsu_{0,0}\right) \left(\bsu_{2,0}-\bsu_{1,1}\right)^{-1}.
			\end{eqnarray*}
			Hence,
			\begin{equation} \label{eq:p-subs-a}
				\left( \bsu_{2,1} -\bsu_{0,1}\right)^{-1} =  (\alpha-\beta)^{-1}  \left(\bsu_{2,0}-\bsu_{1,1}\right) \left(\bsu_{2,0}-\bsu_{0,0}\right)^{-1} \left(\bsu_{0,0} -\bsu_{1,1}\right).
			\end{equation}
			Shifting the latter relation backward in $n$ we get that
			\begin{equation}  \label{eq:p-subs-b}
				\left( \bsu_{1,1} -\bsu_{-1,1}\right)^{-1} =  (\alpha-\beta)^{-1}  \left(\bsu_{1,0}-\bsu_{0,1}\right) \left(\bsu_{1,0}-\bsu_{-1,0}\right)^{-1} \left(\bsu_{-1,0} -\bsu_{0,1}\right).
			\end{equation}
		\end{subequations}
		Substituting \eqref{eq:p-subs} into \eqref{eq:pKdV-det-eq} and taking into account \eqref{eq:nonAbel-pKdV}, equation \eqref{eq:pKdV-det-eq} becomes an identity.
		
		To show that \eqref{eq:mast-sym-pKdV} is a symmetry of \eqref{eq:nonAbel-pKdV}, first we rewrite it as
		\begin{equation}\label{eq:mast-sym-pKdV-v2}
			\frac{{\rm{d}} {\bs{u}}_{0,0}}{{\rm{d}} x} =  n \frac{{\rm{d}} {\bs{u}}_{0,0}}{{\rm{d}} t_1}, ~~~ 	\frac{{\rm{d}} \alpha}{{\rm{d}} x} = -1.
		\end{equation}
		Then we differentiate the potential KdV equation with respect to $x$ which leads  to
		$$ n \frac{{\rm{d}} {\bs{Q}}}{{\rm{d}} t_1} - \left(\bsu_{2,1}- \bsu_{0,1}\right)^{-1} \left(\bsu_{1,0}- \bsu_{0,1}\right) + \left(\bsu_{0,0}- \bsu_{1,1}\right) \left(\bsu_{2,0}- \bsu_{0,0}\right)^{-1} +1 = 0, $$
		where ${\bs{Q}}$ denotes the left hand side of \eqref{eq:nonAbel-pKdV}. Taking into account that the first term vanishes modulo \eqref{eq:nonAbel-pKdV}, relations \eqref{eq:p-subs} and the potential KdV equation \eqref{eq:nonAbel-pKdV} turn the above relation into an identity.
	\end{proof}
	
	\begin{remark}
		Another symmetry of the potential KdV is generated by
		\begin{equation} \label{eq:sym1a-pKdV}
			\frac{{\rm{d}} {\bs{u}}_{0,0}}{{\rm{d}} \epsilon} =  n \left({\bs{u}}_{1,0}- {\bs{u}}_{-1,0}\right)^{-1} + \frac{1}{2(\alpha-\beta)} \bsu_{0,0}.
		\end{equation}
		This symmetry, however, cannot be derived using Lax pair \eqref{eq:Lax-pKdV}.
	\end{remark}

	Working in the same way or using the invariance of \eqref{eq:nonAbel-pKdV} under the interchanges $\bsu_{i,j} \leftrightarrow \bsu_{j,i}$, $\alpha \leftrightarrow \beta$, we can find symmetries in the other direction. More precisely, we can prove the following.
	\begin{proposition}
		Equation \eqref{eq:nonAbel-pKdV} admits three generalised symmetries in the second direction given by 
		\begin{equation} \label{eq:sym2-pKdV}
			\frac{{\rm{d}} {\bs{u}}_{0,0}}{{\rm{d}} s_1} =  \left({\bs{u}}_{0,1} -{\bs{u}}_{0,-1}\right)^{-1},
		\end{equation}
		\begin{equation} \label{eq:mast-sym-2-pKdV}
			\frac{{\rm{d}} {\bs{u}}_{0,0}}{{\rm{d}} y} =  m \left({\bs{u}}_{0,1} -{\bs{u}}_{0,-1}\right)^{-1},~~~ \frac{{\rm{d}} \beta}{{\rm{d}} y} = -1,
		\end{equation}
		and
		\begin{equation} \label{eq:sym2a-pKdV}
			\frac{{\rm{d}} {\bs{u}}_{0,0}}{{\rm{d}} \varepsilon} =  m \left({\bs{u}}_{0,1} -{\bs{u}}_{0,-1}\right)^{-1} - \frac{1}{2 \left(\alpha-\beta\right)} \bsu_{0,0},
		\end{equation}
		respectively.
	\end{proposition}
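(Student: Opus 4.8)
The plan is to avoid repeating the Lax-pair computation and instead transport the three first-direction symmetries across a discrete symmetry of the equation. Consider the map $\sigma$ defined on generators by $\sigma(\bsu_{i,j}) = \bsu_{j,i}$ together with $\sigma(\alpha) = \beta$ and $\sigma(\beta) = \alpha$. Since $\sigma$ merely relabels the generators and swaps the two central parameters, it extends to an $\mathbb{F}$-algebra automorphism of $\mathfrak{U}$ that preserves the order of products (it is an automorphism, not a transposition), and it interchanges the two shift operators, $\sigma {\cal{S}} = {\cal{T}} \sigma$ and $\sigma {\cal{T}} = {\cal{S}} \sigma$. First I would verify that $\sigma$ leaves \eqref{eq:nonAbel-pKdV} invariant: applying $\sigma$ sends the left-hand side to $(\bsu_{0,0} - \bsu_{1,1})(\bsu_{0,1} - \bsu_{1,0})$ and the right-hand side to $\beta - \alpha$, and both pick up an overall minus sign, so the relation is reproduced.

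The second step is the general principle that a discrete automorphism of a difference equation transports its generalised symmetries. If $\partial_\bsF$ is an evolutionary derivation compatible with \eqref{eq:nonAbel-pKdV}, then the conjugate $\sigma \circ \partial_\bsF \circ \sigma^{-1}$ is again a derivation; because $\sigma$ interchanges ${\cal{S}}$ and ${\cal{T}}$, this conjugate still commutes with both shifts and is therefore evolutionary, and its compatibility with \eqref{eq:nonAbel-pKdV} follows from the $\sigma$-invariance of the equation established above. In particular, $\sigma$ sends a symmetry acting along the first lattice direction to one acting along the second.

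Finally I would compute the images. Under $\sigma$ one has $(\bsu_{1,0} - \bsu_{-1,0})^{-1} \mapsto (\bsu_{0,1} - \bsu_{0,-1})^{-1}$, the counting variable $n$ is replaced by $m$, and $\alpha - \beta$ is replaced by $\beta - \alpha$. Thus \eqref{eq:sym1-pKdV} maps to \eqref{eq:sym2-pKdV}, and \eqref{eq:mast-sym-pKdV} maps to \eqref{eq:mast-sym-2-pKdV}, with $\frac{{\rm{d}} \alpha}{{\rm{d}} x} = -1$ becoming $\frac{{\rm{d}} \beta}{{\rm{d}} y} = -1$. Applying $\sigma$ to \eqref{eq:sym1a-pKdV} produces $m(\bsu_{0,1} - \bsu_{0,-1})^{-1} + \frac{1}{2(\beta - \alpha)} \bsu_{0,0}$, and the sign flip $\beta - \alpha = -(\alpha - \beta)$ accounts exactly for the opposite sign of the last term in \eqref{eq:sym2a-pKdV}.

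The part requiring the most care is not the bookkeeping but the claim in the second step, namely that conjugation by $\sigma$ preserves compatibility in the noncommutative setting. The key point is that $\sigma$ is order-preserving, so it commutes with the formation of products and inverses and does not disturb the noncommutative structure of the defining relations; consequently the first-direction verification (differentiation of \eqref{eq:nonAbel-pKdV} together with the forward/backward substitutions \eqref{eq:p-subs}) is carried over verbatim. As an alternative that sidesteps this transport argument, one could verify \eqref{eq:sym2-pKdV}--\eqref{eq:sym2a-pKdV} directly by differentiating \eqref{eq:nonAbel-pKdV} along $s_1$, $y$ and $\varepsilon$ and reducing the result to an identity using the analogues of \eqref{eq:p-subs} in the $m$-direction; this is the only route available for \eqref{eq:sym2a-pKdV} if one does not assume its first-direction counterpart \eqref{eq:sym1a-pKdV}, since that symmetry is not obtainable from the Lax pair.
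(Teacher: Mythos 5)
Your proposal is correct and takes essentially the same approach as the paper, which establishes this proposition precisely by invoking the invariance of \eqref{eq:nonAbel-pKdV} under the interchanges $\bsu_{i,j} \leftrightarrow \bsu_{j,i}$, $\alpha \leftrightarrow \beta$ (your automorphism $\sigma$), with ``working in the same way'' (direct verification) as the stated alternative. Your conjugation argument merely makes explicit the transport principle the paper leaves implicit, including the sign flip $\beta-\alpha = -(\alpha-\beta)$ that yields the last term of \eqref{eq:sym2a-pKdV}, and your observation that \eqref{eq:sym2a-pKdV} ultimately rests on the first-direction counterpart \eqref{eq:sym1a-pKdV} (itself not derivable from the Lax pair) matches the paper's own treatment.
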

	
	In what follows, we employ these symmetries to reduce equation \eqref{eq:nonAbel-pKdV} to a noncommutative discrete Painlevé equation and to relate some of its solutions to a system of partial differential equations..

	\subsection{Symmetry reduction and noncommutative maps} \label{sec:red-KdV}

	Equation \eqref{eq:nonAbel-pKdV} admits the five-point generalised symmetry
	$$ \bsF_{0,0} = (n + \mu_1)  \left({\bs{u}}_{1,0} -{\bs{u}}_{-1,0}\right)^{-1} + (m + \mu_2) \left({\bs{u}}_{0,1} -{\bs{u}}_{0,-1}\right)^{-1} + \lambda_1 + \lambda_2 (-1)^{n+m}.$$
	To find solutions of \eqref{eq:nonAbel-pKdV} satisfying the constraint $\bsF_{0,0} = 0$, we introduce the functions
	\begin{equation} \label{eq:invariants-add}
		{\bs{x}}_{0,0} = \bsu_{1,0} - \bsu_{0,0} , ~~  {\bs{y}}_{0,0} = \bsu_{0,1} - \bsu_{0,0},
	\end{equation}
	which satisfy the compatibility condition
	\begin{equation}\label{eq:conne-inv-add}
		{\bs{x}}_{0,1} +  {\bs{y}}_{0,0} = {\bs{y}}_{1,0} + {\bs{x}}_{0,0}.
	\end{equation}
	Using \eqref{eq:invariants-add}, their shifts, and the relation \eqref{eq:conne-inv-add}, we can rewrite equation \eqref{eq:nonAbel-pKdV} as a system for the functions $\bsx$ and $\bsy$ in either of the following equivalent forms,
	\begin{subequations} \label{eq:inv-sys-add}
		\begin{equation}\label{eq:inv-sys-A-add}
			{\bs{x}}_{0,1} = - {\bs{y}}_{0,0} - r \left({\bs{x}}_{0,0}- \bsy_{0,0}\right)^{-1}, ~~~ {\bs{y}}_{1,0} = - {\bs{x}}_{0,0} - r \left({\bs{x}}_{0,0}- \bsy_{0,0}\right)^{-1},
		\end{equation}
		or
		\begin{equation}\label{eq:inv-sys-B-add}
			{\bs{x}}_{0,0} = {\bs{y}}_{0,0} - r \left({\bs{x}}_{0,1}+ \bsy_{0,0}\right)^{-1}, ~~~ {\bs{y}}_{1,0} ={\bs{x}}_{0,1} + r \left({\bs{x}}_{0,1}+ \bsy_{0,0}\right)^{-1},
		\end{equation}
		where $r := \alpha-\beta$. In the same fashion, the symmetry constraint $\bsF_{0,0} = 0$ becomes
		\begin{equation} \label{eq:inv-sys-C-add}
			(n + \mu_1) (\bsx_{0,0}+ \bsx_{-1,0})^{-1} + (m+\mu_2) (\bsy_{0,0} + \bsy_{0,-1})^{-1} + \lambda_1 + \lambda_2 (-1)^{n+m} =0.
		\end{equation}
	\end{subequations}
	
	\begin{center}
		\begin{figure}[th]
			\centertexdraw{ \setunitscale 0.85
				\linewd 0.02 \arrowheadtype t:F 
				\htext(0 0.5) {\phantom{T}}
				\move (-.5 0) \lvec(3.5 0)
				\move (-.5 -1) \lvec(3.5 -1)
				\move (0 -1.5) \lvec (0 1.5) 
				\move (1 -1.5) \lvec (1 1.5)
				\move (-.5 1) \lvec (3.5 1)
				\move (2 -1.5) \lvec (2 1.5)
				\move (3 -1.5) \lvec (3 1.5)
				\move (1 -1) \fcir f:0.0 r:0.07 \move (2 -1) \fcir f:0.0 r:0.07
				\move (1 0) \fcir f:0.0 r:0.07 \move (0 0) \fcir f:0.0 r:0.07 \move (2 0) \fcir f:0.0 r:0.07
				\move (1 1) \fcir f:0.0 r:0.07 \move (3 0) \fcir f:0.0 r:0.07 \move (2 1) \fcir f:0.0 r:0.07
				\htext (-.5 -.27) {{\color[rgb]{.7,0,.1}{\footnotesize{$\bsu_{-1,0}$}}}} \htext (1.04 -.27) {{\color[rgb]{.7,0,.1}{\footnotesize{$\bsu_{0,0}$}}}}
				\htext (1.04 -1.27) {{\color[rgb]{.7,0,.1}{\footnotesize{$\bsu_{0,-1}$}}}}
				\htext (2.04 -.27) {{\color[rgb]{.7,0,.1}{\footnotesize{$\bsu_{1,0}$}}}}  
				\htext (2.04 -1.27) {{\color[rgb]{.7,0,.1}{\footnotesize{$\bsu_{1,-1}$}}}}
				\htext (3.04 -.27) {{\color[rgb]{.7,0,.1}{\footnotesize{$\bsu_{2,0}$}}}}
				\htext (1.05 1.1) {{\color[rgb]{.7,0,.1}{\footnotesize{$\bsu_{0,1}$}}}}		\htext (2.05 1.1) {{\color[rgb]{.7,0,.1}{\footnotesize{$\bsu_{1,1}$}}}}	
				\htext (.4 -.2) {{\color[rgb]{.1,0,.7}{\footnotesize{$\bsx_{n}$}}}} \htext (1.04 .4) {{\color[rgb]{.1,0,.7}{\footnotesize{$\bsy_{n}$}}}}
				\htext (2.04 .4) {{\color[rgb]{.1,0,.7}{\footnotesize{$\bsy_{n+1}$}}}}
				\htext (1.4 -.2) {{\color[rgb]{.1,0,.7}{\footnotesize{$\bsz_{n}$}}}}  \htext (2.4 -.2) {{\color[rgb]{.1,0,.7}{\footnotesize{$\bsz_{n+1}$}}}}
				\htext (1.14 -.5) {{\color[rgb]{.1,0.7,0}{\footnotesize{$\bsx_{n+1} = \bsz_{n}$}}}}
			}
			\caption{Variables $\bsx_n = \bsu_{0,0}-\bsu_{-1,0}$, $\bsy_n= \bsu_{0,1}-\bsu_{0,0}$, $\bsz_n = \bsu_{1,0}-\bsu_{0,0}$ on the $\mathbb{Z}^2$ lattice} \label{fig:lattice}
		\end{figure}
	\end{center}
	
If the values of $\bsx_{-1,0}$, $\bsy_{0,0}$, and $\bsx_{0,0}$ are known, then the corresponding shifts in the first direction can be determined. Defining $\bsx_n := \bsx_{-1,0}$, $\bsy_n := \bsy_{0,0}$, and $\bsz_n := \bsx_{0,0}$ (see Figure~\ref{fig:lattice}), we can compute the updated variables $\bsx_{n+1}$, $\bsy_{n+1}$, and $\bsz_{n+1}$ by using the relations \eqref{eq:inv-sys-add} and their shifts.

In particular, from the identification $\bsx_{n+1} = \bsz_n$ and the second equation in \eqref{eq:inv-sys-A-add}, we obtain
	\begin{equation} \label{eq:inv-sys-D-add}
		{\bs{x}}_{n+1} = {\bs{z}}_n, ~~~ {\bs{y}}_{n+1} = -{\bs{z}}_n -r ({\bs{z}}_{n}-{\bs{y}}_{n})^{-1}.
	\end{equation}
	To determine $\bsz_{n+1}$, we first rewrite the symmetry constraint \eqref{eq:inv-sys-C-add} as
	\begin{equation} \label{eq:inv-sys-D1-add}
		(n + \mu_1) (\bsz_n+ \bsx_n)^{-1} + (m+\mu_2) (\bsy_n + \bsy_{0,-1})^{-1} + \lambda_1 + \lambda_2 (-1)^{n+m} =0,
	\end{equation}
	and then consider its forward shift in $n$, 
	\begin{equation} \label{eq:inv-sys-D2-add}
		(n +1 + \mu_1) (\bsz_{n+1}+ \bsz_n)^{-1} + (m+\mu_2) (\bsy_{n+1} + \bsy_{1,-1})^{-1} + \lambda_1 - \lambda_2 (-1)^{n+m} =0.
	\end{equation}
	The term $\bsy_{n+1} + \bsy_{1,-1}$ can be expressed using the backward shift in $m$ of the second equation in \eqref{eq:inv-sys-B-add}, i.e. 	${\bs{y}}_{1,-1} ={\bs{z}}_n + r \left({\bs{z}}_{n}+ \bsy_{0,-1}\right)^{-1}$, combined with the second equation of \eqref{eq:inv-sys-D-add}. This yields the identity 
	$$ \left(\bsy_{n+1} + \bsy_{1,-1}\right)^{-1} = r^{-1}  (\bsz_n+\bsy_{0,-1}) (\bsy_n + \bsy_{0,-1})^{-1} (\bsy_n-\bsz_n). $$
	Substituting into \eqref{eq:inv-sys-D2-add}, and using \eqref{eq:inv-sys-D1-add} to eliminate $\bsy_{0,-1}$, we derive
	\begin{equation} \label{eq:inv-sys-D3-add}
		r (n+1+\mu_1) ({\bs{z}}_{n+1}+{\bs{z}}_n)^{-1} + ({\bs{z}}_n-\bsy_n) {\bs{A}}  ({\bs{z}}_n-\bsy_n) - (m+\mu_2)  ({\bs{z}}_n-\bsy_n) + r (\lambda_1 - \lambda_2 (-1)^{n+m}) =0,
	\end{equation}
	where ${\bs{A}} = (n+ \mu_1) ({\bs{z}}_n + \bsx_n)^{-1} + \lambda_1 + \lambda_2 (-1)^{n+m}$. Rearranging this equation for $\bsz_{n+1}$, together with \eqref{eq:inv-sys-D-add}, defines a three-dimensional map for the variables $\bsx_n$, $\bsy_n$, and $\bsz_n$.
	
	This map is related to a discrete Painlev\`e-type equation for the function $\bsY_n := \bsy_{n+1} + \bsz_n$. In particular, the quantity $\bsz_n - \bsy_n$ can be expressed in terms of $\bsY_n$ via the second equation in \eqref{eq:inv-sys-D-add}. Moreover, from \eqref{eq:inv-sys-D-add} we obtain
	$$(\bsz_{n+1}+\bsz_n)^{-1} = \left(\bsY_{n+1} \bsY_n -r \right)^{-1} \bsY_{n+1}, ~~~ (\bsz_{n}+\bsx_{n})^{-1} = \bsY_n \left(\bsY_{n-1} \bsY_n -r \right)^{-1}.$$
	Substituting these expressions into \eqref{eq:inv-sys-D3-add} and simplifying yields the following noncommutative difference equation
	\begin{equation} \label{eq:dPII}
		r f_{n+1} \left( \bsY_{n+1} \bsY_n - r \right)^{-1} + r f_{n} \left( \bsY_{n-1} \bsY_n - r \right)^{-1}  + r \lambda_{n+m} \bsY_n^{-1} + \lambda_{n+m+1} \bsY_n + g_m + f_{n+1} = 0,
	\end{equation}
	where $f_n = n + \mu_1$, $g_m = m + \mu_2$, and $\lambda_k = \lambda_1 + \lambda_2 (-1)^k$.  We refer to this equation as the noncommutative asymmetric alternate discrete Painlev\`e II equation, as it reduces to the asymmetric alternate discrete Painlev\`e II equation \cite{FGR} when the variable $\bsY_n$ is assumed to be a commuting variable.

	\subsection{Continuous symmetric reductions and a system of differential equations} \label{sec:rpde}
	
	Solutions of the discrete potential KdV equation generally depend on the parameters $\alpha$ and $\beta$. In this section, we focus on a special class of solutions of \eqref{eq:nonAbel-pKdV} that depends on these parameters and remain invariant under both master symmetries given in \eqref{eq:mast-sym-pKdV} and \eqref{eq:mast-sym-2-pKdV}. We refer to such solutions as continuously symmetric solutions, as they satisfy the following symmetry constraints in addition to equation \eqref{eq:nonAbel-pKdV}.
	\begin{equation}\label{eq:con-const}
		\frac{\partial {\bs{u}}_{0,0}}{\partial \alpha} =  -n \left({\bs{u}}_{1,0} -{\bs{u}}_{-1,0}\right)^{-1}, ~~~ 	\frac{\partial {\bs{u}}_{0,0}}{\partial \beta} =  -m \left({\bs{u}}_{0,1} -{\bs{u}}_{0,-1}\right)^{-1}
	\end{equation}
	Since this system involves six shifts of the field $\bsu$, we may eliminate three of them to obtain a closed system for the remaining variables. In what follows, we choose to eliminate $\bsu_{1,1}$, $\bsu_{-1,0}$, and $\bsu_{0,-1}$, thereby reducing the system to one involving only $\bsu_{0,0}$, $\bsu_{1,0}$, and $\bsu_{0,1}$. The elimination can be carried out systematically and the full derivation is given in the Appendix.
	
	To simplify notation, we denote the remaining variables by $\bsu := \bsu_{0,0}$, $\bsu_1 := \bsu_{1,0}$, and $\bsu_2 := \bsu_{0,1}$. In terms of these variables, we now state the following proposition.
	\begin{proposition}
		System 
		\begin{subequations} \label{eq:rpde}
			\begin{eqnarray}
				\frac{\partial {\bs{u}}_{1}}{\partial \beta} &=& \frac{1}{\alpha-\beta} \left({\bs{u}}_{1} -{\bs{u}}_{2}\right) \left(m - \frac{\partial {\bs{u}}}{\partial \beta}  \left( {\bs{u}}_{1} -{\bs{u}}_{2} \right)\right), \label{eq:rpde-1} \\
				\frac{\partial {\bs{u}}_{2}}{\partial \alpha} &=&\frac{1}{\alpha-\beta} \left({\bs{u}}_{1} -{\bs{u}}_{2}\right) \left(n+ \frac{\partial {\bs{u}}}{\partial \alpha}  \left( {\bs{u}}_{1} -{\bs{u}}_{2} \right)\right),  \label{eq:rpde-2}\\
				\frac{\partial^2 {\bs{u}}}{\partial \alpha \partial \beta} &=& \frac{1}{\alpha-\beta} \left(  \frac{\partial {\bs{u}}}{\partial \alpha} \left({\bs{u}}_{1} -{\bs{u}}_{2}\right)  \frac{\partial {\bs{u}}}{\partial \beta} + \frac{\partial {\bs{u}}}{\partial \beta} \left({\bs{u}}_{1} -{\bs{u}}_{2}\right)  \frac{\partial {\bs{u}}}{\partial \alpha} + n \frac{\partial {\bs{u}}}{\partial \beta} -  m \frac{\partial {\bs{u}}}{\partial \alpha} \right), \label{eq:rpde-3}
			\end{eqnarray}
		\end{subequations}
		determines the continuously symmetric solutions of \eqref{eq:nonAbel-pKdV}.
	\end{proposition}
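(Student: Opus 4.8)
The plan is to treat the two symmetry constraints \eqref{eq:con-const} together with the lattice equation \eqref{eq:nonAbel-pKdV} as a system on the shifts $\bsu,\bsu_1,\bsu_2,\bsu_{1,1},\bsu_{-1,0},\bsu_{0,-1}$ and to eliminate the last three. First I would solve the constraints \eqref{eq:con-const} algebraically for the backward shifts,
\begin{equation*}
\bsu_{-1,0} = \bsu_1 + n\left(\frac{\partial\bsu}{\partial\alpha}\right)^{-1}, \qquad \bsu_{0,-1} = \bsu_2 + m\left(\frac{\partial\bsu}{\partial\beta}\right)^{-1},
\end{equation*}
and solve \eqref{eq:nonAbel-pKdV} for the diagonal shift, $\bsu_{1,1} = \bsu - (\alpha-\beta)(\bsu_1-\bsu_2)^{-1}$. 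These three relations express the eliminated variables through $\bsu,\bsu_1,\bsu_2$ and the first derivatives $\partial_\alpha\bsu,\partial_\beta\bsu$, so that the outcome will be a closed (hence determined) system in the three remaining fields.

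To obtain \eqref{eq:rpde-1} I would apply the second constraint at the shifted point $(1,0)$, that is $\partial_\beta\bsu_1 = -m(\bsu_{1,1}-\bsu_{1,-1})^{-1}$, replacing $\bsu_{1,1}$ by the relation above and $\bsu_{1,-1}$ by the value read off from \eqref{eq:nonAbel-pKdV} evaluated at $(0,-1)$ (which reintroduces $\bsu_{0,-1}$, itself already eliminated). Collapsing the resulting sum of two inverses by the resolvent identity $A^{-1}+B^{-1}=A^{-1}(A+B)B^{-1}$, and using the centrality of $n,m,\alpha,\beta$, produces exactly the right-hand side of \eqref{eq:rpde-1}. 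Equation \eqref{eq:rpde-2} then follows either by repeating this computation with the two lattice directions swapped, or, more economically, by invoking the invariance of \eqref{eq:nonAbel-pKdV} under $\bsu_{i,j}\leftrightarrow\bsu_{j,i}$, $\alpha\leftrightarrow\beta$, $n\leftrightarrow m$, under which \eqref{eq:rpde-1} is carried onto \eqref{eq:rpde-2}.

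For the second-order equation \eqref{eq:rpde-3} I would differentiate the first constraint in \eqref{eq:con-const} with respect to $\beta$, using $\partial(X^{-1})=-X^{-1}(\partial X)X^{-1}$ and then re-using that same constraint to replace $(\bsu_1-\bsu_{-1,0})^{-1}$ by $-\tfrac1n\partial_\alpha\bsu$; this yields
\begin{equation*}
\frac{\partial^2\bsu}{\partial\alpha\partial\beta} = \frac1n\,\frac{\partial\bsu}{\partial\alpha}\left(\frac{\partial\bsu_1}{\partial\beta} - \frac{\partial\bsu_{-1,0}}{\partial\beta}\right)\frac{\partial\bsu}{\partial\alpha}.
\end{equation*}
Here $\partial_\beta\bsu_1$ is already known from \eqref{eq:rpde-1}, while $\partial_\beta\bsu_{-1,0}$ is computed from the second constraint at $(-1,0)$, with $\bsu_{-1,1}$ and $\bsu_{-1,-1}$ expressed through \eqref{eq:nonAbel-pKdV} evaluated at $(-1,0)$ and $(-1,-1)$ respectively. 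After one further application of the resolvent identity the two inner terms combine, and conjugating by $\partial_\alpha\bsu$ cancels all the $q^{-1}$ factors, leaving precisely the bilinear expression on the right of \eqref{eq:rpde-3}.

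The hard part is purely computational: the second-order equation forces one to carry along the off-diagonal shifts $\bsu_{1,-1},\bsu_{-1,1},\bsu_{-1,-1}$, each eliminated through a different copy of \eqref{eq:nonAbel-pKdV}, and every cancellation must respect the ordering of the noncommuting factors. The delicate point is that the manifestly asymmetric intermediate expressions must collapse into the symmetric combination $\partial_\alpha\bsu\,(\bsu_1-\bsu_2)\,\partial_\beta\bsu + \partial_\beta\bsu\,(\bsu_1-\bsu_2)\,\partial_\alpha\bsu$; verifying that no commutativity is tacitly assumed at any of these steps, and that it is exactly the centrality of the discrete and spectral parameters that permits the final collapse, is where the care is required.
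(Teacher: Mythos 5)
Your proposal is correct and follows essentially the same route as the paper's appendix derivation: the first-order equations come from evaluating the symmetry constraints \eqref{eq:con-const} at shifted lattice points and collapsing the resulting sums of inverses via lattice-equation identities (your resolvent identity is exactly what underlies the paper's relations \eqref{eq:p-subs-a}--\eqref{eq:p-subs-b}), and \eqref{eq:rpde-3} is obtained by differentiating the first constraint in $\beta$ and substituting \eqref{eq:rpde-1} together with its backward shift, then conjugating by $\partial_\alpha\bsu$. The only cosmetic difference is that you derive \eqref{eq:rpde-1} explicitly and get \eqref{eq:rpde-2} from the $\bsu_{i,j}\leftrightarrow\bsu_{j,i}$, $\alpha\leftrightarrow\beta$, $n\leftrightarrow m$ invariance, whereas the paper does the mirror image.
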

	
	This is an integrable system in the sense  that it admits a Lax pair and an auto-B\"acklund transformation.
	
	\begin{proposition}
		A Lax pair for system \eqref{eq:rpde} is given by
		\begin{equation} \label{eq:Lax-rpde}
			\frac{\partial {\bs{\Psi}}}{\partial \alpha} = \frac{1}{\alpha-\lambda} \begin{pmatrix}
				-\bsu_1 \frac{\partial {\bs{u}}}{\partial \alpha} &  \bsu_1 \frac{\partial {\bs{u}}}{\partial \alpha} \bsu_1 + n \bsu_1 \\
				-\frac{\partial {\bs{u}}}{\partial \alpha} &  \frac{\partial {\bs{u}}}{\partial \alpha} \bsu_1 + n
			\end{pmatrix} {\bs{\Psi}}, ~~~ \frac{\partial {\bs{\Psi}}}{\partial \beta} = \frac{1}{\beta-\lambda} \begin{pmatrix}
				-\bsu_2 \frac{\partial {\bs{u}}}{\partial \beta} &  \bsu_2 \frac{\partial {\bs{u}}}{\partial \beta} \bsu_2 + m \bsu_2 \\
				-\frac{\partial {\bs{u}}}{\partial \beta} &  \frac{\partial {\bs{u}}}{\partial \beta} \bsu_2 + m
			\end{pmatrix} {\bs{\Psi}}.
		\end{equation}
		Moreover, system
		\begin{subequations} \label{eq:BT-rpde}
			\begin{eqnarray}
				&& \frac{\partial \tilde{\bsu}}{\partial \alpha} = \frac{1}{\alpha - \lambda} ({\bsu}_1- \tilde{\bsu}) \left(n + \frac{\partial \bsu}{\partial \alpha} ({\bsu}_1- \tilde{\bsu})\right)	\label{eq:BT-rpde-1} \\
				&& \frac{\partial \tilde{\bsu}}{\partial \beta} = \frac{1}{\beta - \lambda} ({\bsu}_2- \tilde{\bsu}) \left(m + \frac{\partial \bsu}{\partial \beta} ({\bsu}_2- \tilde{\bsu})\right)	\label{eq:BT-rpde-2} \\
				&& \left({\bs{u}} - \tilde{\bs{u}}_{1}\right) \left( {\bs{u}}_{1}-\tilde{\bs{u}}\right) = \alpha - \lambda,  \label{eq:BT-rpde-3} \\
				&& 	\left({\bs{u}} - \tilde{\bs{u}}_{2}\right) \left( {\bs{u}}_{2}-\tilde{\bs{u}}\right) = \beta - \lambda, 	\label{eq:BT-rpde-4}
			\end{eqnarray}
		\end{subequations}
		is an auto-B\"acklund transformation of system \eqref{eq:rpde} and is invariant under the interchange of variables $ (\bsu,\bsu_1,\bsu_2)  \leftrightarrow (\tilde{\bsu},\tilde{\bsu}_1,\tilde{\bsu}_2)$.
	\end{proposition}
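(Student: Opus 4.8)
The plan is to treat the two claims separately, each by direct verification that exploits the rational dependence on the spectral parameter $\lambda$.

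For the Lax pair, I write the two matrices in \eqref{eq:Lax-rpde} as $\partial_\alpha {\bs{\Psi}} = {\bs{U}} {\bs{\Psi}}$ and $\partial_\beta {\bs{\Psi}} = {\bs{V}} {\bs{\Psi}}$, with ${\bs{U}} = (\alpha-\lambda)^{-1} \hat{\bs{U}}$ and ${\bs{V}} = (\beta-\lambda)^{-1} \hat{\bs{V}}$, where $\hat{\bs{U}}$ and $\hat{\bs{V}}$ are independent of $\lambda$ and the scalars $\alpha-\lambda$, $\beta-\lambda$ are central. The solvability of this overdetermined linear system is equivalent to the zero-curvature condition $\partial_\beta {\bs{U}} - \partial_\alpha {\bs{V}} + {\bs{U}}{\bs{V}} - {\bs{V}}{\bs{U}} = 0$. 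Since $\hat{\bs{U}}$ depends on $\beta$ only through $\bsu_1$ and $\partial_\alpha \bsu$, while $\hat{\bs{V}}$ depends on $\alpha$ only through $\bsu_2$ and $\partial_\beta \bsu$, multiplying the zero-curvature condition by the central factor $(\alpha-\lambda)(\beta-\lambda)$ yields an expression that is affine in $\lambda$,
$$(\beta-\lambda)\, \partial_\beta \hat{\bs{U}} - (\alpha-\lambda)\, \partial_\alpha \hat{\bs{V}} + [\hat{\bs{U}},\hat{\bs{V}}] = 0.$$
I then separate the identity by powers of $\lambda$. The coefficient of $\lambda$ gives $\partial_\beta \hat{\bs{U}} = \partial_\alpha \hat{\bs{V}}$, whose $(2,1)$ entry is the trivial equality of the mixed partials of $\bsu$, and whose $(2,2)$ entry reproduces \eqref{eq:rpde-3} once \eqref{eq:rpde-1} and \eqref{eq:rpde-2} are used to substitute $\partial_\beta \bsu_1$ and $\partial_\alpha \bsu_2$, the remaining entries being consequences of these. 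The $\lambda$-independent part, $\beta\, \partial_\beta \hat{\bs{U}} - \alpha\, \partial_\alpha \hat{\bs{V}} + [\hat{\bs{U}},\hat{\bs{V}}] = 0$, is then checked entrywise against the same three equations of \eqref{eq:rpde}.

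For the auto-B\"acklund transformation, I first establish the internal consistency of the overdetermined system \eqref{eq:BT-rpde}. The relations \eqref{eq:BT-rpde-1} and \eqref{eq:BT-rpde-2} prescribe $\partial_\alpha \tilde{\bsu}$ and $\partial_\beta \tilde{\bsu}$, and cross-differentiating them must return $\partial_\beta \partial_\alpha \tilde{\bsu} = \partial_\alpha \partial_\beta \tilde{\bsu}$ modulo \eqref{eq:rpde} and the algebraic relations \eqref{eq:BT-rpde-3}--\eqref{eq:BT-rpde-4}; I also differentiate \eqref{eq:BT-rpde-3}--\eqref{eq:BT-rpde-4} in $\alpha$ and $\beta$ and match against the appropriate shifts of \eqref{eq:BT-rpde-1}--\eqref{eq:BT-rpde-2}. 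I then show that the transformed triple $(\tilde{\bsu},\tilde{\bsu}_1,\tilde{\bsu}_2)$ again satisfies \eqref{eq:rpde}: the derivatives $\partial_\beta \tilde{\bsu}_1$, $\partial_\alpha \tilde{\bsu}_2$ and $\partial_\alpha \partial_\beta \tilde{\bsu}$ required for the tilded copy of \eqref{eq:rpde} are obtained by differentiating and shifting the four relations of \eqref{eq:BT-rpde}, and eliminating the untilded quantities through \eqref{eq:BT-rpde-3}--\eqref{eq:BT-rpde-4} should collapse them to the tilded system.

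Finally, the invariance under $(\bsu,\bsu_1,\bsu_2) \leftrightarrow (\tilde{\bsu},\tilde{\bsu}_1,\tilde{\bsu}_2)$ is immediate for the algebraic part: since $\alpha-\lambda$ and $\beta-\lambda$ are central and nonzero, an identity ${\bs{P}}{\bs{R}} = \alpha-\lambda$ in the division ring forces ${\bs{R}} = {\bs{P}}^{-1}(\alpha-\lambda)$ and hence ${\bs{R}}{\bs{P}} = \alpha-\lambda$, so \eqref{eq:BT-rpde-3} and \eqref{eq:BT-rpde-4} are unchanged after the swap. For \eqref{eq:BT-rpde-1}--\eqref{eq:BT-rpde-2}, applying the interchange and then using \eqref{eq:BT-rpde-3}--\eqref{eq:BT-rpde-4} to rewrite the differences $\tilde{\bsu}_i - \bsu$ in terms of $\bsu_i - \tilde{\bsu}$ returns the original relations. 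Throughout, the main obstacle is the bookkeeping of noncommutative orderings, since every inversion and substitution must preserve the order of the factors; the step most likely to demand care is showing that the transformed triple solves \eqref{eq:rpde}, as it couples all four B\"acklund relations with the original system at once.
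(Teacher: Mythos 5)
Your Lax-pair verification is essentially the paper's own argument: both of you clear the central factors $(\alpha-\lambda)(\beta-\lambda)$ from the zero-curvature condition, split the result into the $\lambda$-linear part $\partial_\beta {\bs{A}} = \partial_\alpha {\bs{B}}$ and the $\lambda$-free part $\beta\,\partial_\beta {\bs{A}} - \alpha\,\partial_\alpha {\bs{B}} + [{\bs{A}},{\bs{B}}] = 0$, and then identify matrix entries with \eqref{eq:rpde}; the only difference is the allocation of which entry delivers which equation (the paper reads \eqref{eq:rpde-3} off the $(2,1)$ entry of the $\lambda$-free equation and then obtains \eqref{eq:rpde-1}--\eqref{eq:rpde-2} from the $\lambda$-linear one, whereas you substitute \eqref{eq:rpde-1}--\eqref{eq:rpde-2} into the $(2,2)$ entry of the $\lambda$-linear equation to recover \eqref{eq:rpde-3}; both are valid). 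Where you genuinely diverge is the auto-B\"acklund part. The paper never verifies \eqref{eq:BT-rpde} head-on: it \emph{derives} \eqref{eq:BT-rpde-1}--\eqref{eq:BT-rpde-2} from the discrete B\"acklund transformation \eqref{eq:BT-dpKdV} of \eqref{eq:nonAbel-pKdV} (a by-product of multidimensional consistency) combined with the symmetry constraints \eqref{eq:con-const} and their tilded copies \eqref{eq:tilde-con-const}, and the single analytic computation it performs is that cross-differentiation of \eqref{eq:BT-rpde-1}--\eqref{eq:BT-rpde-2} produces a quadratic polynomial in $\tilde{\bsu}$ whose coefficients are exactly the three equations of \eqref{eq:rpde}. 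In that construction $\bsu$ and $\tilde{\bsu}$ enter symmetrically as continuously symmetric solutions of the same discrete equation, so the auto-property and the swap invariance are inherited rather than checked. Your plan does everything intrinsically at the PDE level: Frobenius compatibility of \eqref{eq:BT-rpde-1}--\eqref{eq:BT-rpde-2} modulo \eqref{eq:rpde}, an explicit check that $(\tilde{\bsu},\tilde{\bsu}_1,\tilde{\bsu}_2)$ solves the tilded system, and the swap invariance via the centrality argument (in a division ring ${\bs{P}}{\bs{R}} = \alpha-\lambda$ forces ${\bs{R}}{\bs{P}} = \alpha-\lambda$), which is correct and in fact makes fully explicit the one point the paper leaves implicit, namely that the new triple again solves \eqref{eq:rpde}. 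What the paper's route buys is economy and an explanation of where \eqref{eq:BT-rpde} comes from; what yours buys is independence from the discrete machinery, at the price of heavy noncommutative computations that you have sketched rather than carried out. One bookkeeping correction: \eqref{eq:BT-rpde-3}--\eqref{eq:BT-rpde-4} are explicit definitions of $\tilde{\bsu}_1$ and $\tilde{\bsu}_2$, so the internal consistency of \eqref{eq:BT-rpde} reduces to the compatibility of \eqref{eq:BT-rpde-1}--\eqref{eq:BT-rpde-2} alone; moreover, differentiating \eqref{eq:BT-rpde-3} in $\alpha$ (or \eqref{eq:BT-rpde-4} in $\beta$) brings in $\partial_\alpha \bsu_1$ and $\partial_\alpha \tilde{\bsu}_1$ (respectively $\partial_\beta \bsu_2$, $\partial_\beta \tilde{\bsu}_2$), which neither \eqref{eq:rpde} nor \eqref{eq:BT-rpde} prescribes, so there is nothing to match there --- only the $\beta$-derivative of \eqref{eq:BT-rpde-3} and the $\alpha$-derivative of \eqref{eq:BT-rpde-4} are needed, to produce $\partial_\beta \tilde{\bsu}_1$ and $\partial_\alpha \tilde{\bsu}_2$ for the tilded copy of the system.
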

	
	\begin{proof}
		The first equation of the Lax pair follows from \eqref{eq:Lax-difdif-pKdV-1} by employing \eqref{eq:con-const} to replace the negative shifts of $\bsu$. The second equation follows from the first equation and the invariance of \eqref{eq:rpde} under the interchanges $\bsu_1 \leftrightarrow \bsu_2$, $\alpha \leftrightarrow \beta$, and $n \leftrightarrow m$. The compatibility condition of \eqref{eq:Lax-rpde} splits into two equations, namely 
		$$  \partial_\beta {\bs{A}} = \partial_\alpha {\bs{B}}, ~~~ \beta  \partial_\beta {\bs{A}}  - \alpha \partial_\alpha {\bs{B}} + [{\bs{A}},{\bs{B}}] = 0,$$
		where $\bs{A}$ and $\bs{B}$ are the matrices of the first and second equation, respectively, in \eqref{eq:Lax-rpde}. The $(2,1)$ entry of the second equation yields \eqref{eq:rpde-3}, in view of which equation $  \partial_\beta {\bs{A}} = \partial_\alpha {\bs{B}}$ leads to the other two equation of \eqref{eq:rpde}. Finally,  in view of \eqref{eq:rpde}, the remaining equations of the compatibility condition hold identically.
		
		Regarding the auto-B\"acklund transformation, we employ the corresponding transformation of the discrete potential KdV equation,
		\begin{equation} \label{eq:BT-dpKdV}
			\left({\bs{u}}_{0,0} - \tilde{\bs{u}}_{1,0}\right) \left( {\bs{u}}_{1,0}-\tilde{\bs{u}}_{0,0}\right) = \alpha - \lambda, ~~~ 	\left({\bs{u}}_{0,0} - \tilde{\bs{u}}_{1,0}\right) \left( {\bs{u}}_{0,1}-\tilde{\bs{u}}_{0,0}\right) = \beta - \lambda,
		\end{equation}
		which follows from the multidimensionally consistency of \eqref{eq:nonAbel-pKdV}. To do this, let us first consider the equations following from \eqref{eq:con-const} after replacing $\bsu$ with $\tilde{\bsu}$, i.e.,
		\begin{equation} \label{eq:tilde-con-const}
		\frac{\partial \tilde{\bs{u}}_{0,0}}{\partial \alpha} =  -n \left(\tilde{\bs{u}}_{1,0} -\tilde{\bs{u}}_{-1,0}\right)^{-1}, ~~~ 	\frac{\partial \tilde{\bs{u}}_{0,0}}{\partial \beta} =  -m \left(\tilde{\bs{u}}_{0,1} - \tilde{\bs{u}}_{0,-1}\right)^{-1}.
		\end{equation}
		Clearly system \eqref{eq:BT-dpKdV} is consistent with \eqref{eq:con-const} and \eqref{eq:tilde-con-const}, as this is another manifestation of the invariance of the discrete potential KdV under symmetries \eqref{eq:mast-sym-pKdV} and \eqref{eq:mast-sym-2-pKdV}. Moreover, it follows from \eqref{eq:BT-dpKdV} and their backward shifts that
		$$ \bsu_{1,0}-\bsu_{-1,0} = \frac{1}{\alpha - \lambda} (\bsu_{0,0} - \tilde{\bsu}_{1,0})^{-1} (\tilde{\bsu}_{1,0}-\tilde{\bsu}_{-1,0}) (\bsu_{0,0} -\tilde{\bsu}_{-1,0})^{-1},$$
		and
		$$\bsu_{0,1}-\bsu_{0,-1} = \frac{1}{\beta - \lambda} (\bsu_{0,0} - \tilde{\bsu}_{0,1})^{-1} (\tilde{\bsu}_{0,1}-\tilde{\bsu}_{0,-1}) (\bsu_{0,0} -\tilde{\bsu}_{0,-1})^{-1}.$$
		In view of these relations, we can rewrite \eqref{eq:con-const} as 
		$$ \frac{\partial \bsu_{0,0}}{\partial \alpha} = \frac{1}{\alpha - \lambda} (\bsu_{0,0} -\tilde{\bsu}_{-1,0}) \frac{\partial \tilde{\bsu}_{0,0}}{\partial \alpha} (\bsu_{0,0} - \tilde{\bsu}_{1,0}),$$
		and
		$$\frac{\partial \bsu_{0,0}}{\partial \beta} = \frac{1}{\beta - \lambda} (\bsu_{0,0} -\tilde{\bsu}_{0,-1}) \frac{\partial \tilde{\bsu}_{0,0}}{\partial \beta} (\bsu_{0,0} - \tilde{\bsu}_{0,1}),$$
		respectively. Finally, we eliminate the backward shifts of $\tilde{\bsu}$ using \eqref{eq:tilde-con-const} and rename
		$$ (\bsu_{0,0}, \bsu_{1,0}, \bsu_{0,1}) \longrightarrow (\bsu,\bsu_1,\bsu_2) ~~ {\mbox{and}} ~~  (\tilde{\bsu}_{0,0}, \tilde{\bsu}_{1,0}, \tilde{\bsu}_{0,1}) \longrightarrow (\tilde{\bsu},\tilde{\bsu}_1,\tilde{\bsu}_2).$$
		These considerations lead to equations \eqref{eq:BT-rpde-1} and \eqref{eq:BT-rpde-2}. The compatibility condition of these equations yields a quadratic polynomial in $\tilde{\bsu}$ the coefficients of which yield system \eqref{eq:rpde}.
	\end{proof}

	\begin{remark}
		System \eqref{eq:rpde} constitutes a noncommutative generalisation of the system originally introduced in \cite{NHJ}, further analysed in \cite{TTX1} and \cite{TTX}, and was derived in the context of ABS equations and their continuous symmetric reductions in \cite{TX}. From system \eqref{eq:rpde}, three distinct subsystems can be extracted, each generalising a classical system from Mathematical Physics. The first one is the Euler--Poisson--Darboux (EPD) equation,
		$$\frac{\partial^2 {\bs{u}}}{\partial \alpha \partial \beta} = \frac{1}{\alpha-\beta} \left(n \frac{\partial {\bs{u}}}{\partial \beta} -  m \frac{\partial {\bs{u}}}{\partial \alpha} \right),$$
		corresponding to the choices $ \bsu_1 = \bsu_2 = c \in {\mathbb{F}}$. The other system is the Ernst equation along with the Neugebauer--Kramer involution. This corresponds to the choice $ \bsu_1 = {\bs{\phi}} + {\rm{i}} {\bs{\chi}}$, $\bsu_2 = -{\bs{\phi}} + {\rm{i}} {\bs{\chi}}$, $\bsu = \bsF + {\rm{i}} {\bs{\omega}}$, and $n=m=-\tfrac{1}{2}$. In view of these choices, equation \eqref{eq:rpde-3} becomes the Ernst equation,
		\begin{equation} \label{eq:Ernst}
			\frac{\partial^2 {\bs{u}}}{\partial \alpha \partial \beta} =  \frac{1}{2} \left(\frac{\partial {\bs{u}}}{\partial \alpha} \bsF^{-1}  \frac{\partial {\bs{u}}}{\partial \beta} + \frac{\partial {\bs{u}}}{\partial \beta} \bsF^{-1}  \frac{\partial {\bs{u}}}{\partial \alpha}\right) + \frac{1}{2 \left(\alpha-\beta\right)}  \left(\frac{\partial {\bs{u}}}{\partial \alpha} - \frac{\partial {\bs{u}}}{\partial \beta}\right),
		\end{equation}
		and the other two equations of \eqref{eq:rpde} lead to the Neugebauer--Kramer involution,
		\begin{equation} \label{eq:NK}
			\bsF {\bs{\phi}} = \frac{\alpha-\beta}{4}, ~~~ \frac{\partial {\bs{\chi}}}{\partial \alpha} = \frac{4}{\alpha-\beta} {\bs{\phi}} \frac{\partial {\bs{\omega}}}{\partial \alpha}  {\bs{\phi}},~~~ \frac{\partial {\bs{\chi}}}{\partial \beta} = \frac{-4}{\alpha-\beta} {\bs{\phi}} \frac{\partial {\bs{\omega}}}{\partial \beta}  {\bs{\phi}}.
		\end{equation}
		It can be easily checked with direct computations that if $\bsu = \bsF + {\rm{i}} {\bs{\omega}}$ satisfies \eqref{eq:Ernst} and $\bsu_1= {\bs{\phi}} + {\rm{i}} {\bs{\chi}}$ is related to $\bsu$ via \eqref{eq:NK}, then $\bsu_1$ satisfies
		$$\frac{\partial^2 {\bs{u}}_1}{\partial \alpha \partial \beta} =  \frac{1}{2} \left(\frac{\partial {\bs{u}}_1}{\partial \alpha} {\bs{\phi}}^{-1}  \frac{\partial {\bs{u}}_1}{\partial \beta} + \frac{\partial {\bs{u}}_1}{\partial \beta} {\bs{\phi}}^{-1}  \frac{\partial {\bs{u}}_1}{\partial \alpha}\right) + \frac{1}{2 \left(\alpha-\beta\right)}  \left(\frac{\partial {\bs{u}}_1}{\partial \alpha} - \frac{\partial {\bs{u}}_1}{\partial \beta}\right), $$
		and vice versa. It is worth noting that equation \eqref{eq:Ernst} was proposed as a matrix version of the Ernst equation in \cite{AL}.
		
		Finally, system \eqref{eq:rpde} can also be decoupled to a system for $\bsu_1$ and $\bsu_2$. This can be done by rearranging equations \eqref{eq:rpde-1} and \eqref{eq:rpde-2} for the first order derivatives of $\bsu$ and then consider their compatibility conditions with \eqref{eq:rpde-3}. This leads to 
		\begin{eqnarray*}
			\frac{\partial^2 {\bs{u}}_1}{\partial \alpha \partial \beta} &=& \frac{\partial {\bs{u}}_1}{\partial \alpha}\left({\bs{u}}_{1} -{\bs{u}}_{2}\right) ^{-1}   \frac{\partial {\bs{u}}_1}{\partial \beta} +  \frac{\partial {\bs{u}}_1}{\partial \beta}\left({\bs{u}}_{1} -{\bs{u}}_{2}\right) ^{-1}   \frac{\partial {\bs{u}}_1}{\partial \alpha} - \frac{m}{\alpha-\beta}  \frac{\partial {\bs{u}}_1}{\partial \alpha} - \frac{n+1}{\alpha-\beta}   \frac{\partial {\bs{u}}_1}{\partial \beta}, \\
			\frac{\partial^2 {\bs{u}}_2}{\partial \alpha \partial \beta} &=& \frac{\partial {\bs{u}}_2}{\partial \alpha}\left({\bs{u}}_{2} -{\bs{u}}_{1}\right) ^{-1}   \frac{\partial {\bs{u}}_2}{\partial \beta}  + \frac{\partial {\bs{u}}_2}{\partial \beta}\left({\bs{u}}_{2} -{\bs{u}}_{1}\right) ^{-1}   \frac{\partial {\bs{u}}_2}{\partial \alpha} + \frac{m+1}{\alpha-\beta}  \frac{\partial {\bs{u}}_2}{\partial \alpha} + \frac{n}{\alpha-\beta}   \frac{\partial {\bs{u}}_2}{\partial \beta},
		\end{eqnarray*} 
		which may be viewed as a noncommutative generalisation of the stationary Loewner--Konopelchenko--Rogers system, see \cite{TTX, WS} and references therein.
	\end{remark}

	\section{The noncommutative discrete KdV equation} \label{sec:KdV}
	
	In this section, we consider the noncommutative discrete Hirota's KdV equation, or simply the KdV equation,	
	\begin{equation} \label{eq:nonAbel-KdV}
		{\bs{u}}_{0,0} + \alpha {\bs{u}}_{1,0}^{-1} - \alpha {\bs{u}}_{0,1}^{-1}-{\bs{u}}_{1,1} = {\bs{0}}, 
	\end{equation}
	and derive its lowest order generalised symmetries and an auto-B\"acklund transformation using its Lax pair. Equation \eqref{eq:nonAbel-KdV} was derived as a reduction of the noncommutative Toda chain in \cite{Betal} and its Lax pair is
	\begin{equation} \label{eq:Lax-KdV}
		{\bs{\Psi}}_{1,0} = {\bs{L}}_{0,0} {\bs{\Psi}}_{0,0} =   \begin{pmatrix}
			\alpha \bsu_{0,0}^{-1} & \lambda  \\ \lambda &  \bsu_{0,0}
		\end{pmatrix} {\bs{\Psi}}_{0,0}, ~~~ 
		{\bs{\Psi}}_{0,1} = {\bs{M}}_{0,0} {\bs{\Psi}}_{0,0} = \begin{pmatrix}
			\alpha \bsu_{0,0}^{-1} - \bsu_{0,1} & \lambda  \\ \lambda & 0
		\end{pmatrix} {\bs{\Psi}}_{0,0}.
	\end{equation}
	It can be easily verified that \eqref{eq:nonAbel-KdV} is invariant under the scaling generated by 	$\tfrac{\rm{d}}{{\rm{d}} t} \bsu_{0,0} = (-1)^{n+m} \bsu_{0,0}$, and the conjugation $\bsu_{i,j} \rightarrow \bsx \bsu_{i,j} \bsx^{-1}$.
	
	Starting from the KdV equation and its Lax pair, and proceeding analogously to the previous section, we can determine the generalised symmetries of \eqref{eq:nonAbel-KdV}. This analysis can be summarised in the following statements.
	\begin{proposition}
		The systems 
		\begin{equation} \label{eq:Lax-difdif-KdV-1}
			{\bs{\Psi}}_{1,0} = {\bs{L}}_{0,0} {\bs{\Psi}}_{0,0}, ~~~ \frac{{\rm{d}} {\bs{\Psi}}_{0,0}}{{\rm{d}} t_1} = {\bs{A}}_{0,0} {\bs{\Psi}}_{0,0}, ~~~ \frac{{\rm{d}} {\bs{\Psi}}_{0,0}}{{\rm{d}} x} = n {\bs{A}}_{0,0} {\bs{\Psi}}_{0,0},
		\end{equation}
		where ${\bs{L}}_{0,0}$ is given in \eqref{eq:Lax-KdV} and
		\begin{equation} \label{eq:Lax-difdif-KdV-2}
			{\bs{A}}_{0,0} = \frac{1}{\alpha-\lambda^2} \begin{pmatrix}
				\bsf_{0,0} \bsu_{0,0} \bsu_{-1,0} & -\lambda \bsf_{0,0} \bsu_{0,0} \\ 
				-\lambda \bsu_{-1,0} \bsf_{0,0} & \alpha \bsu_{0,0}^{-1} \bsf_{0,0} \bsu_{0,0}
			\end{pmatrix} ~~ {\text{with}} ~~\bsf_{0,0} = (\bsu_{0,0} \bsu_{-1,0}+\alpha)^{-1},
		\end{equation}
		are Lax pairs of the differential-difference equations
		\begin{equation} \label{eq:difdif-KdV-1}
			\frac{{\rm{d}} {\bs{u}}_{0,0}}{{\rm{d}} t_1} =  {\bs{u}}_{0,0} {\bs{f}}_{1,0} -{\bs{f}}_{0,0}  {\bs{u}}_{0,0} ~~~  {\text{ and }} ~~~ \frac{{\rm{d}} {\bs{u}}_{0,0}}{{\rm{d}} x} =  n {\bs{u}}_{0,0} {\bs{f}}_{1,0} -(n-1) {\bs{f}}_{0,0}  {\bs{u}}_{0,0}, ~~~ 	\frac{{\rm{d}} \alpha}{{\rm{d}} x} = 1,
		\end{equation}
		respectively.
	\end{proposition}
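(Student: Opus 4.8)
The plan is to verify the final Proposition by the same Lax-pair compatibility method that was used for the potential KdV equation in the previous section. I want to show that the differential--difference equations \eqref{eq:difdif-KdV-1} arise precisely as the consistency condition between the discrete Lax operator ${\bs{L}}_{0,0}$ in \eqref{eq:Lax-KdV} and an evolutionary derivation of the form $\tfrac{{\rm{d}}}{{\rm{d}} t}{\bs{\Psi}}_{0,0} = {\bs{A}}_{0,0} {\bs{\Psi}}_{0,0}$, with ${\bs{A}}_{0,0}$ given by \eqref{eq:Lax-difdif-KdV-2}. The master symmetry (the $x$-flow) should then follow by multiplying ${\bs{A}}_{0,0}$ by $n$ and tracking the extra contributions produced by the shift operator acting on the explicit $n$-dependence, exactly as the factor $(n-1)$ versus $n$ in \eqref{eq:difdif-KdV-1} suggests.

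First I would write down the compatibility condition in the form
\begin{equation*}
\frac{{\rm{d}}}{{\rm{d}} t_1} {\bs{L}}_{0,0} + {\bs{L}}_{0,0} {\bs{A}}_{0,0} = {\bs{A}}_{1,0} {\bs{L}}_{0,0},
\end{equation*}
exactly analogous to the relation used for \eqref{eq:Lax-pKdV-1}. Since ${\bs{L}}_{0,0}$ depends on $\bsu_{0,0}$ and on $\alpha$ (and carries the spectral parameter $\lambda$), substituting the candidate $t_1$-flow from \eqref{eq:difdif-KdV-1} into $\tfrac{{\rm{d}}}{{\rm{d}} t_1} {\bs{L}}_{0,0}$ and inserting the explicit ${\bs{A}}_{0,0}$ of \eqref{eq:Lax-difdif-KdV-2} reduces the whole statement to a matrix identity. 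I would clear the denominators $\alpha-\lambda^2$ coming from ${\bs{A}}_{0,0}$ and ${\bs{A}}_{1,0}$, and then check the identity entry by entry as a polynomial in $\lambda$; because every $\lambda$-dependence is explicit and ${\bs{f}}_{0,0} = (\bsu_{0,0}\bsu_{-1,0}+\alpha)^{-1}$ is the natural object that makes the $\alpha$-terms collapse, each coefficient of $\lambda$ should vanish identically once the definition of ${\bs{f}}$ is used. Here the key algebraic simplification is the relation $\bsu_{0,0}\bsu_{-1,0}+\alpha = {\bs{f}}_{0,0}^{-1}$ and its forward shift $\bsu_{1,0}\bsu_{0,0}+\alpha = {\bs{f}}_{1,0}^{-1}$, which let one rewrite products such as $\alpha{\bs{f}}_{0,0}$ and ${\bs{f}}_{0,0}\bsu_{0,0}\bsu_{-1,0}$ without inverting noncommuting sums.

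For the master symmetry I would repeat the computation with ${\bs{A}}_{0,0}$ replaced by $n{\bs{A}}_{0,0}$ and with $\tfrac{{\rm{d}}\alpha}{{\rm{d}} x}=1$. The crucial point is that ${\bs{A}}_{1,0}$ now means ${\cal{S}}(n{\bs{A}}_{0,0}) = (n+1){\bs{A}}_{1,0}$, so the mismatch between $n$ on the left and $n+1$ on the right produces exactly one extra copy of ${\bs{A}}_{1,0}{\bs{L}}_{0,0}$; combined with the $\tfrac{{\rm{d}}\alpha}{{\rm{d}} x}=1$ contribution to $\tfrac{{\rm{d}}}{{\rm{d}} x}{\bs{L}}_{0,0}$ through the $\alpha$-entries of ${\bs{L}}_{0,0}$, this should account for the shift from $n$ to $(n-1)$ in the coefficient of ${\bs{f}}_{0,0}\bsu_{0,0}$ in the second equation of \eqref{eq:difdif-KdV-1}. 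I expect the main obstacle to be purely bookkeeping rather than conceptual: keeping the noncommutative factors in the correct order when expanding ${\bs{A}}_{1,0}{\bs{L}}_{0,0}$ and ${\bs{L}}_{0,0}{\bs{A}}_{0,0}$, since one cannot freely commute $\bsu_{0,0}$, $\bsu_{1,0}$, ${\bs{f}}_{0,0}$, and ${\bs{f}}_{1,0}$ past one another. To manage this I would verify each of the four matrix entries separately and, where needed, convert all expressions to the common ``$\bsu$-and-${\bs{f}}$'' normal form using the two defining identities for ${\bs{f}}$ before comparing coefficients, so that the verification becomes a finite, mechanical check that each coefficient of $\lambda$ vanishes modulo equation \eqref{eq:nonAbel-KdV}.
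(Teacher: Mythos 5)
Your strategy---verifying the compatibility condition $\tfrac{{\rm{d}}}{{\rm{d}}t_1}{\bs{L}}_{0,0} + {\bs{L}}_{0,0}{\bs{A}}_{0,0} = {\bs{A}}_{1,0}{\bs{L}}_{0,0}$ entry by entry in $\lambda$, using $\bsf_{0,0}^{-1} = \bsu_{0,0}\bsu_{-1,0}+\alpha$ and its shift---is exactly the ``direct calculation'' the paper invokes, and for the $t_1$-flow it closes cleanly: using the exchange identities $\bsf_{1,0}\bsu_{1,0} = \bsu_{1,0}(\bsu_{0,0}\bsu_{1,0}+\alpha)^{-1}$ and $\bsu_{0,0}\bsf_{1,0} = (\bsu_{0,0}\bsu_{1,0}+\alpha)^{-1}\bsu_{0,0}$ one finds ${\bs{A}}_{1,0}{\bs{L}}_{0,0} = {\rm diag}\left(\bsf_{1,0}\bsu_{1,0},\ \bsu_{0,0}\bsf_{1,0}\right)$ and ${\bs{L}}_{0,0}{\bs{A}}_{0,0} = {\rm diag}\left(\bsu_{-1,0}\bsf_{0,0},\ \bsf_{0,0}\bsu_{0,0}\right)$, whence the $(2,2)$ entry gives precisely $\tfrac{{\rm{d}}\bsu_{0,0}}{{\rm{d}}t_1} = \bsu_{0,0}\bsf_{1,0}-\bsf_{0,0}\bsu_{0,0}$ and the $(1,1)$ entry is consistent with $\tfrac{{\rm{d}}\alpha}{{\rm{d}}t_1}=0$.

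However, your bookkeeping for the master symmetry contains a genuine error, and the check will not close as you predict. The compatibility of the third equation of \eqref{eq:Lax-difdif-KdV-1} with the discrete part reads $\tfrac{{\rm{d}}}{{\rm{d}}x}{\bs{L}}_{0,0} = (n+1){\bs{A}}_{1,0}{\bs{L}}_{0,0} - n{\bs{L}}_{0,0}{\bs{A}}_{0,0}$, and the extra copy of ${\bs{A}}_{1,0}{\bs{L}}_{0,0}$ contributes its $(2,2)$ entry, which is $\bsu_{0,0}\bsf_{1,0}$, \emph{not} $\bsf_{0,0}\bsu_{0,0}$. Hence the $(2,2)$ entry forces
\begin{equation*}
\frac{{\rm{d}}\bsu_{0,0}}{{\rm{d}}x} = (n+1)\,\bsu_{0,0}\bsf_{1,0} - n\,\bsf_{0,0}\bsu_{0,0},
\qquad \frac{{\rm{d}}\alpha}{{\rm{d}}x} = 1,
\end{equation*}
i.e.\ the coefficient of $\bsu_{0,0}\bsf_{1,0}$ is raised to $n+1$ rather than the coefficient of $\bsf_{0,0}\bsu_{0,0}$ being lowered to $n-1$; moreover the ${\rm{d}}\alpha/{\rm{d}}x$ term cannot repair this, since $\alpha$ enters ${\bs{L}}_{0,0}$ only through the $(1,1)$ entry $\alpha\bsu_{0,0}^{-1}$ and therefore never touches the $(2,2)$ equation. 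The flow obtained differs from the one stated in \eqref{eq:difdif-KdV-1} by exactly one copy of the $t_1$-flow: in general, the matrix $c(n){\bs{A}}_{0,0}$ is compatible with $\tfrac{{\rm{d}}\bsu_{0,0}}{{\rm{d}}x} = c(n+1)\bsu_{0,0}\bsf_{1,0}-c(n)\bsf_{0,0}\bsu_{0,0}$, $\tfrac{{\rm{d}}\alpha}{{\rm{d}}x}=c(n+1)-c(n)$, so the flow written in \eqref{eq:difdif-KdV-1} actually pairs with $(n-1){\bs{A}}_{0,0}$, not $n{\bs{A}}_{0,0}$. Both flows are legitimate master symmetries (they differ by the symmetry generated by ${\bs{A}}_{0,0}$ itself), but a mechanical execution of your plan against the stated matrix $n{\bs{A}}_{0,0}$ will produce a mismatch in the $(2,2)$ entry that you would need to detect and resolve---by adjusting the matrix to $(n-1){\bs{A}}_{0,0}$ or the flow to $(n+1)\bsu_{0,0}\bsf_{1,0}-n\bsf_{0,0}\bsu_{0,0}$---rather than expecting the coefficients to come out as printed.
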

	
	\begin{proof}
		It can be verified by direct calculations.
	\end{proof}
	
	\begin{proposition} \label{prop:KdV-sym-1}
		The lowest order generalised symmetries of equation \eqref{eq:nonAbel-KdV} in the first direction are generated by differential-difference equations \eqref{eq:difdif-KdV-1}.
	\end{proposition}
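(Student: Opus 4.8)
The plan is to verify directly that each flow in \eqref{eq:difdif-KdV-1} annihilates \eqref{eq:nonAbel-KdV} modulo the equation and its shifts, exactly in the spirit of the proof given above for the potential KdV equation. For the first flow I would differentiate \eqref{eq:nonAbel-KdV} with respect to $t_1$, applying the noncommutative rule $\partial_{t_1}(\bsu_{i,j}^{-1}) = -\bsu_{i,j}^{-1}(\partial_{t_1}\bsu_{i,j})\bsu_{i,j}^{-1}$ to the two inverse terms and substituting the shifted flows $\partial_{t_1}\bsu_{i,j} = \bsu_{i,j}\bsf_{i+1,j} - \bsf_{i,j}\bsu_{i,j}$. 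The inverse sandwiches telescope, for instance $\bsu_{1,0}^{-1}(\partial_{t_1}\bsu_{1,0})\bsu_{1,0}^{-1} = \bsf_{2,0}\bsu_{1,0}^{-1} - \bsu_{1,0}^{-1}\bsf_{1,0}$, and the result is an expression built from the eight terms $\bsu_{0,0}\bsf_{1,0}$, $\bsf_{0,0}\bsu_{0,0}$, $\bsf_{2,0}\bsu_{1,0}^{-1}$, $\bsu_{1,0}^{-1}\bsf_{1,0}$, $\bsf_{1,1}\bsu_{0,1}^{-1}$, $\bsu_{0,1}^{-1}\bsf_{0,1}$, $\bsu_{1,1}\bsf_{2,1}$ and $\bsf_{1,1}\bsu_{1,1}$.

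The key algebraic tool is the family of factorisation identities that follow at once from $\bsf_{i,j} = (\bsu_{i,j}\bsu_{i-1,j}+\alpha)^{-1}$, namely $(\bsu_{i,j}+\alpha\bsu_{i+1,j}^{-1})\bsf_{i+1,j} = \bsu_{i+1,j}^{-1}$ and $\bsf_{i,j}(\bsu_{i,j}+\alpha\bsu_{i-1,j}^{-1}) = \bsu_{i-1,j}^{-1}$, together with their pulled-out forms $\bsf_{i,j}\bsu_{i,j} = (\bsu_{i-1,j}+\alpha\bsu_{i,j}^{-1})^{-1}$ and $\bsu_{i-1,j}\bsf_{i,j} = (\bsu_{i,j}+\alpha\bsu_{i-1,j}^{-1})^{-1}$. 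Using the first two I would collapse the pairs $\bsu_{0,0}\bsf_{1,0}+\alpha\bsu_{1,0}^{-1}\bsf_{1,0} = \bsu_{1,0}^{-1}$ and $\bsf_{1,1}\bsu_{1,1}+\alpha\bsf_{1,1}\bsu_{0,1}^{-1} = \bsu_{0,1}^{-1}$, and then use the rearrangements $\alpha\bsf_{2,0}\bsu_{1,0}^{-1}=\bsu_{1,0}^{-1}-\bsf_{2,0}\bsu_{2,0}$ and $\alpha\bsu_{0,1}^{-1}\bsf_{0,1}=\bsu_{0,1}^{-1}-\bsu_{-1,1}\bsf_{0,1}$ to cancel the two spurious inverses. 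The whole derivative reduces to $-\bsf_{0,0}\bsu_{0,0}+\bsf_{2,0}\bsu_{2,0}+\bsu_{-1,1}\bsf_{0,1}-\bsu_{1,1}\bsf_{2,1}$. Finally, the two shifts of \eqref{eq:nonAbel-KdV}, namely $\bsu_{-1,0}+\alpha\bsu_{0,0}^{-1} = \bsu_{0,1}+\alpha\bsu_{-1,1}^{-1}$ and $\bsu_{1,0}+\alpha\bsu_{2,0}^{-1} = \bsu_{2,1}+\alpha\bsu_{1,1}^{-1}$, combined with the pulled-out forms, give $\bsf_{0,0}\bsu_{0,0} = \bsu_{-1,1}\bsf_{0,1}$ and $\bsf_{2,0}\bsu_{2,0} = \bsu_{1,1}\bsf_{2,1}$, so the four surviving terms cancel in pairs and the derivative vanishes identically.

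For the second flow I would write $\partial_x\bsu_{i,j} = (n+i)\bsu_{i,j}\bsf_{i+1,j}-(n+i-1)\bsf_{i,j}\bsu_{i,j}$ and $\partial_x\alpha = 1$, and exploit that $\partial_x$ is $n$ times the $t_1$-flow plus a correction. Differentiating \eqref{eq:nonAbel-KdV} with respect to $x$ then produces $n$ times the (already vanishing) $t_1$-expression, plus a remainder coming from the shifted weights $n\pm1$ together with the single contribution $\bsu_{1,0}^{-1}-\bsu_{0,1}^{-1}$ of $\partial_x\alpha=1$. Collecting this remainder and applying the two rearrangements once more gives $\bsf_{0,0}\bsu_{0,0}+\bsf_{2,0}\bsu_{2,0}-\bsu_{-1,1}\bsf_{0,1}-\bsu_{1,1}\bsf_{2,1}$, which vanishes by exactly the same equalities $\bsf_{0,0}\bsu_{0,0}=\bsu_{-1,1}\bsf_{0,1}$ and $\bsf_{2,0}\bsu_{2,0}=\bsu_{1,1}\bsf_{2,1}$; the cancellation of the $\partial_x\alpha$ term is precisely what forces the weights to be $n$ and $n-1$.

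The hard part is not conceptual but the noncommutative bookkeeping: keeping track of the left/right placement of every factor through the substitutions, and spotting which residual $\bsf$-term must be paired with which shift of \eqref{eq:nonAbel-KdV}. A cleaner but less self-contained alternative would be to bypass this computation and instead verify the zero-curvature relation $\partial_{t_1}{\bs{M}}_{0,0}+{\bs{M}}_{0,0}{\bs{A}}_{0,0}={\bs{A}}_{0,1}{\bs{M}}_{0,0}$ for the matrix ${\bs{M}}$ of \eqref{eq:Lax-KdV}; combined with the ${\bs{L}}$--${\bs{A}}$ compatibility already recorded in the preceding proposition and the ${\bs{L}}$--${\bs{M}}$ compatibility that yields \eqref{eq:nonAbel-KdV}, the consistency of the Lax triple would then force $\partial_{t_1}$, and likewise $\partial_x$, to preserve \eqref{eq:nonAbel-KdV}.
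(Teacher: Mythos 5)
Your proof is correct and follows essentially the same route as the paper: differentiate the equation along the flow, collapse terms using the factorisation identities for $\bsf$, and finish with the two relations $\bsf_{0,0}\bsu_{0,0}=\bsu_{-1,1}\bsf_{0,1}$ and $\bsf_{2,0}\bsu_{2,0}=\bsu_{1,1}\bsf_{2,1}$, which are precisely the consequences of the shifted KdV equation that the paper records as \eqref{eq:subs}. The treatment of the master symmetry—splitting $\partial_x$ into $n$ times the $t_1$-flow plus a correction and letting the $\partial_x\alpha=1$ contribution cancel against the residual terms—also matches the paper's argument.
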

	
	\begin{proof}
		It is given in the Appendix.
	\end{proof}
	
	The lattices in \eqref{eq:difdif-KdV-1} are related to the Volterra lattice and its master symmetry via a Miura transformation. 
	
	\begin{proposition} \label{prop:Miura-KdV}
		The Miura transformation 	
		\begin{equation} \label{eq:Miura}
			{\bs{v}}_{0,0} = {\bs{u}}_{0,0} ({\bs{u}}_{1,0} {\bs{u}}_{0,0}+\alpha)^{-1},
		\end{equation}
		together with the change of variables $t_1 \rightarrow -t_1$ and $x \rightarrow -x$, maps lattices \eqref{eq:difdif-KdV-1} to \eqref{eq:Volt1} and \eqref{eq:mast-sym-mod-Volt}, respectively.
	\end{proposition}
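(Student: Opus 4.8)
The plan is to follow the strategy used for the potential KdV Miura transformation, working throughout with $\bsv_{0,0}^{-1}$ rather than $\bsv_{0,0}$. The first step is to rewrite the Miura transformation \eqref{eq:Miura} in the more convenient form
\begin{equation*}
	\bsv_{0,0}^{-1} = \bsu_{1,0} + \alpha \bsu_{0,0}^{-1},
\end{equation*}
obtained by inverting $\bsv_{0,0} = \bsu_{0,0}(\bsu_{1,0}\bsu_{0,0}+\alpha)^{-1}$. Along the way I record the two identities that drive all subsequent simplifications. First, since $\bsf_{1,0} = (\bsu_{1,0}\bsu_{0,0}+\alpha)^{-1}$ one has $\bsv_{0,0} = \bsu_{0,0}\bsf_{1,0}$, equivalently $\bsf_{1,0} = \bsu_{0,0}^{-1}\bsv_{0,0}$, so that $\bsf_{1,0}\bsv_{0,0}^{-1} = \bsu_{0,0}^{-1}$. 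Second, the purely algebraic relation $1 - \alpha\bsf_{0,0} = \bsu_{0,0}\bsu_{-1,0}\bsf_{0,0}$ (immediate from $\bsf_{0,0}^{-1} = \bsu_{0,0}\bsu_{-1,0}+\alpha$) yields $\bsu_{0,0}^{-1}(1-\alpha\bsf_{0,0}) = \bsu_{-1,0}\bsf_{0,0} = \bsv_{-1,0}$, where the last equality is the backward $n$-shift of $\bsv_{0,0} = \bsu_{0,0}\bsf_{1,0}$.

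Next I differentiate $\bsv_{0,0}^{-1}$ along each flow, using the derivation rule $\tfrac{\rm d}{{\rm d}t}\bsv_{0,0}^{-1} = -\bsv_{0,0}^{-1}\bigl(\tfrac{\rm d}{{\rm d}t}\bsv_{0,0}\bigr)\bsv_{0,0}^{-1}$, so it suffices to identify $\tfrac{\rm d}{{\rm d}t}\bsv_{0,0}^{-1}$. For the first flow (where $\tfrac{{\rm d}\alpha}{{\rm d}t_1}=0$) I substitute $\tfrac{{\rm d}\bsu_{1,0}}{{\rm d}t_1}$, the ${\cal S}$-shift of the first equation in \eqref{eq:difdif-KdV-1}, together with $\tfrac{{\rm d}\bsu_{0,0}^{-1}}{{\rm d}t_1} = -\bsf_{1,0}\bsu_{0,0}^{-1} + \bsu_{0,0}^{-1}\bsf_{0,0}$. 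Grouping the terms $-\bsf_{1,0}\bsu_{1,0}$ and $-\alpha\bsf_{1,0}\bsu_{0,0}^{-1}$ as $-\bsf_{1,0}\bsv_{0,0}^{-1} = -\bsu_{0,0}^{-1}$, and collapsing the remainder by the second identity above, I expect to reach $\tfrac{{\rm d}\bsv_{0,0}^{-1}}{{\rm d}t_1} = \bsv_{1,0} - \bsv_{-1,0}$. Inverting gives $\tfrac{{\rm d}\bsv_{0,0}}{{\rm d}t_1} = -\bsv_{0,0}(\bsv_{1,0}-\bsv_{-1,0})\bsv_{0,0}$, and the sign is corrected precisely by the change $t_1 \rightarrow -t_1$, recovering \eqref{eq:Volt1}.

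For the master symmetry the computation runs identically, now retaining the extra contribution $\tfrac{{\rm d}\alpha}{{\rm d}x}\bsu_{0,0}^{-1} = \bsu_{0,0}^{-1}$ and the $n$-dependent coefficients, which shift to $n+1$ and $n$ in $\tfrac{{\rm d}\bsu_{1,0}}{{\rm d}x}$. The same two identities reduce the result to $\tfrac{{\rm d}\bsv_{0,0}^{-1}}{{\rm d}x} = (n+1)\bsv_{1,0} - (n-1)\bsv_{-1,0}$, whence $x \rightarrow -x$ produces \eqref{eq:mast-sym-mod-Volt}. The only genuine obstacle is bookkeeping: every cancellation depends on the correct left/right placement of factors in the noncommutative products, and in the master-symmetry case one must track carefully how the $\tfrac{{\rm d}\alpha}{{\rm d}x}=1$ contribution recombines with the shifted coefficients so that the isolated $\bsu_{0,0}^{-1}$ terms assemble into exactly $-(n-1)\bsv_{-1,0}$. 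Once the two identities are in place, no genuinely new input is required.
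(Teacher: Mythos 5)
Your proposal is correct and takes essentially the same route as the paper's own proof: both rewrite the Miura transformation as $\bsv_{0,0}^{-1} = \bsu_{1,0} + \alpha\,\bsu_{0,0}^{-1}$, differentiate along each flow, and collapse the result using precisely the two identities you record, namely $\bsf_{1,0}\left(\bsu_{1,0}+\alpha\,\bsu_{0,0}^{-1}\right) = \bsu_{0,0}^{-1}$ and $\bsu_{0,0}^{-1}\left(1-\alpha\,\bsf_{0,0}\right) = \bsu_{-1,0}\bsf_{0,0} = \bsv_{-1,0}$, before applying $t_1 \rightarrow -t_1$ and $x \rightarrow -x$. The only cosmetic difference is that the paper first recasts the $x$-flow as $n\,\tfrac{{\rm d}\bsu_{0,0}}{{\rm d}t_1} + \bsf_{0,0}\bsu_{0,0}$ so as to reuse the $t_1$-computation, whereas you expand the shifted coefficients $(n+1)$ and $n$ directly; the content is identical.
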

	
	\begin{proof}
		It is given in the Appendix.
	\end{proof}
	
	Employing the invariance of \eqref{eq:nonAbel-KdV} under the interchange of shifts, i.e., $\bsu_{1,0} \leftrightarrow \bsu_{0,1}$, and the change of parameter $\alpha$ to $-\alpha$, we state the following result which can also be proven by straightforward calculations.
	\begin{proposition}  \label{prop:KdV-sym-2}
		The lowest order generalised symmetries of equation \eqref{eq:nonAbel-KdV} in the second direction are given by 
		\begin{equation} \label{eq:sym1a-KdV}
			\frac{{\rm{d}} {\bs{u}}_{0,0}}{{\rm{d}} s_1} =  {\bs{u}}_{0,0} {\bs{g}}_{0,1} -{\bs{g}}_{0,0}  {\bs{u}}_{0,0}, ~~~ {\bs{g}}_{0,0} = ({\bs{u}}_{0,0} {\bs{u}}_{0,-1}-\alpha)^{-1}
		\end{equation}
		and
		\begin{equation} \label{eq:mast-sym-2-KdV}
			\frac{{\rm{d}} {\bs{u}}_{0,0}}{{\rm{d}} y} =  m {\bs{u}}_{0,0} {\bs{g}}_{0,1} -(m-1) {\bs{g}}_{0,0}  {\bs{u}}_{0,0},~~~ \frac{{\rm{d}} \alpha}{{\rm{d}} y} = -1,
		\end{equation}
		respectively. 
	\end{proposition}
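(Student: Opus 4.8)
The plan is to obtain these symmetries by transporting the first-direction symmetries of Proposition~\ref{prop:KdV-sym-1} through the discrete symmetry of equation~\eqref{eq:nonAbel-KdV} identified in the preceding paragraph. First I would make that symmetry precise: let $\sigma$ be the $\mathbb{F}$-linear involution defined on generators by $\sigma(\bsu_{i,j}) = \bsu_{j,i}$ and $\sigma(\alpha)=-\alpha$, extended multiplicatively so that $\sigma(\bsx\bsy)=\sigma(\bsx)\sigma(\bsy)$ (note that this is an automorphism, not a transposition, since relabelling the lattice directions does not reverse the order of products). By construction $\sigma$ intertwines the shift operators, $\sigma\circ\mathcal{S}=\mathcal{T}\circ\sigma$ and $\sigma\circ\mathcal{T}=\mathcal{S}\circ\sigma$, and it interchanges the discrete variables $n\leftrightarrow m$. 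A direct substitution then shows that $\sigma$ fixes \eqref{eq:nonAbel-KdV}: the term $\alpha\bsu_{1,0}^{-1}$ is sent to $-\alpha\bsu_{0,1}^{-1}$ and $-\alpha\bsu_{0,1}^{-1}$ to $\alpha\bsu_{1,0}^{-1}$, so the two middle terms are exchanged and the equation is reproduced.

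Next I would record how $\sigma$ acts on the data of the first-direction flows. Applying $\sigma$ to $\bsf_{0,0}=(\bsu_{0,0}\bsu_{-1,0}+\alpha)^{-1}$ yields $(\bsu_{0,0}\bsu_{0,-1}-\alpha)^{-1}=\bsg_{0,0}$, and applying it to $\bsf_{1,0}=(\bsu_{1,0}\bsu_{0,0}+\alpha)^{-1}$ yields $(\bsu_{0,1}\bsu_{0,0}-\alpha)^{-1}=\bsg_{0,1}$, using the intertwining relation for the shifts. The governing principle is that if $\partial_\bsF$ is an evolutionary derivation compatible with \eqref{eq:nonAbel-KdV}, then its conjugate $\sigma\circ\partial_\bsF\circ\sigma$ is again such a derivation, because $\sigma$ is an algebra automorphism that commutes with the shifts up to the swap and fixes the equation. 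I would apply this to the symmetry \eqref{eq:difdif-KdV-1}: conjugating $\partial_{t_1}\bsu_{0,0}=\bsu_{0,0}\bsf_{1,0}-\bsf_{0,0}\bsu_{0,0}$ by $\sigma$ produces precisely $\partial_{s_1}\bsu_{0,0}=\bsu_{0,0}\bsg_{0,1}-\bsg_{0,0}\bsu_{0,0}$, which is \eqref{eq:sym1a-KdV}.

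For the master symmetry the same conjugation applies, with the additional bookkeeping that the explicit coefficients $n$ and $n-1$ are carried to $m$ and $m-1$ under $n\leftrightarrow m$, giving $\partial_y\bsu_{0,0}=m\,\bsu_{0,0}\bsg_{0,1}-(m-1)\bsg_{0,0}\bsu_{0,0}$. The accompanying parameter evolution transforms as follows: since $\partial_y\circ\sigma=\sigma\circ\partial_x$, conjugating $\partial_x\alpha=1$ and using $\sigma(\alpha)=-\alpha$ gives $\partial_y(-\alpha)=1$, i.e. $\partial_y\alpha=-1$, which is the second relation in \eqref{eq:mast-sym-2-KdV}. Because $\sigma$ preserves the differential order of a symmetry while exchanging the two lattice directions, it sets up a bijection between lowest-order symmetries in the first direction and those in the second, so the images of the symmetries classified in Proposition~\ref{prop:KdV-sym-1} are exactly the lowest-order symmetries in the second direction, establishing the claim.

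I expect the main obstacle to be the careful bookkeeping rather than any substantial computation: one must verify that $\sigma$ is an automorphism (and not order-reversing) so that noncommutative products such as $\bsu_{0,0}\bsf_{1,0}$ are transported without transposition, and one must correctly track the transformation of the explicit discrete variable $n$ and of the deformation parameter $\alpha$ through the flow in $x$. As the proposition remarks, one may instead verify \eqref{eq:sym1a-KdV} and \eqref{eq:mast-sym-2-KdV} directly, by differentiating \eqref{eq:nonAbel-KdV} along each flow and reducing the result modulo the equation and its shifts, exactly as in the Appendix proof of Proposition~\ref{prop:KdV-sym-1}; this route bypasses the involution but requires the same shifted-equation substitutions used there.
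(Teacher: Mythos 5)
Your proposal is correct and is essentially the paper's own argument: the paper justifies this proposition precisely by invoking the invariance of \eqref{eq:nonAbel-KdV} under the interchange of shifts $\bsu_{1,0} \leftrightarrow \bsu_{0,1}$ together with $\alpha \to -\alpha$ (noting it could also be verified by direct calculation), which is exactly the involution $\sigma$ you formalise. Your write-up simply makes explicit the points the paper leaves implicit — that $\sigma$ is a shift-intertwining algebra automorphism (not a transposition), that it carries $\bsf$ to $\bsg$, $n$ to $m$, and $\partial_x\alpha=1$ to $\partial_y\alpha=-1$.
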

	
	A consequence of Propositions \ref{prop:KdV-sym-1} and \ref{prop:KdV-sym-2} is the following statement.
	\begin{proposition}
		The lowest order five-point generalised symmetries of equation \eqref{eq:nonAbel-KdV} is given by 
		\begin{equation} \label{eq:mast-sym-12-KdV}
			\frac{{\rm{d}} {\bs{u}}_{0,0}}{{\rm{d}} \epsilon} =   n {\bs{u}}_{0,0} {\bs{f}}_{1,0} -(n-1) {\bs{f}}_{0,0}  {\bs{u}}_{0,0}+  m {\bs{u}}_{0,0} {\bs{g}}_{0,1} -(m-1) {\bs{g}}_{0,0}  {\bs{u}}_{0,0},
		\end{equation}
		where ${\bs{f}}_{0,0} = ( {\bs{u}}_{0,0} {\bs{u}}_{-1,0}+\alpha)^{-1} $ and  ${\bs{g}}_{0,0} = ({\bs{u}}_{0,0} {\bs{u}}_{0,-1}-\alpha)^{-1} $.
	\end{proposition}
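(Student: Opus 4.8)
The plan is to recognise \eqref{eq:mast-sym-12-KdV} as an $\mathbb{F}$-linear combination of the two master symmetries already established, and to observe that this particular combination is precisely the one that leaves the parameter $\alpha$ fixed. First I would recall that Proposition~\ref{prop:KdV-sym-1} guarantees that the evolutionary derivation $\partial_x$ defined by the second lattice in \eqref{eq:difdif-KdV-1}, namely $\partial_x\bsu_{0,0} = n\bsu_{0,0}\bsf_{1,0} - (n-1)\bsf_{0,0}\bsu_{0,0}$ together with $\partial_x\alpha = 1$, is compatible with \eqref{eq:nonAbel-KdV}; and that Proposition~\ref{prop:KdV-sym-2} guarantees the same for the derivation $\partial_y$ given by \eqref{eq:mast-sym-2-KdV}, with $\partial_y\alpha = -1$.

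Next I would form the derivation $\partial_\epsilon := \partial_x + \partial_y$. Since an evolutionary derivation is $\mathbb{F}$-linearly determined by its value on $\bsu_{0,0}$ and on $\alpha$, and since the compatibility condition $\partial_\bsF(Q)\equiv 0$ modulo \eqref{eq:nonAbel-KdV} (with $Q$ the left-hand side of that equation) is linear in $\bsF$, the sum of two symmetries is again a symmetry. Evaluating $\partial_\epsilon$ on the generator reproduces exactly the right-hand side of \eqref{eq:mast-sym-12-KdV}, which depends on the five points $\bsu_{-1,0},\bsu_{0,0},\bsu_{1,0},\bsu_{0,-1},\bsu_{0,1}$ through $\bsf_{0,0}=(\bsu_{0,0}\bsu_{-1,0}+\alpha)^{-1}$ and $\bsg_{0,0}=(\bsu_{0,0}\bsu_{0,-1}-\alpha)^{-1}$; evaluating on the parameter gives $\partial_\epsilon\alpha = \partial_x\alpha + \partial_y\alpha = 1 - 1 = 0$.

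The crux of the statement, and the only point needing attention, is this last cancellation: individually the two flows are \emph{master} symmetries, compatible with \eqref{eq:nonAbel-KdV} only when $\alpha$ is allowed to evolve, whereas their sum holds $\alpha$ constant and is therefore a genuine generalised symmetry in the strict sense. I do not anticipate any serious obstacle; the entire content is the observation that the $\alpha$-weights $+1$ and $-1$ are opposite, so that combining the lowest-order master symmetry in each direction yields the lowest-order five-point symmetry with $\alpha$ fixed. As an optional independent check one could substitute \eqref{eq:mast-sym-12-KdV} directly into the $\epsilon$-derivative of \eqref{eq:nonAbel-KdV} and verify, using the shifts of \eqref{eq:nonAbel-KdV} and the definitions of $\bsf$ and $\bsg$, that the result vanishes identically with $\alpha$ treated as a constant; but this is unnecessary once linearity of the compatibility condition is invoked.
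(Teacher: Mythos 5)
Your proposal is correct and is exactly the argument the paper intends: the paper states this proposition without a separate proof, presenting it simply as ``a consequence of Propositions \ref{prop:KdV-sym-1} and \ref{prop:KdV-sym-2},'' i.e.\ precisely your observation that the two master symmetries \eqref{eq:difdif-KdV-1} and \eqref{eq:mast-sym-2-KdV} carry opposite $\alpha$-weights ($\partial_x\alpha=1$, $\partial_y\alpha=-1$), so their sum fixes $\alpha$ and, by linearity of the compatibility condition, is a genuine five-point generalised symmetry. Your write-up just makes explicit what the paper leaves implicit.
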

	
	Using the generalised symmetries we found, we can reduce equation \eqref{eq:nonAbel-KdV} to a map, following the procedure outlined in Section \ref{sec:red-KdV}. Since the KdV equation admits only a scaling symmetry, generated by $\tfrac{{\rm d}}{{\rm d}t} \bsu_{0,0} = (-1)^{n+m} \bsu_{0,0}$, the resulting reduced map will be expressed in terms of the invariants associated with this symmetry. These invariants are given by
	\begin{equation} \label{eq:inv-KdV}
	 {\bs{x}}_{0,0} = \bsu_{1,0} \bsu_{0,0}, \quad {\bs{X}}_{0,0} = \bsu_{0,0} \bsu_{1,0}, \quad {\bs{y}}_{0,0} = \bsu_{0,1} \bsu_{0,0}, \quad {\bs{Y}}_{0,0} = \bsu_{0,0} \bsu_{0,1},
	 \end{equation}
	and are pairwise related by transposition, i.e., $\tau(\bsx_{0,0}) = \bsX_{0,0}$ and $\tau(\bsy_{0,0}) = \bsY_{0,0}$. Consequently, the reduced map will be six-dimensional, with its defining equations similarly related by transposition.
	
	Indeed, consider the reduction under the symmetry 
\[
\tfrac{{\rm{d}} {\bs{u}}_{0,0}}{{\rm{d}} \epsilon} + \lambda (-1)^{n+m} \bsu_{0,0} = 0,
\]
see \eqref{eq:mast-sym-12-KdV}. Following the approach of  Section~\ref{sec:red-KdV}, we use invariants \eqref{eq:inv-KdV} and their compatibility conditions,
\[
\bsx_{0,1} \bsY_{0,0}^{-1} = \bsy_{1,0} \bsX_{0,0}^{-1}, \qquad \bsy_{0,0}^{-1} \bsX_{0,1} = \bsx_{0,0}^{-1} \bsY_{1,0},
\]
to express both the KdV equation and the symmetry constraint in terms of these invariants.

The KdV equation admits two equivalent invariant forms. The first one is given by
		\begin{eqnarray*}
			&& {\bs{x}}_{0,1} = -\alpha + {\bs{X}}_{0,0}^{-1} ({\bs{X}}_{0,0}+\alpha) {\bs{Y}}_{0,0}, ~~~~~{\bs{y}}_{1,0} = \alpha + {\bs{Y}}_{0,0}^{-1} ({\bs{Y}}_{0,0}-\alpha) {\bs{X}}_{0,0}, \\
			&&   {\bs{X}}_{0,1} =  -\alpha + {\bs{y}}_{0,0} ({\bs{x}}_{0,0}+\alpha) {\bs{x}}_{0,0}^{-1}, ~~~ {\bs{Y}}_{1,0} = \alpha + {\bs{x}}_{0,0} ({\bs{y}}_{0,0}-\alpha) {\bs{y}}_{0,0}^{-1},
		\end{eqnarray*}
	and the second invariant form is
		\begin{eqnarray*}
			&& {\bs{x}}_{0,0} = \alpha \left({\bs{X}}_{0,1}- {\bs{y}}_{0,0}+\alpha\right)^{-1} {\bs{y}}_{0,0}, ~~~{\bs{y}}_{1,0} = \alpha {\bs{x}}_{0,1} ({\bs{x}}_{0,1}- {\bs{Y}}_{0,0}+\alpha)^{-1}, \\
			&& {\bs{X}}_{0,0} = \alpha {\bs{Y}}_{0,0} ({\bs{x}}_{0,1}-{\bs{Y}}_{0,0}+\alpha)^{-1}, ~~~ {\bs{Y}}_{1,0} =\alpha ({\bs{X}}_{0,1}- {\bs{y}}_{0,0}+\alpha)^{-1} {\bs{X}}_{0,1}.
		\end{eqnarray*}
	Similarly, the invariant form of the symmetry constraint is
	\begin{equation*} 
		n ({\bs{x}}_{0,0}+\alpha)^{-1} - (n-1) ({\bs{X}}_{-1,0}+\alpha)^{-1} + m ({\bs{y}}_{0,0}-\alpha)^{-1} - (m-1) ({\bs{Y}}_{0,-1}-\alpha)^{-1} + \lambda (-1)^{n+m}= 0,
	\end{equation*}
	\begin{equation*} 
		n ({\bs{X}}_{0,0}+\alpha)^{-1} - (n-1) ({\bs{x}}_{-1,0}+\alpha)^{-1} + m ({\bs{Y}}_{0,0}-\alpha)^{-1} - (m-1) ({\bs{y}}_{0,-1}-\alpha)^{-1} + \lambda (-1)^{n+m}= 0.
	\end{equation*}
	
	Proceeding as in Section~\ref{sec:red-KdV}, we introduce the variables 
	\[
	\bsx_n := \bsx_{-1,0}, \quad \bsy_n := \bsy_{0,0}, \quad \bsz_n := \bsx_{0,0}, \quad 
	\bsX_n := \bsX_{-1,0}, \quad \bsY_n := \bsY_{0,0}, \quad \bsZ_n := \bsX_{0,0},
	\]
	in terms of which we obtain a six-dimensional map. Its first four equations have the following form,
	\begin{eqnarray*}
			&&{\bs{x}}_{n+1} = {\bs{z}}_n, ~~~  {\bs{y}}_{n+1} = \alpha + {\bs{Y}}_{n}^{-1} ({\bs{Y}}_{n}-\alpha) {\bs{Z}}_{n},  \\
			&& {\bs{X}}_{n+1} = {\bs{Z}}_n , ~~~ {\bs{Y}}_{n+1} = \alpha + {\bs{z}}_{n} ({\bs{y}}_{n}-\alpha) {\bs{y}}_{n}^{-1},
		\end{eqnarray*}
whereas the equations for the other two variables, ${\bs{z}}_{n+1}$ and ${\bs{Z}}_{n+1}$, are determined by the following relations.
	\begin{eqnarray*}
	&& \alpha (n+1) (\bsz_{n+1}+\alpha)^{-1} + (n-1) (\bsx_n+\alpha)^{-1} {\bs{Z}}_n - \lambda (-1)^{n+m} ({\bs{Z}}_n+ \alpha)  + \nonumber \\
	&& \hspace{5cm} m \left(\alpha {\bs{Z}}_n^{-1} (\bsY_n-\alpha)^{-1} \bsY_n - (\bsY_n - \alpha)^{-1} {\bs{Z}}_n\right) + m-n-1 =0,  \\
	&& \alpha (n+1) (\bsZ_{n+1}+\alpha)^{-1} + (n-1) {\bs{z}}_n (\bsX_n+\alpha)^{-1} - \lambda (-1)^{n+m} ({\bs{z}}_n+ \alpha)  + \nonumber \\
	&& \hspace{5cm} m \left(\alpha \bsy_n (\bsy_n-\alpha)^{-1} {\bs{z}}_n^{-1} - {\bs{z}}_n (\bsy_n - \alpha)^{-1} \right) + m-n-1 =0.
	\end{eqnarray*}	
	As anticipated, the map is six-dimensional and invariant under transposition. If variables were commutative, then $\bsX_n=\bsx_n$, $\bsY_n=\bsy_n$, $\bsZ_n=\bsz_n$ and the map reduces to a  three-dimensional one. Moreover, the commutative map could be decoupled to a third-order equation for either $y_n$ or $z_n$ which are omitted here because of their length.
	
	\subsection{Auto-B\"acklund transformation and Yang--Baxter map}
	
	Using the Lax pair of an equation, one can derive an auto-B\"acklund transformation via a Darboux transformation that leaves the Lax pair covariant. More precisely, consider the Lax pair associated with the equation ${\bs{Q}}(\bsu) = 0$,
	$$	{\bs{\Psi}}_{1,0} = \bs{L}_{0,0} \bs{\Psi}_{0,0} , ~~ \bs{\Psi}_{0,1} = \bs{M}_{0,0} \bs{\Psi}_{0,0},$$
	where the Lax matrices $\bs{L}$ and $\bs{M}$ depend on $\bsu$ and its shifts, and the spectral parameter $\lambda$. A Darboux transformation maps this Lax pair to a new one,
	$$	\tilde{\bs{\Psi}}_{1,0} = \tilde{\bs{L}}_{0,0} \tilde{\bs{\Psi}}_{0,0} , ~~ \tilde{\bs{\Psi}}_{0,1} = \tilde{\bs{M}}_{0,0} \tilde{\bs{\Psi}}_{0,0},$$
	where $\tilde{\bs{L}}$ and $\tilde{\bs{M}}$ are obtained from $\bs{L}$ and $\bs{M}$ by replacing the solution $\bsu$ with another solution $\tilde{\bsu}$, satisfying ${\bs{Q}}(\tilde{\bsu}) = 0$. 
	
	The fundamental solutions $\bs{\Psi}$ and $\tilde{\bs{\Psi}}$ are related through a Darboux transformation of the form
	$$	\tilde{{\bs{\Psi}}}_{0,0} = {\bs{D}}_{0,0} {\bs{\Psi}}_{0,0},$$
	where the Darboux matrix $\bs{D}$ generally depends on $\bsu$, $\tilde{\bsu}$, the spectral parameter $\lambda$, the B\"acklund parameter $\gamma$, and possibly an auxiliary function (potential) $\bsv$. As a consequence of this transformation, the Darboux matrix $\bs{D}$ must satisfy the compatibility conditions
	$$ {\bs{D}}_{1,0} {\bs{L}}_{0,,0} = \tilde{\bs{L}}_{0,0} {\bs{D}}_{0,0}, ~~~ {\bs{D}}_{0,1} {\bs{M}}_{0,0} = \tilde{\bs{M}}_{0,0} {\bs{D}}_{0,0},$$
	which in turn yield an auto-B\"acklund transformation for the original equation.

	For the KdV equation \eqref{eq:nonAbel-KdV} and its Lax pair \eqref{eq:Lax-KdV}, the Darboux transformation is given by
	\begin{equation}
		\tilde{{\bs{\Psi}}}_{0,0} = {\bs{D}}(\bsv_{0,0},\gamma) {\bs{\Psi}}_{0,0} = \begin{pmatrix}
			\gamma \bsv_{0,0}^{-1} & \lambda  \\ \lambda &  \bsv_{0,0} 	\end{pmatrix}   {\bs{\Psi}}_{0,0}.
	\end{equation}
	The first row of the consistency condition ${\bs{D}}(\bsv_{1,0},\gamma) {\bs{L}}_{0,0} = \tilde{\bs{L}}_{0,0} {\bs{D}}(\bsv_{0,0},\gamma)$ implies that 
	$$ \tilde{\bsu}_{0,0} = \bsv_{0,0}^{-1} \bsu_{0,0} \bsv_{1,0} ~~ {\mbox{and}} ~~ \bsv_{1,0} = \bsu_{0,0}^{-1} (\alpha \bsv_{0,0}-\gamma \bsu_{0,0}) (\bsu_{0,0}-\bsv_{0,0})^{-1},$$
	whereas the second row yields
	$$ \tilde{\bsu}_{0,0} = \bsv_{1,0} \bsu_{0,0} \bsv_{0,0}^{-1} ~~ {\mbox{and}} ~~ \bsv_{1,0} = (\bsu_{0,0}-\bsv_{0,0})^{-1} (\alpha \bsv_{0,0}-\gamma \bsu_{0,0}) \bsu_{0,0}^{-1}.$$
	Taking into account the identity 
	\begin{equation} \label{eq:identity}
		\bsx^{-1} \left(\alpha \bsx + \beta \bsy\right) (\gamma \bsx + \delta \bsy)^{-1} = (\gamma \bsx + \delta \bsy)^{-1} \left(\alpha \bsx + \beta \bsy\right) \bsx^{-1}, 
	\end{equation}
	which holds for any $\bsx, \bsy \in {\mathfrak{U}}$, the two expressions for $\bsv_{1,0}$ are equivalent, and as a consequence the same is true for the two expressions for $\tilde{\bsu}$. In view of this, we have 
	$$ \tilde{\bsu}_{0,0} =\bsv_{0,0}^{-1} \left(\alpha \bsv_{0,0}-\gamma \bsu_{0,0} \right) (\bsu_{0,0} - \bsv_{0,0})^{-1} =  (\bsu_{0,0} - \bsv_{0,0})^{-1} \left(\alpha \bsv_{0,0}-\gamma \bsu_{0,0} \right) \bsv_{0,0}^{-1}, $$
	and the equation for $\bsv_{1,0}$ can be written as
	$$ \bsv_{1,0} =  (\alpha  - \gamma ) (\bsu_{0,0}-\bsv_{0,0})^{-1} - \alpha \bsu_{0,0}^{-1}.$$	
	
	The second condition ${\bs{D}}(\bsv_{0,1},\gamma) {\bs{M}}_{0,0} = \tilde{\bs{M}}_{0,0} {\bs{D}}(\bsv_{0,0},\gamma)$ leads to an equation for $\bsv_{0,1}$. More precisely, its $(2,1)$ entry gives 
	$$\bsv_{0,1} = \bsu_{0,1} -\alpha \bsu_{0,0}^{-1} +\gamma \bsv_{0,0}^{-1},$$
	whereas its first row becomes an identity in view of the above  relations and after some computations. We can summarise this derivation in
	\begin{proposition} \label{prop:autoBT-HKdV}
		An auto-B\"acklund transformation for the KdV equation \eqref{eq:nonAbel-KdV} is given by 
		\begin{subequations} \label{eq:autoBT-HKdV}
			\begin{equation} \label{eq:autoBT-HKdV-u}
				\tilde{\bsu}_{0,0} =\bsv_{0,0}^{-1} \left(\alpha \bsv_{0,0}-\gamma \bsu_{0,0} \right) (\bsu_{0,0} - \bsv_{0,0})^{-1} =  (\bsu_{0,0} - \bsv_{0,0})^{-1} \left(\alpha \bsv_{0,0}-\gamma \bsu_{0,0} \right) \bsv_{0,0}^{-1}
			\end{equation}
			where the potential $\bsv$ is determined by the system
			\begin{equation}
				\bsv_{1,0} =  (\alpha  - \gamma ) (\bsu_{0,0}-\bsv_{0,0})^{-1} - \alpha \bsu_{0,0}^{-1}, ~~~ 
				\bsv_{0,1} = \bsu_{0,1} -\alpha \bsu_{0,0}^{-1} +\gamma \bsv_{0,0}^{-1}. \label{eq:autoBT-HKdV-v}
			\end{equation}
		\end{subequations}
	\end{proposition}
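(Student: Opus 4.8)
Since the transformation \eqref{eq:autoBT-HKdV} has been extracted from the requirement that the Darboux matrix ${\bs{D}}(\bsv_{0,0},\gamma)$ leave the Lax pair \eqref{eq:Lax-KdV} covariant, the covariance conditions ${\bs{D}}(\bsv_{1,0},\gamma){\bs{L}}_{0,0} = \tilde{\bs{L}}_{0,0}{\bs{D}}(\bsv_{0,0},\gamma)$ and ${\bs{D}}(\bsv_{0,1},\gamma){\bs{M}}_{0,0} = \tilde{\bs{M}}_{0,0}{\bs{D}}(\bsv_{0,0},\gamma)$ hold identically once $\tilde{\bsu}$, $\bsv_{1,0}$ and $\bsv_{0,1}$ are taken as in \eqref{eq:autoBT-HKdV}. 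To promote this to a genuine auto-B\"acklund transformation, two facts remain: that the potential system \eqref{eq:autoBT-HKdV-v} is compatible, so that $\bsv$ is consistently defined on the whole lattice, and that the transformed field $\tilde{\bsu}$ solves \eqref{eq:nonAbel-KdV} whenever $\bsu$ does. The plan is to prove the first by a direct computation modulo \eqref{eq:nonAbel-KdV}, and the second by a covariance argument at the level of the Lax matrices.

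For the compatibility of \eqref{eq:autoBT-HKdV-v}, I would evaluate $\bsv_{1,1}$ in two ways and demand agreement. Applying ${\cal{T}}$ to the first relation gives $\bsv_{1,1} = (\alpha-\gamma)(\bsu_{0,1}-\bsv_{0,1})^{-1} - \alpha\bsu_{0,1}^{-1}$, while applying ${\cal{S}}$ to the second gives $\bsv_{1,1} = \bsu_{1,1} - \alpha\bsu_{1,0}^{-1} + \gamma\bsv_{1,0}^{-1}$. Eliminating $\bsu_{1,1}$ through \eqref{eq:nonAbel-KdV} and substituting the defining expressions for $\bsv_{0,1}$ and $\bsv_{1,0}$, the required equality reduces to $(\alpha-\gamma)(\alpha\bsu_{0,0}^{-1}-\gamma\bsv_{0,0}^{-1})^{-1} = (\alpha-\gamma)\bsv_{0,0}(\alpha\bsv_{0,0}-\gamma\bsu_{0,0})^{-1}\bsu_{0,0}$; cancelling the central factor $\alpha-\gamma$ and taking inverses turns this into $\alpha\bsu_{0,0}^{-1}-\gamma\bsv_{0,0}^{-1} = \bsu_{0,0}^{-1}(\alpha\bsv_{0,0}-\gamma\bsu_{0,0})\bsv_{0,0}^{-1}$, which holds term by term because $\alpha$ and $\gamma$ are central. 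The noncommutative bookkeeping here, together with the identity \eqref{eq:identity} already used to reconcile the two forms of $\bsv_{1,0}$ and of $\tilde{\bsu}$, is where most of the effort lies.

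For the second part, I would use that \eqref{eq:nonAbel-KdV} is equivalent to the discrete zero-curvature condition ${\bs{L}}_{0,1}{\bs{M}}_{0,0} = {\bs{M}}_{1,0}{\bs{L}}_{0,0}$ for the matrices of \eqref{eq:Lax-KdV} — this is immediate from comparing their $(1,2)$ entries — and likewise that $\tilde{\bsu}$ solves \eqref{eq:nonAbel-KdV} if and only if $\tilde{\bs{L}}_{0,1}\tilde{\bs{M}}_{0,0} = \tilde{\bs{M}}_{1,0}\tilde{\bs{L}}_{0,0}$. Once $\bsv$ is known to be compatible, ${\bs{D}}$ is well-defined at every site with ${\cal{S}}{\bs{D}}_{0,1} = {\cal{T}}{\bs{D}}_{1,0} = {\bs{D}}_{1,1}$, and chaining the two covariance conditions with their shifts yields
$$ \tilde{\bs{L}}_{0,1}\tilde{\bs{M}}_{0,0}{\bs{D}}_{0,0} = {\bs{D}}_{1,1}{\bs{L}}_{0,1}{\bs{M}}_{0,0} = {\bs{D}}_{1,1}{\bs{M}}_{1,0}{\bs{L}}_{0,0} = \tilde{\bs{M}}_{1,0}\tilde{\bs{L}}_{0,0}{\bs{D}}_{0,0}, $$
the middle equality following from the zero-curvature condition for $\bsu$. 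Since ${\bs{D}}_{0,0}$ is invertible for generic $\lambda$, cancelling it on the right gives the zero-curvature condition for $\tilde{\bsu}$, and hence $\tilde{\bsu}$ is a solution of \eqref{eq:nonAbel-KdV}.

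I expect the compatibility computation for $\bsv$ to be the main obstacle: unlike the commutative case one cannot combine the fractions freely, so every manipulation must respect the ordering of factors and lean on the centrality of $\alpha$, $\gamma$, $\lambda$ and on \eqref{eq:identity}. By contrast, the covariance route for $\tilde{\bsu}$ is comparatively painless and, crucially, avoids substituting \eqref{eq:autoBT-HKdV-u} directly into \eqref{eq:nonAbel-KdV}; its only delicate point is the invertibility of ${\bs{D}}_{0,0}$, which fails only on the locus $\lambda^2 = \gamma$ and therefore does not obstruct the conclusion, since the zero-curvature condition is polynomial in $\lambda$.
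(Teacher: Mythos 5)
Your proposal is correct, and it stays inside the same Darboux--covariance framework as the paper, but the division of labour is genuinely different. The paper's proof \emph{is} the derivation preceding the proposition: it solves the covariance conditions entry by entry --- the two rows of ${\bs{D}}(\bsv_{1,0},\gamma){\bs{L}}_{0,0}=\tilde{\bs{L}}_{0,0}{\bs{D}}(\bsv_{0,0},\gamma)$ give two expressions for $\bsv_{1,0}$ and $\tilde{\bsu}_{0,0}$, reconciled via the identity \eqref{eq:identity}; the $(2,1)$ entry of the ${\bs{M}}$-condition gives $\bsv_{0,1}$, and the remaining entries are checked to be identities --- and then treats the auto-B\"acklund property itself as an automatic consequence of the Darboux framework. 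You invert this: you take the entry-wise covariance for granted (citing the derivation, which is legitimate here since that computation appears in the text) and instead make explicit the two logical steps the paper leaves implicit. First, you verify that the two difference equations \eqref{eq:autoBT-HKdV-v} for $\bsv$ are compatible modulo \eqref{eq:nonAbel-KdV}; your reduction is correct --- eliminating $\bsu_{1,1}$ via the KdV equation cancels the $\alpha\bsu_{0,1}^{-1}$ terms, and indeed $\bsu_{0,0}+\gamma\bsv_{1,0}^{-1}=(\alpha-\gamma)\,\bsv_{0,0}\left(\alpha\bsv_{0,0}-\gamma\bsu_{0,0}\right)^{-1}\bsu_{0,0}$, so the equality collapses to the term-by-term identity you state --- and this compatibility is exactly what makes ${\bs{D}}$ well defined on the whole lattice, a point the paper never addresses. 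Second, your zero-curvature chaining, using the shifted covariance relations and cancelling the invertible ${\bs{D}}_{0,0}$ (invertible precisely off $\lambda^{2}=\gamma$, as you say), is a correct and clean way to show that $\tilde{\bsu}$ solves \eqref{eq:nonAbel-KdV}, since the $(1,2)$ entry of ${\bs{L}}_{0,1}{\bs{M}}_{0,0}={\bs{M}}_{1,0}{\bs{L}}_{0,0}$ reproduces the equation; your ordering is also sound, because the chaining needs ${\bs{D}}_{1,1}$ and hence needs the compatibility established first. In short, the paper constructs the formulas and leaves ``solutions map to solutions'' implicit, whereas you prove that claim rigorously but inherit the formulas' consistency from the derivation; a fully self-contained proof would combine the two halves.
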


	\begin{center}
		\begin{figure}[h]
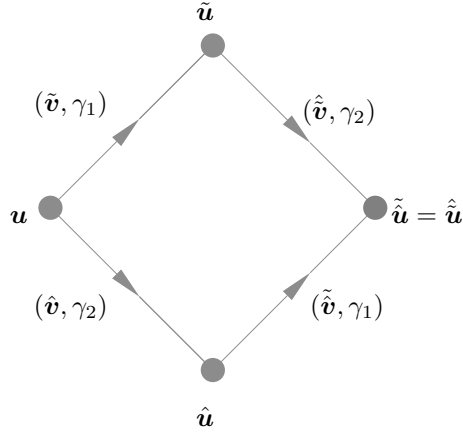

			\centertexdraw{ \setunitscale 0.85
				\linewd 0.002 \arrowheadtype t:F \arrowheadsize l:0.2 w:0.08
				\htext(0 0.5) {\phantom{T}}
				\setgray 0.55 \move (1 1) \avec (1.55 .45) \lvec(2 0) \avec (2.6 .6) \lvec (3 1) 
				\move (1 1) \avec(1.55 1.55 )\lvec(2 2) \avec(2.6 1.4) \lvec(3 1) 
				\move (1 1) \fcir f:0.55 r:0.075 \move (2 0) \fcir f:0.55 r:0.075
				\move (2 2) \fcir f:0.55 r:0.075 \move (3 1) \fcir f:0.5 r:0.075
				\htext(.75 .9) {$\bsu$} \htext(3.1 .9) {$\tilde{\hat{\bsu}}=\hat{\tilde{\bsu}}$}
				\htext(1.9 -.35) {$\hat{\bsu}$} \htext(1.9 2.15) {$\tilde{\bsu}$}
				\htext(0.9 .3) {$(\hat{\bsv},\gamma_2)$} \htext(0.9 1.55) {$(\tilde{\bsv},\gamma_1)$}
				\htext(2.6 .3) {$(\tilde{\hat{\bsv}},\gamma_1)$} \htext(2.55 1.5) {$(\hat{\tilde{\bsv}},\gamma_2)$}
			}
			\caption{Bianchi commuting diagram} \label{fig:Bianchi}
		\end{figure}
	\end{center}
	
	The superposition principle of the auto-B\"acklund transformation \eqref{eq:autoBT-HKdV} follows by the permutation of four Darboux matrices  according to the Bianchi commuting diagram in Figure \ref{fig:Bianchi},
	\begin{equation} \label{eq:Lax-H3B} 
		{\bs{D}}(\hat{\tilde{\bsv}},\gamma_2) {\bs{D}}(\tilde{\bsv},\gamma_1)  = {\bs{D}}(\tilde{\hat{\bsv}},\gamma_1) {\bs{D}}(\hat{\bsv},\gamma_2),
	\end{equation}
	leading to 
	\begin{equation} \label{eq:pot-v-sp}
		\hat{\tilde{\bsv}} = \tilde{\bsv}^{-1}\left(\gamma_1 \hat{\bsv} -  \gamma_2 \tilde{\bsv} \right) \left(\tilde{\bsv} -  \hat{\bsv} \right)^{-1}, ~~ \tilde{\hat{\bsv}} = \hat{\bsv}^{-1}\left(\gamma_1 \hat{\bsv} -  \gamma_2 \tilde{\bsv} \right) \left(\tilde{\bsv} -\hat{\bsv} \right)^{-1}.
	\end{equation}
	The new solution of \eqref{eq:nonAbel-KdV} is given by 
	\begin{subequations} \label{eq:autoBT-HKdV-sp}
		\begin{equation}
			\hat{\tilde{\bsu}} =  \left( \tilde{\bsv} -\hat{\bsv} \right) \left(\gamma_1 \hat{\bsv}  -  \gamma_2 \tilde{\bsv}  \right)^{-1}  \tilde{\bsv}  (\tilde{\bsv}-\hat{\bsv})^{-1} \hat{\bsv} {\bs{A}} \bsu {\bs{R}}^{-1} \left( \tilde{\bsv} -\hat{\bsv} \right),
		\end{equation}
		where
		\begin{eqnarray}
			{\bs{A}} &:=& \gamma_1 (\alpha -\gamma_2) \tilde{\bsv}^{-1}+\gamma_2 (\gamma_1-\alpha) \hat{\bsv}^{-1} + \alpha (\gamma_2-\gamma_1)  \bsu^{-1},\\
			{\bs{R}} &:=&  (\alpha-\gamma_2) \tilde{\bsv} + (\gamma_1-\alpha) \hat{\bsv} + (\gamma_2-\gamma_1) \bsu,
		\end{eqnarray}
	\end{subequations}
	and its derivation is given in the Appendix. It can be readily verified that superposition principle formula \eqref{eq:autoBT-HKdV-sp} is invariant under the interchanges $\tilde{\bsv} \leftrightarrow \hat{\bsv}$ and $\gamma_1 \leftrightarrow \gamma_2$. Assuming all variables commute, the transformation \eqref{eq:autoBT-HKdV} and its superposition \eqref{eq:autoBT-HKdV-sp} reduce to the auto-Bäcklund transformation of the commutative KdV equation and its corresponding superposition principle, respectively.
	
	\begin{remark}
		Equation \eqref{eq:Lax-H3B} is the Lax representation of a Yang-Baxter map. To make contact with the standard notation of Yang-Baxter maps, let us set
		$$\bsy = {\hat{\tilde{\bsv}}}^{-1}, ~~ \bsx = {\tilde{\bsv}}^{-1}, ~~ {\bsy^\prime} = \hat{\bsv}^{-1}, ~~ {\bsx^\prime} = {\tilde{\hat{\bsv}}}^{-1}.$$
		In this notation, equation \eqref{eq:Lax-H3B} becomes 
		$$ 	{\bs{D}}(\bsy^{-1},\gamma_2) {\bs{D}}(\bsx^{-1},\gamma_1)  = {\bs{D}}({\bsx^\prime}^{-1},\gamma_1) {\bs{D}}({\bsy^\prime}^{-1},\gamma_2),$$
		and yields the map
		\begin{subequations}
			\begin{eqnarray}
				\bsx^\prime &=& \bsy \left(1 + \gamma_2 \bsx \bsy \right)  \left(1 + \gamma_1 \bsx \bsy \right)^{-1} =   \left(1 + \gamma_2 \bsy \bsx \right) \left(1 + \gamma_1 \bsy \bsx \right)^{-1}  \bsy \\
				\bsy^\prime &=&  \left(1 + \gamma_1 \bsx \bsy \right)  \left(1 + \gamma_2 \bsx \bsy \right)^{-1} \bsx  =   \bsx   \left(1 + \gamma_1 \bsy \bsx \right) \left(1 + \gamma_2 \bsy \bsx \right)^{-1},
			\end{eqnarray}
		\end{subequations}
		which is the non-commutative Yang--Baxter $F_{III}$ map given in \cite{Dol}. Finally, using the two equivalent forms of the map, one can easily show that ${\bs{I}} = \bsx \bsy + \bsy \bsx$ is an invariant of the map. 
	\end{remark}

	\section{Conclusions} \label{sec:con}
	
	We presented a method to find generalised symmetries of noncommutative difference equations by employing their Lax pair using the discrete potential KdV equation \eqref{eq:nonAbel-pKdV} as an illustrative example. This approach was also used to find the symmetries of the discrete KdV equation in Section \ref{sec:KdV}, and can be readily applied to find symmetries for other systems admitting a $2 \times 2$ Lax pair, such as the noncommutative Schr\"odinger system \cite{KRX}
	\begin{equation} \label{eq:NLS}
		{\bs{u}}_{1,0}	- \bsu_{0,1} =  (\alpha-\beta) (\bsu_{0,0} \bsv_{1,1}+1)^{-1} \bsu_{0,0}, ~~~ {\bs{v}}_{1,0}	- \bsv_{0,1} = (\beta-\alpha) \bsv_{1,1} (\bsu_{0,0} \bsv_{1,1}+1)^{-1}.
	\end{equation}
	Indeed, this system admits the Lax pair
	\begin{equation} \label{eq:Lax-NLS}
		{\bs{\Psi}}_{1,0} = \begin{pmatrix}
			\lambda + \alpha + \bsu_{0,0} \bsv_{1,0} & \bsu_{0,0} \\ \bsv_{1,0} & 1
		\end{pmatrix}{\bs{\Psi}}_{0,0} , ~~~ 	{\bs{\Psi}}_{0,1} = \begin{pmatrix}
			\lambda + \beta + \bsu_{0,0} \bsv_{0,1} & \bsu_{0,0} \\ \bsv_{0,1} & 1
		\end{pmatrix}{\bs{\Psi}}_{0,0},
	\end{equation}
	and it can be shown that
	$$ 	\frac{{\rm{d}} {\bs{\Psi}}_{0,0}}{{\rm{d}} t_1} = {\bs{A}}_{0,0} {\bs{\Psi}}_{0,0}, ~~~ \frac{{\rm{d}} {\bs{\Psi}}_{0,0}}{{\rm{d}} x} =n {\bs{A}}_{0,0} {\bs{\Psi}}_{0,0} , ~~~ \frac{{\rm{d}} \alpha}{{\rm{d}} x} = -1,$$
	where
	$$ {\bs{A}}_{0,0} = \frac{1}{\alpha+\lambda} \begin{pmatrix}
		\bsf_{0,0}  & - \bsf_{0,0} \bsu_{-1,0} \\ 
		-\bsv_{1,0} \bsf_{0,0} &  \bsv_{1,0} \bsf_{0,0} \bsu_{-1,0}
	\end{pmatrix} ~~ {\text{with}} ~~\bsf_{0,0} = (\bsu_{-1,0} \bsv_{1,0}+1)^{-1}, $$
	are consistent with the first equation in \eqref{eq:Lax-NLS} provided that
	$$ 	\frac{{\rm{d}} {\bs{u}}_{0,0}}{{\rm{d}} t_1} = \bsf_{0,0} \bsu_{-1,0} , ~~~  \frac{{\rm{d}} {\bs{v}}_{0,0}}{{\rm{d}} t_1} = -\bsv_{1,0} \bsf_{0,0}, $$
	and
	$$ 	\frac{{\rm{d}} {\bs{u}}_{0,0}}{{\rm{d}} x} = n \bsf_{0,0} \bsu_{-1,0} , ~~~ 	\frac{{\rm{d}} {\bs{v}}_{0,0}}{{\rm{d}} x} = -n \bsv_{1,0} \bsf_{0,0}, ~~~ \frac{{\rm{d}} \alpha}{{\rm{d}} x} = -1, $$
	respectively. The above two lattices generate the lowest order generalised symmetries of system \eqref{eq:NLS} in the first direction. Using arguments about the invariance of \eqref{eq:NLS} under the interchange of indices and parameters, one could find the symmetries of \eqref{eq:NLS} in the other direction.
	
	The Lax pair and Darboux transformation also provide a means to derive an auto-B\"acklund transformation for the equation under consideration. Such a transformation may depend solely on the variables $\bsu$ and $\tilde{\bsu}$, as is the case for the discrete potential KdV equation and its transformation \eqref{eq:BT-dpKdV}, as well as the corresponding transformation of system \eqref{eq:NLS} discussed in \cite{KRX}. However, an auto-B\"acklund transformation may also depend on an auxiliary potential as in \eqref{eq:autoBT-HKdV}. In such cases, its superposition principle leads to a Yang--Baxter map rather than a quad system, as demonstrated in Section \ref{sec:KdV}.
	
	We also demonstrated how generalised symmetries can be employed to reduce a noncommutative difference equation to a corresponding map, which may be related to a discrete Painlevé equation. A key distinction between the commutative and noncommutative cases lies in the dimension of the reduced map: in the noncommutative setting, this dimension may be higher due to the increased number of invariants that must be introduced as we explained in Section \ref{sec:KdV}. Additionally, we presented a systematic derivation of a system of noncommutative partial differential equations arising from reductions of equation \eqref{eq:nonAbel-pKdV} which generalise very well-known systems such as the Ernst equation and the Neugebauer--Kramer involution.
	
	In this work, we have focused exclusively on systems admitting a $2 \times 2$ Lax pair. It would be of interest to extend these considerations and derivations to equations that admit higher-order Lax pairs. One such equation is
	$$\bsu_{1,1} \left(\bsu_{1,0}+ \bsu_{0,1}\right) \bsu_{0,0} + 1 = 0,$$
	whose commutative counterpart was derived in \cite{MX} in the context of second-order integrability conditions. The associated Lax pair involves $3 \times 3$ matrices and is given by
	$${\bs{\Psi}}_{1,0} = \begin{pmatrix}
		0 &  \bsu_{1,0}^{-1} & \lambda \\ \lambda & 0 & \bsu_{0,0}^{-1} \\ -2 \bsu_{1,0} \bsu_{0,0} & \lambda & 0 
	\end{pmatrix}{\bs{\Psi}}_{0,0} , ~~~ 	{\bs{\Psi}}_{0,1} = \begin{pmatrix}
		0 &  -\bsu_{0,1}^{-1} & \lambda \\ \lambda & 0 & -\bsu_{0,0}^{-1} \\ 2 \bsu_{0,1} \bsu_{0,0} & \lambda & 0 
	\end{pmatrix}{\bs{\Psi}}_{0,0}.$$
	
	Symmetries of a commutative system of difference equation can be derived systematically either from first principles (see, for instance, \cite{FX,TTX2}) or via the theory of integrability conditions (see, for example, \cite{BX,LY,MWX,MX,X}). The latter has recently been extended to the study of noncommutative differential-difference equations in \cite{NW}, where a related classification problem was also addressed. It would be of interest to further develop the theory of integrability conditions for noncommutative difference equations, in analogy with the commutative case \cite{MWX,MX}.

	\section*{Appendix}
	
For completeness, we present in this appendix the derivation of system \eqref{eq:rpde}, given in the context of continuous symmetric reductions discussed in Section \ref{sec:rpde}; the proofs of two propositions stated in Section \ref{sec:KdV}; and the derivation of the superposition principle \eqref{eq:autoBT-HKdV-sp}.
	
	\subsubsection*{Derivation of system \eqref{eq:rpde}}
	
	We want to  eliminate variables $\bsu_{1,1}$, $\bsu_{-1,0}$ and $\bsu_{0,-1}$, and derive a system for $\bsu_{0,0}$, $\bsu_{1,0}$ and $\bsu_{0,1}$, using equations \eqref{eq:nonAbel-pKdV}, \eqref{eq:con-const} and their shifts. We start by shifting in $m$ the first equation in \eqref{eq:con-const} and using relation \eqref{eq:p-subs-b}. 
	\begin{eqnarray*}
		\frac{\partial {\bs{u}}_{0,1}}{\partial \alpha} &=&  -n \left({\bs{u}}_{1,1} -{\bs{u}}_{-1,1}\right)^{-1} = -\frac{n}{\alpha-\beta} \left({\bs{u}}_{1,0} -{\bs{u}}_{0,1}\right) \left({\bs{u}}_{1,0} -{\bs{u}}_{-1,0}\right)^{-1} \left({\bs{u}}_{-1,0} -{\bs{u}}_{0,1}\right)\\
		&=& -\frac{n}{\alpha-\beta} \left({\bs{u}}_{1,0} -{\bs{u}}_{0,1}\right)\left({\bs{u}}_{1,0} -{\bs{u}}_{-1,0}\right)^{-1}  \left( {\bs{u}}_{-1,0} -  {\bs{u}}_{1,0}+ {\bs{u}}_{1,0} -{\bs{u}}_{0,1} \right) \\
		&=&\frac{1}{\alpha-\beta} \left({\bs{u}}_{1,0} -{\bs{u}}_{0,1}\right) \left(n+ \frac{\partial {\bs{u}}_{0,0}}{\partial \alpha}  \left( {\bs{u}}_{1,0} -{\bs{u}}_{0,1} \right)\right).
	\end{eqnarray*} 
	Working in a similar fashion, we shift in $n$ the second equation in \eqref{eq:con-const} to find that
	\begin{equation} \label{eq:con-u1b}
		\frac{\partial {\bs{u}}_{1,0}}{\partial \beta} =\frac{1}{\alpha-\beta} \left({\bs{u}}_{1,0} -{\bs{u}}_{0,1}\right) \left(m - \frac{\partial {\bs{u}}_{0,0}}{\partial \beta}  \left( {\bs{u}}_{1,0} -{\bs{u}}_{0,1} \right)\right).
	\end{equation}
	Finally, we differentiate the first equation in \eqref{eq:con-const} in $\beta$,
	\begin{equation} \label{eq:con-uab}
		\frac{\partial^2 {\bs{u}}_{0,0}}{\partial \alpha \partial \beta} = n \left({\bs{u}}_{1,0} -{\bs{u}}_{-1,0}\right)^{-1} \left( \frac{\partial {\bs{u}}_{1,0}}{\partial \beta}  - \frac{\partial {\bs{u}}_{-1,0}}{\partial \beta}  \right) \left({\bs{u}}_{1,0} -{\bs{u}}_{-1,0}\right)^{-1},
	\end{equation}
	and use \eqref{eq:con-u1b} along with its backward shift, 
	$$	\frac{\partial {\bs{u}}_{-1,0}}{\partial \beta} =\frac{1}{\alpha-\beta}  \left(m -   \left( {\bs{u}}_{-1,0} -{\bs{u}}_{0,1} \right) \frac{\partial {\bs{u}}_{0,0}}{\partial \beta} \right) \left( {\bs{u}}_{-1,0} -{\bs{u}}_{0,1} \right) , $$
	to replace the derivatives with respect to $\beta$. Setting ${\bs{\Delta}} := {\bs{u}}_{1,0} -{\bs{u}}_{0,1}$ and ${\bs{\delta}} :=  {\bs{u}}_{-1,0} -{\bs{u}}_{0,1}$, we have that
	\begin{eqnarray*}
		\frac{\partial {\bs{u}}_{1,0}}{\partial \beta}  - \frac{\partial {\bs{u}}_{-1,0}}{\partial \beta}  &=& \frac{1}{\alpha-\beta} \left(  m \left({\bs{\Delta}}-{\bs{\delta}}\right) - {\bs{\Delta}} \frac{\partial {\bs{u}}_{0,0}}{\partial \beta}  {\bs{\Delta}}  +
		{\bs{\delta}} \frac{\partial {\bs{u}}_{0,0}}{\partial \beta}  {\bs{\delta}}\right) \\
		&=& \frac{1}{\alpha-\beta} \left(  m \left({\bs{\Delta}}-{\bs{\delta}}\right) - \left({\bs{\Delta}}-{\bs{\delta}}\right) \frac{\partial {\bs{u}}_{0,0}}{\partial \beta}  {\bs{\Delta}} - {\bs{\delta}} \frac{\partial {\bs{u}}_{0,0}}{\partial \beta} \left({\bs{\Delta}}- {\bs{\delta}}\right) \right)  \\
		&=&  \frac{1}{\alpha-\beta} \left(  m \left({\bs{\Delta}}-{\bs{\delta}}\right) - \left({\bs{\Delta}}-{\bs{\delta}}\right) \frac{\partial {\bs{u}}_{0,0}}{\partial \beta}  {\bs{\Delta}} -  \left(  {\bs{u}}_{-1,0} -{\bs{u}}_{1,0} + {\bs{\Delta}}\right)  \frac{\partial {\bs{u}}_{0,0}}{\partial \beta} \left({\bs{\Delta}}- {\bs{\delta}}\right) \right).
	\end{eqnarray*}
	Taking into account the above relation and that ${\bs{\Delta}}- {\bs{\delta}} = \bsu_{1,0}-\bsu_{-1,0}$, equation \eqref{eq:con-uab} becomes
	\begin{eqnarray*} 
		\frac{\partial^2 {\bs{u}}_{0,0}}{\partial \alpha \partial \beta} = \frac{1}{\alpha-\beta} \left(  \frac{\partial {\bs{u}}_{0,0}}{\partial \alpha} \left({\bs{u}}_{1,0} -{\bs{u}}_{0,1}\right)  \frac{\partial {\bs{u}}_{0,0}}{\partial \beta} + \frac{\partial {\bs{u}}_{0,0}}{\partial \beta} \left({\bs{u}}_{1,0} -{\bs{u}}_{0,1}\right)  \frac{\partial {\bs{u}}_{0,0}}{\partial \alpha} + n \frac{\partial {\bs{u}}_{0,0}}{\partial \beta} -  m \frac{\partial {\bs{u}}_{0,0}}{\partial \alpha} \right).
	\end{eqnarray*}
	
	\subsubsection*{Proof of Proposition \ref{prop:KdV-sym-1}}
	Differentiating \eqref{eq:nonAbel-KdV} with respect to $t_1$, we get
	$$\bsF_{0,0} - \alpha \bsu_{1,0}^{-1} \bsF_{1,0} \bsu_{1,0}^{-1} + \alpha \bsu_{0,1}^{-1} \bsF_{0,1} \bsu_{0,1}^{-1}- \bsF_{1,1} = 0,$$
	where $\bsF_{0,0}=  {\bs{u}}_{0,0} {\bs{f}}_{1,0} -{\bs{f}}_{0,0}  {\bs{u}}_{0,0}$ and $\bsf_{0,0}$ is given in \eqref{eq:Lax-difdif-KdV-2}. Replacing $\bsF$ and its shifts, we arrive at
	$$(\bsu_{0,0}+\alpha \bsu_{1,0}^{-1}) \bsf_{1,0} + \bsf_{1,1} (\alpha \bsu_{0,1}^{-1}+\bsu_{1,1}) = \bsf_{0,0} \bsu_{0,0} + \alpha \bsf_{2,0} \bsu_{1,0}^{-1} +\alpha \bsu_{0,1}^{-1} \bsf_{0,1}+\bsu_{1,1} \bsf_{2,1}.$$
	Since $\bsu_{0,0}+\alpha \bsu_{1,0}^{-1} = \bsu_{1,0}^{-1} \bsf_{1,0}^{-1}$ and 
	$\alpha \bsu_{0,1}^{-1}+\bsu_{1,1} = \bsf_{1,1}^{-1} \bsu_{0,1}^{-1}$, the above relation becomes
	\begin{equation} \label{eq:rel1}
		\bsu_{1,0}^{-1}+ \bsu_{0,1}^{-1}= \bsf_{0,0} \bsu_{0,0} + \alpha \bsf_{2,0} \bsu_{1,0}^{-1} +\alpha \bsu_{0,1}^{-1} \bsf_{0,1}+\bsu_{1,1} \bsf_{2,1}.
	\end{equation}
	From the forward and backward shifts of \eqref{eq:nonAbel-KdV} we find that
	\begin{subequations} \label{eq:subs}
		\begin{equation}
			{\bs{u}}_{2,0}^{-1} ({\bs{u}}_{2,0} {\bs{u}}_{1,0} + \alpha) = (\alpha + \bsu_{2,1} \bsu_{1,1}){\bs{u}}_{1,1}^{-1} ~ \Rightarrow ~ \bsf_{2,0} \bsu_{2,0} = \bsu_{1,1} \bsf_{2,1},
		\end{equation}
		and
		\begin{equation}
			{\bs{u}}_{0,0}^{-1} ({\bs{u}}_{0,0} {\bs{u}}_{-1,0} + \alpha) = (\alpha + \bsu_{0,1} \bsu_{-1,1}){\bs{u}}_{-1,1}^{-1}  ~ \Rightarrow ~ \bsf_{0,0} \bsu_{0,0} = \bsu_{-1,1} \bsf_{0,1},
		\end{equation}
	\end{subequations}
	respectively. In view of \eqref{eq:subs}, equation \eqref{eq:rel1} becomes
	\begin{eqnarray*} 
		\bsu_{1,0}^{-1}+ \bsu_{0,1}^{-1} &=&  \bsu_{-1,1} \bsf_{0,1} + \alpha \bsf_{2,0} \bsu_{1,0}^{-1} +\alpha \bsu_{0,1}^{-1}  \bsf_{0,1} + \bsf_{2,0} \bsu_{2,0} \nonumber \\
		&=& \bsf_{2,0} (\bsu_{2,0} \bsu_{1,0} + \alpha) \bsu_{1,0}^{-1} + \bsu_{0,1}^{-1}(  \bsu_{0,1}   \bsu_{-1,1} + \alpha) \bsf_{0,1}\nonumber \\
		&=& \bsu_{1,0}^{-1}+ \bsu_{0,1}^{-1}.
	\end{eqnarray*}
	
	To show that the second equation in \eqref{eq:difdif-KdV-1} is another symmetry of \eqref{eq:nonAbel-KdV}, first we rewrite the former as
	\begin{equation}\label{eq:mast-sym-KdV-v2}
		\frac{{\rm{d}} {\bs{u}}_{0,0}}{{\rm{d}} x} =  n \frac{{\rm{d}} {\bs{u}}_{0,0}}{{\rm{d}} t_1}  + {\bs{f}}_{0,0}  {\bs{u}}_{0,0}, ~~~ 	\frac{{\rm{d}} \alpha}{{\rm{d}} x} = 1,
	\end{equation}
	and then differentiate the latter with respect to $x$. This leads to 
	$$ n \frac{{\rm{d}} {\bs{Q}}}{{\rm{d}} t_1} + \bsf_{0,0} \bsu_{0,0} + \bsu_{1,0}^{-1} - \alpha \bsu_{1,0}^{-1}\left(\frac{{\rm{d}} \bsu_{1,0}}{{\rm{d}} t_1} + \bsf_{1,0} \bsu_{1,0}\right) \bsu_{1,0}^{-1} - \bsu_{0,1}^{-1} + \alpha \bsu_{0,1}^{-1} \bsf_{0,1} - \frac{{\rm{d}} \bsu_{1,1}}{{\rm{d}} t_1} - \bsf_{1,1} \bsu_{1,1} = 0, $$
	where ${\bs{Q}}$ denotes the left hand side of \eqref{eq:nonAbel-KdV}. After taking into account the KdV equation and that $ \frac{{\rm{d}} {\bs{Q}}}{{\rm{d}} t_1} =0$, we end up with
	$$  \bsf_{0,0} \bsu_{0,0} + \bsu_{1,0}^{-1} - \alpha \bsf_{2,0} \bsu_{1,0}^{-1} - \bsu_{0,1}^{-1} + \alpha \bsu_{0,1}^{-1} \bsf_{0,1}  - \bsu_{1,1} \bsf_{2,1} = 0.$$
	Finally, we substitute $\bsf_{0,0}$ and $\bsf_{2,1}$ using relations \eqref{eq:subs} and the above relation becomes an identity.

	\subsubsection*{Proof of Proposition \ref{prop:Miura-KdV}}
	
	We rewrite \eqref{eq:Miura} as $\bsv_{0,0}^{-1} = \bsu_{1,0} + \alpha \bsu_{0,0}^{-1}$ and then we differentiate it in $t_1$ using the first symmetry in \eqref{eq:difdif-KdV-1} and its shift.
	\begin{eqnarray*}
		-\bsv_{0,0}^{-1} \frac{{\rm{d}} \bsv_{0,0}}{{\rm{d}} t_1} \bsv_{0,0}^{-1} &=&  \frac{{\rm{d}} {\bs{u}}_{1,0}}{{\rm{d}} t_1} - \alpha \bsu_{0,0}^{-1} \frac{{\rm{d}} {\bs{u}}_{0,0}}{{\rm{d}} t_1} \bsu_{0,0}^{-1}\\
		&=& \bsu_{1,0} \bsf_{2,0} - \bsf_{1,0} \bsu_{1,0} - \alpha \bsu_{0,0}^{-1} \left(\bsu_{0,0} \bsf_{1,0} - \bsf_{0,0} \bsu_{0,0}\right)  \bsu_{0,0}^{-1}  \\
		&=& \bsv_{1,0} - \bsf_{1,0} \left(\bsu_{1,0}\bsu_{0,0}+ \alpha\right) \bsu_{0,0}^{-1}   + \alpha \bsu_{0,0}^{-1} \bsf_{0,0} \\
		&=& \bsv_{1,0} -  \bsu_{0,0}^{-1}   + \alpha \bsu_{0,0}^{-1} \bsf_{0,0}\\ 
		&=& \bsv_{1,0} -  \bsu_{0,0}^{-1}\left( \bsf_{0,0}^{-1} - \alpha\right)\bsf_{0,0}\\
		&=& \bsv_{1,0} -  \bsu_{-1,0}\bsf_{0,0} \\
		&=& \bsv_{1,0} -  \bsv_{-1,0},
	\end{eqnarray*}
	where ${\bs{f}}_{0,0} = ( {\bs{u}}_{0,0} {\bs{u}}_{-1,0}+\alpha)^{-1}$. This is modified Volterra lattice \eqref{eq:Volt1} up to the change of $t_1$ to $-t_1$.
	
	Differentiating now $\bsv_{0,0}^{-1} = \bsu_{1,0} + \alpha \bsu_{0,0}^{-1}$ in $x$ and taking into account \eqref{eq:mast-sym-KdV-v2}, we arrive at
	\begin{eqnarray*}
		-\bsv_{0,0}^{-1} \frac{{\rm{d}} \bsv_{0,0}}{{\rm{d}} x} \bsv_{0,0}^{-1} &=& (n+1)  \frac{{\rm{d}} {\bs{u}}_{1,0}}{{\rm{d}} t_1} + \bsf_{1,0} \bsu_{1,0}  - \alpha \bsu_{0,0}^{-1} \left( n \frac{{\rm{d}} {\bs{u}}_{0,0}}{{\rm{d}} t_1} + \bsf_{0,0} \bsu_{0,0} \right)\bsu_{0,0}^{-1} + \bsu_{0,0}^{-1}   \\
		&=& n \left( \frac{{\rm{d}} {\bs{u}}_{1,0}}{{\rm{d}} t_1} - \alpha \bsu_{0,0}^{-1} \frac{{\rm{d}} {\bs{u}}_{0,0}}{{\rm{d}} t_1} \bsu_{0,0}^{-1} \right) + \frac{{\rm{d}} {\bs{u}}_{1,0}}{{\rm{d}} t_1} + \bsf_{1,0} \bsu_{1,0} - \alpha \bsu_{0,0}^{-1} \bsf_{0,0}  + \bsu_{0,0}^{-1}   \\
		&=& n ( \bsv_{1,0} -  \bsv_{-1,0}) +  {\bs{u}}_{1,0} {\bs{f}}_{2,0} -{\bs{f}}_{1,0}  {\bs{u}}_{1,0}  + \bsf_{1,0} \bsu_{1,0} - \alpha \bsu_{0,0}^{-1} \bsf_{0,0}  + \bsu_{0,0}^{-1} \\
		&=&n ( \bsv_{1,0} -  \bsv_{-1,0}) +  {\bs{v}}_{1,0}  - \bsu_{0,0}^{-1} \left( \alpha   - \bsf_{0,0}^{-1}\right) \bsf_{0,0}\\ 
		&=&n ( \bsv_{1,0} -  \bsv_{-1,0}) +  {\bs{v}}_{1,0} + \bsu_{-1,0} \bsf_{0,0}\\ 
		&=&n ( \bsv_{1,0} -  \bsv_{-1,0}) +  {\bs{v}}_{1,0}  + \bsv_{-1,0}\\ 
		&=& (n+1) \bsv_{1,0}  - (n-1) \bsv_{-1,0},
	\end{eqnarray*}
	which is the non-autonomous modified Volterra lattice \eqref{eq:mast-sym-mod-Volt} up to the change of $x$ to $-x$.
	
	\subsubsection*{Derivation of superposition principle \eqref{eq:autoBT-HKdV-sp}}
	
	According to the Bianchi diagram in Figure~\ref{fig:Bianchi}, we begin with a solution $\bsu$ and apply the transformation \eqref{eq:autoBT-HKdV} with parameter $\gamma_1$ to obtain
	\begin{equation} \label{eq:proof-sp-3}
		\tilde{\bsu} = \tilde{\bsv}^{-1} \left( \alpha \tilde{\bsv} - \gamma_1 \bsu \right) \left(\bsu - \tilde{\bsv}\right)^{-1}.
	\end{equation}
	Next, using $\tilde{\bsu}$ as the seed and applying \eqref{eq:autoBT-HKdV} with parameter $\gamma_2$, as indicated in the Bianchi diagram, we derive
	\begin{equation} \label{eq:proof-sp-1}
		\hat{\tilde{\bsu}} =  \hat{\tilde{\bsv}}^{-1} \left( \alpha \hat{\tilde{\bsv}} - \gamma_2 \tilde{\bsu} \right) \left(\tilde{\bsu} - \hat{\tilde{\bsv}} \right)^{-1},
	\end{equation}
	where $\hat{\tilde{\bsv}}$ is defined by \eqref{eq:pot-v-sp}, and can also be expressed using identity \eqref{eq:identity} as
	\begin{equation} \label{eq:proof-sp-2}
	\hat{\tilde{\bsv}} = \left(\tilde{\bsv} -  \hat{\bsv} \right)^{-1}\left(\gamma_1 \hat{\bsv} -  \gamma_2 \tilde{\bsv} \right) \tilde{\bsv}^{-1}.
	\end{equation}
	
	To express $\hat{\tilde{\bsu}}$ explicitly in terms of $\tilde{\bsv}$, $\hat{\bsv}$, and $\bsu$, we substitute the inverse of  \eqref{eq:pot-v-sp} into the first term of \eqref{eq:proof-sp-1}, i.e.,
	$$ \hat{\tilde{\bsv}}^{-1} =  \left(\tilde{\bsv} -  \hat{\bsv} \right) \left(\gamma_1 \hat{\bsv} -  \gamma_2 \tilde{\bsv} \right)^{-1} \tilde{\bsv},$$
	and use \eqref{eq:proof-sp-2} to replace $\hat{\tilde{\bsv}}$ in the remaining terms of \eqref{eq:proof-sp-1}. In particular, we compute
	\begin{eqnarray*}
		\alpha \hat{\tilde{\bsv}} - \gamma_2 \tilde{\bsu} &=& \left(\tilde{\bsv} -  \hat{\bsv} \right)^{-1} \left\{ \alpha \left(\gamma_1 \hat{\bsv} -  \gamma_2 \tilde{\bsv} \right) \tilde{\bsv}^{-1} \left(\bsu - \tilde{\bsv}\right) - \gamma_2   \left(\tilde{\bsv} -  \hat{\bsv} \right) \tilde{\bsv}^{-1} \left( \alpha \tilde{\bsv} - \gamma_1 \bsu \right)  \right\} \left(\bsu - \tilde{\bsv}\right)^{-1} \\
		&=& \left(\tilde{\bsv} -  \hat{\bsv} \right)^{-1} \left\{ \gamma_1 (\alpha-\gamma_2) \hat{\bsv} \tilde{\bsv}^{-1} \bsu -\alpha (\gamma_1-\gamma_2) \hat{\bsv} + \gamma_2 (\gamma_1-\alpha) \bsu \right\} \left(\bsu - \tilde{\bsv}\right)^{-1} \\
		&=&  \left(\tilde{\bsv} -  \hat{\bsv} \right)^{-1} \hat{\bsv}  \left\{ \gamma_1 (\alpha -\gamma_2) \tilde{\bsv}^{-1}+\gamma_2 (\gamma_1-\alpha) \hat{\bsv}^{-1} + \alpha (\gamma_2-\gamma_1)  \bsu^{-1} \right\} \bsu \left(\bsu - \tilde{\bsv}\right)^{-1} ,
	\end{eqnarray*}
	and
	\begin{eqnarray*}
		\tilde{\bsu}-\hat{\tilde{\bsv}} &=& \left(\tilde{\bsv} -  \hat{\bsv} \right)^{-1} \left\{ \left(\tilde{\bsv} -  \hat{\bsv} \right) \tilde{\bsv}^{-1} \left( \alpha \tilde{\bsv} - \gamma_1 \bsu \right) - \left(\gamma_1 \hat{\bsv} -  \gamma_2 \tilde{\bsv} \right) \tilde{\bsv}^{-1} \left(\bsu - \tilde{\bsv}\right) -   \right\} \left(\bsu - \tilde{\bsv}\right)^{-1} \\
		&=& \left(\tilde{\bsv} -  \hat{\bsv} \right)^{-1} \left\{    (\alpha-\gamma_2) \tilde{\bsv} + (\gamma_1-\alpha) \hat{\bsv} + (\gamma_2-\gamma_1) \bsu \right\} \left(\bsu - \tilde{\bsv}\right)^{-1}.
	\end{eqnarray*}
	Substituting these expressions into \eqref{eq:proof-sp-1}, we recover the formula \eqref{eq:autoBT-HKdV-sp}, thus completing the derivation.
	
	If instead we follow the alternative path $\bsu \rightarrow \hat{\bsu} \rightarrow \tilde{\hat{\bsu}}$ in the Bianchi diagram and apply the same procedure, we again arrive at \eqref{eq:autoBT-HKdV-sp}. This confirms that $\hat{\tilde{\bsu}} = \tilde{\hat{\bsu}}$, and hence the commutativity of the Bianchi diagram.


\begin{thebibliography}{99}
		\bibitem{A1} V. E. Adler (2021) Painlev{\'e} type reductions for the non-Abelian Volterra lattices {\emph{J. Phys. A: Math. Theor.}} {\bf{54}} 035204
		
		\bibitem{AL} G.A. Alekseev (2005) Integrability of Generalized (Matrix) Ernst Equations in String Theory {\emph{Theor. Math. Phys.}} {\bf{144}} 1065--1074 https://doi.org/10.1007/s11232-005-0136-4
		
		\bibitem{BX} L. Brady, P. Xenitidis (2025) Systems of difference equations, symmetries and integrability conditions {\it{to appear in Theor. Math. Phys.}} 
		
		\bibitem{Betal} I. Bobrova et al (2024) Non-abelian discrete Toda chains and related lattices {\emph{Physica D: Nonlinear Phenomena}} {\bf{464}} 134200
		
		\bibitem{Bog} O. I. Bogoyavlenskii (1991) Algebraic constructions of integrable dynamical systems-extensions of the Volterra system {\emph{Russian Mathematical Surveys}} {\bf{46}}(3) 1-64
		
		\bibitem{Dol} A. Doliwa (2014) Non-Commutative Rational Yang–Baxter Maps. {\emph{Lett Math Phys}} {\bf{104}} 299–309  
		
		\bibitem{FNC} C. M. Field, F.W. Nijhoff, H.W. Capel (2005) Exact solutions of quantum mappings from the lattice KdV as multi-dimensional operator difference equations  {\emph{J Phys A: Math Gen}} {\bf{38}} 9503--9527
		
		\bibitem{FGR} A. S. Fokas, B. Grammaticos, A. Ramani  (1993) From continuous to discrete Painlev\`e equations {\emph{J. Math. Anal. Appl.}} {\bf{180}}(2) 342--360
		
		\bibitem{FX} A. Fordy, P. Xenitidis P (2020) Symmetries of ${\mathbb{Z}}_{N}$ graded discrete integrable systems {\it J Phys A: Math Theor} {\bf{53}} 235201 (30pp)
		
		\bibitem{HJN} J. Hietarinta, N. Joshi, F. W. Nijhoff (2016) {\emph{Discrete Systems and Integrability}}. Cambridge Texts in Applied Mathematics, Cambridge University Press. doi:10.1017/CBO9781107337411
		
		\bibitem{KRX} S. Konstantinou-Rizos, P. Xenitidis (2025) Integrable discretisations of noncommutative NLS equation  {\it{in preparation}}	
		
		\bibitem{LY} D. Levi, R.I. Yamilov (2009) The generalized symmetry method for discrete equations {\emph{J. Phys. A}} {\bf{42}} 454012
		
		\bibitem{MWX} A.V. Mikhailov, J-P. Wang, P. Xenitidis (2011) Recursion operators, conservation laws and integrability conditions for difference equations {\emph{Theor. Math. Phys.}} {\bf{167}} 421--443
		
		\bibitem{MX} A. V. Mikhailov, P. Xenitidis (2013) Second order integrability conditions for difference equations. An integrable equation {\emph{Lett. Math. Phys.}} {\bf{104}} 431--450
		
		\bibitem{NHJ} F. Nijhoff, A. Hone, N. Joshi (2000) On a Schwarzian PDE associated with the KdV hierarchy. {\emph{Phys. Lett. A}} {\bf{267}} 147
		
		\bibitem{NW} V. Novikov, J.P. Wang (2025) Integrability of Nonabelian Differential--Difference Equations: The Symmetry Approach {\emph{Commun. Math. Phys.}} {\bf{406}} 11
		
		\bibitem{WS} W. K. Schief (2000) The Painlev\`{e} III, V and VI transcendents as solutions of the Einstein--Weyl equations  {\emph{Phys. Lett. A}} {\bf{267}} 265--272
		
		\bibitem{TTX1} A. Tongas, D. Tsoubelis. P. Xenitidis (2001) Integrability aspects of a Schwarzian PDE  {\it Phys. Lett. A} {\bf 284} 266--274
		
		\bibitem{TTX} A. Tongas, D. Tsoubelis. P. Xenitidis (2001) A family of integrable nonlinear equations of the hyperbolic type {\emph{J. Math. Phys.}} {\bf{42}} 5762-84
		
		\bibitem{TTX2} A. Tongas, D. Tsoubelis. P. Xenitidis (2007) Affine linear and $\rm{D}_4$ symmetric lattice equations: symmetry analysis and reductions  {\emph{J. Phys. A: Math. Theor.}} {\bf 40} 13353--13384
		
		\bibitem{TX} D. Tsoubelis, P. Xenitidis (2009) Continuous symmetric reductions of the Adler-Bobenko-Suris equations {\emph{J. Phys. A: Math. Theor.}} {\bf{42}} 165203
		
		\bibitem{X} P. Xenitidis (2018) Determining the symmetries of difference equations {\emph{Proc. R. Soc. A}} {\bf{474}}: 20180340
		
	\end{thebibliography}
\end{document}